\newcommand{\ignore}[1]{}
\def\squareforqed{\hbox{\rlap{$\sqcap$}$\sqcup$}}
\def\qed{\ifmmode\squareforqed\else{\unskip\nobreak\hfil
\penalty50\hskip1em\null\nobreak\hfil\squareforqed
\parfillskip=0pt\finalhyphendemerits=0\endgraf}\fi}
\newtheorem{theorem}{Theorem}[section]
\newtheorem{lemma}[theorem]{Lemma}
\newtheorem{corollary}[theorem]{Corollary}
\newtheorem{definition}[theorem]{Definition}
\renewenvironment{proof}[1][Default]
{\ifthenelse{\equal{Default}{#1}}{\noindent\textit{Proof. }}{\noindent\textit{Proof (#1). }}}
{}
\newcommand{\flb}[1]{\noindent \textbf{#1}}
\newcommand{\prob}[1]{{\sc #1}}
\newcommand{\alg}[1]{\mathcal{#1}} 
\newcommand{\cc}[1]{\textnormal{\textbf{#1}}} 
\newcommand{\etal}{~et~al.}
\newcommand{\lattleq}[0]{\sqsubseteq}
\newcommand{\lattl}[0]{\sqsubset}
\newcommand{\lub}[0]{\sqcup}
\newcommand{\glb}[0]{\sqcap}
\newcommand{\biglub}[0]{\bigsqcup}
\newcommand{\maxarg}[0]{\mathop{\mbox{\rm max arg}}}
\def\tup#1{\mathchoice{\mbox{\boldmath$\displaystyle#1$}}
{\mbox{\boldmath$\textstyle#1$}}
{\mbox{\boldmath$\scriptstyle#1$}}
{\mbox{\boldmath$\scriptscriptstyle#1$}}}
\newcommand{\arr}[1]{\ar@{-}[#1] |-{\object@{>}} }
\newcommand{\uarr}[1]{\ar@{-}[#1]}
\title{On the Complexity of Submodular Function Minimisation on Diamonds}
\author{Fredrik Kuivinen\footnote{Department of Computer and Information Science,
Link\"{o}pings Universitet, SE-581 83, Link\"{o}ping, Sweden. E-mail:
freku@ida.liu.se}}
\begin{document}
\maketitle

\begin{abstract}
Let $(L; \glb, \lub)$ be a finite lattice and let $n$ be a positive
integer. A function $f : L^n \rightarrow \mathbb{R}$ is said to be
\emph{submodular} if $f(\tup{a} \glb \tup{b}) + f(\tup{a} \lub
\tup{b}) \leq f(\tup{a}) + f(\tup{b})$ for all $\tup{a}, \tup{b} \in
L^n$. In this paper we study submodular functions when $L$ is a
\emph{diamond}. Given oracle access to $f$ we are interested in
finding $\tup{x} \in L^n$ such that $f(\tup{x}) = \min_{\tup{y} \in
L^n} f(\tup{y})$ as efficiently as possible.
We establish
\begin{itemize}
  \item a min--max theorem, which states that the minimum of the
  submodular function is equal to the maximum of a certain function
  defined over a certain polyhedron; and

  \item a good characterisation of the minimisation problem, i.e., we
    show that given an oracle for computing a submodular $f : L^n
    \rightarrow \mathbb{Z}$ and an integer $m$ such that
    $\min_{\tup{x} \in L^n} f(\tup{x}) = m$, there is a proof of this
    fact which can be verified in time polynomial in $n$ and
    $\max_{\tup{t} \in L^n} \log |f(\tup{t})|$; and

  \item a pseudo-polynomial time algorithm for the minimisation
  problem, i.e., given an oracle for computing a submodular $f : L^n
  \rightarrow \mathbb{Z}$ one can find $\min_{\tup{t} \in L^n}
  f(\tup{t})$ in time bounded by a polynomial in $n$ and
  $\max_{\tup{t} \in L^n} |f(\tup{t})|$.
\end{itemize}
\ignore{We also show that for any modular lattice $L$ and sublattice $S$ of
$L$ that if it is possible to minimise submodular functions over $L^n$
in time polynomial in $n$, then it is possible to minimise submodular
functions over $S^n$ in time polynomial in $n$.}
\ignore{This work partially answers open questions posed by (Cohen,
Cooper, Jeavons and Krokhin, 2005)\ignore{\cite{supmod-maxcsp}} and
(Iwata, Fleischer and Fujishige,
2001)\ignore{\cite{submod-min-P-iwata}}.}
\end{abstract}

\section{Introduction}
Let $V$ be a finite set and let $f$ be a function from $2^V$ to
$\mathbb{R}$. The function $f$ is said to be \emph{submodular} if
$f(A \cup B) + f(A \cap B) \leq f(A) + f(B)$
for all $A, B \subseteq V$. In the sequel we will call such functions
\emph{submodular set functions}. Submodular set functions shows up in
various fields including combinatorial optimisation, graph
theory~\cite{submod-graph-theory}, game theory~\cite{cores-convex-games}, information
theory~\cite{capacity-region-ma} and statistical
physics~\cite{optimal-coop-potts}. Examples include the cut function
of graphs and the rank function of matroids. There is also a
connection between submodular function minimisation and convex
optimisation. In particular, submodularity can be seen as a discrete
analog of convexity~\cite{submod-prog,submod-conv}. We refer the
reader to~\cite{sfm-book,sfm-survey,sfm-survey-mccormick} for a
general background on submodular set functions.

Given a submodular set function $f : 2^V \rightarrow \mathbb{R}$ there
are several algorithms for finding minimisers of $f$, i.e., finding a
subset $X \subseteq V$ such that $f(X) = \min_{Y \subseteq V} f(Y)$,
in time polynomial in $|V|$. The first algorithm for finding such
minimisers in polynomial time is due to
Gr\"otschel\etal~\cite{ellips-combopt}. However, this algorithm is
based on the Ellipsoid algorithm and hence its usefulness in practise
is limited. Almost two decades later two combinatorial algorithms were
found independently by Schrijver~\cite{submod-min-P-alex} and
Iwata\etal~\cite{submod-min-P-iwata}. More recently the running times
have been improved. The currently fastest strongly polynomial time algorithm
is due to Orlin~\cite{james-faster-submod} and the fastest weakly polynomial
time algorithm is due to Iwata~\cite{faster-submod}. In these algorithms the
submodular set function is given by a value-giving oracle for $f$
(i.e., presented with a subset $X \subseteq V$ the oracle computes
$f(X)$).

In this paper we investigate a more general notion of
submodularity. Recall that a \emph{lattice} is a partially ordered set
in which each pair of elements have a least upper bound (join, $\lub$)
and a greatest lower bound (meet, $\glb$). Given a finite lattice $\alg{L}$
(all lattices in this paper are finite) and a positive integer $n$ we
can construct the \emph{product lattice} $\alg{L}^n$. Meet and join for
$\alg{L}^n$ are then defined coordinate-wise by meet and join in $\alg{L}$. We say
that a function $h : \alg{L}^n \rightarrow \mathbb{R}$ is submodular if
$h(\tup{a} \glb \tup{b}) + h(\tup{a} \lub \tup{b}) \leq h(\tup{a}) +
h(\tup{b})$ for all $\tup{a}, \tup{b} \in \alg{L}^n$. Note that the subsets
of $V$ can be seen as a lattice with union as join and intersection as
meet (this lattice is a product of the two element lattice). Hence,
this notion of submodularity is a generalisation of submodular set
functions. For a fixed finite lattice $\alg{L}$ we are interested in the
submodular function minimisation (SFM) problem:

\medskip
\noindent \textsc{Instance:} An integer $n \geq 1$ and a submodular function $f$ on $\alg{L}^n$.

\noindent \textsc{Goal:} Find $\tup{x} \in \alg{L}^n$ such that $f(\tup{x}) = \min_{\tup{y} \in \alg{L}^n} f(\tup{y})$.
\medskip

Following~\cite{sfm-prod} we denote this problem by SFM$(\alg{L})$. SFM$(\alg{L})$
is said to be \emph{oracle-tractable} if the problem can be solved in
time polynomial in $n$ (provided that we have access to a value-giving
oracle for $f$ and that we can assume that $f$ is submodular, i.e., it is
a promise problem). This definition
naturally leads to the following question: is SFM$(\alg{L})$
oracle-tractable for all finite lattices $\alg{L}$? (This question was, as
far as we know, first asked by Cohen\etal~\cite{supmod-maxcsp}.)

Schrijver~\cite{submod-min-P-alex} showed that given a sublattice $S$
of $2^V$ (i.e., $S \subseteq 2^V$ and for any $X, Y \in S$ we have $X
\cap Y, X \cup Y \in S$) and submodular function $f : S \rightarrow
\mathbb{R}$ a minimiser of $f$ can be found in time polynomial in
$n$. In particular, this implies that for any distributive lattice $\alg{L}$
the problem SFM$(\alg{L})$ is oracle-tractable.
Krokhin and Larose~\cite{sfm-prod} showed that certain constructions
on lattices preserve oracle-tractability of SFM. In particular, they
showed that if $X$ is a class of lattices such that SFM$(\alg{L})$ is
oracle-tractable for every $\alg{L} \in X$, then so is SFM$(\alg{L}')$ where $\alg{L}'$
is a \emph{homomorphic image} of some lattice in $X$, a \emph{direct
product} of some lattices in $X$, or contained in the \emph{Mal'tsev
product} $X \circ X$. We will not define these constructions here and
refer the reader to~\cite{sfm-prod} instead.

A lattice $\alg{L}$ is a \emph{diamond} if the elements of the lattice form
a disjoint union of $\{0_{\alg{L}}, 1_{\alg{L}}\}$ and $A$, for some finite set $A$
such that $|A| \geq 3$. Here $0_{\alg{L}}$ is the bottom element of $\alg{L}$, and
$1_{\alg{L}}$ is the top element of $\alg{L}$, and all elements in $A$ (called the \emph{atoms}) are
incomparable to each other. See Figure~\ref{diam:fig:diam} for a diagram of the five element diamond.
We want to emphasise that diamonds have a
different structure compared to the lattices defined by union and
intersection. In particular, diamonds are not \emph{distributive},
that is they do \emph{not} satisfy $x \glb (y \lub z) = (x \glb y)
\lub (x \glb z)$ for all $x, y, z \in \alg{L}$. We will denote the diamond
with $k$ atoms by $\alg{M}_k$. In Sections~\ref{diam:sec:minmax}, \ref{diam:sec:good}
and~\ref{diam:sec:find-min} the complexity of SFM$(\alg{M}_k)$ is
investigated. In the approach taken in this paper the difficult case
is $k = 3$---the proofs for the $k=3$ case generalises
straightforwardly to an arbitrary $k$. We note that none of the
diamonds are captured by the combination of the results found
in~\cite{sfm-prod,submod-min-P-alex} (a proof of this fact can be
found in~\cite{sfm-prod}).

\begin{figure}
\[
    \xymatrix{             & 1_\alg{M}        &                       \\
                      \bullet \uarr{ur} & \bullet \uarr{u} & \bullet \uarr{ul}           \\
                                  & 0_\alg{M} \uarr{u} \uarr{ul} \uarr{ur} & }
\]
\caption{The five element diamond.} \label{diam:fig:diam}
\end{figure}
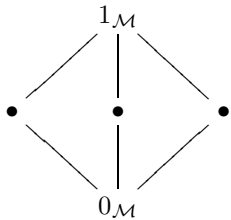

\flb{Results and techniques.}
The first main result in this paper is a min--max theorem for
SFM$(\alg{M}_k)$ which is stated as Theorem~\ref{diam:th:minmax}. This result looks quite similar to Edmonds'
min--max theorem for submodular set functions~\cite{sfm-minmax} (we
present Edmonds' result in Section~\ref{diam:sec:submod-set-func}).  The
key step in the proof of this result is the definition of a certain
polyhedron, which depends on $f$.

The second main result is a \emph{good characterisation} of SFM$(\alg{M}_k)$
(Theorem~\ref{diam:th:coNP}). That is, we prove that given a submodular $f
: \alg{M}_k^n \rightarrow \mathbb{Z}$ and integer $m$ such that
$\min_{\tup{x} \in \alg{L}^n} f(\tup{x}) = m$, there is a proof of this fact
which can be verified in time polynomial in $n$ and $\max_{\tup{y} \in
\alg{L}^n} \log |f(\tup{y})|$ (under the assumption that $f$ is
submodular). \ignore{FIXME: bort? eller formulera om? eller ha kvar?}
This can be seen as placing SFM$(\alg{M}_k)$ in the appropriately modified
variant of \cc{NP} $\cap$ \cc{coNP} (the differences from our setting
to an ordinary optimisation problem is that we are given oracle access
to the function to be minimised and we assume that the given function
is submodular).  The proof of this result makes use of Carath{\'e}odory's
theorem and of the known polynomial-time algorithms for minimising
submodular set functions. We also need our min--max theorem.

The third result is a pseudo-polynomial time algorithm for SFM$(\alg{M}_k)$
(see Section~\ref{diam:sec:find-min}). We show that SFM$(\alg{M}_k)$ can be
solved in time polynomial in $n$ and $\max_{\tup{t} \in \alg{M}_k^n}
|f(\tup{t})|$. The main part of the algorithm consists of a nested
application of the Ellipsoid algorithm. We also need to prove that the
polyhedrons we associate with submodular functions are
$1/2$-integral. An interesting and challenging open problem is to construct
an algorithm with running time
polynomial in $n$ and $\max_{\tup{t} \in \alg{M}_k^n} \log |f(\tup{t})|$.

\ignore{\textbf{FIXME: Describe this result more}}

\ignore{A \emph{sublattice} $S$ of a lattice $L$ is a set $S \subseteq L$ such
that $x, y \in S \Rightarrow x \glb y, x \lub y \in S$. Recall that a
lattice is \emph{modular} if and only if for all $u, v, w \in L$ such
that $u \lattleq w$, we have $u \lub (v \glb w) = (u \lub v) \glb
w$. In particular, the diamonds and any direct product of diamonds are
modular. In Section~\ref{diam:sec:sublatt} we show that if $L$ is a modular
lattice such that SFM$(L)$ is oracle-tractable and $S$ is a sublattice
of $L$, then SFM$(S)$ also is oracle-tractable. This is proved by
modifying a construction due to Schrijver~\cite{submod-min-P-alex}.}

Our results applies to diamonds, however, as mentioned above,
in~\cite{sfm-prod} two constructions on lattices (Mal'tsev products
and homomorphic images) are shown to preserve tractability results for
SFM. By combining these constructions with the results in this paper
one gets tractability results for a much larger class of lattices than
just diamonds.~\footnote{In~\cite{sfm-prod} these constructions are
shown to preserve oracle-tractability and not solvability in
pseudo-polynomial time. However, it is straightforward to adapt the
proofs to the pseudo-polynomial case.} In particular, by the results
in this paper there is a pseudo-polynomial time algorithm for
minimising submodular functions over products of the lattice in
Figure~\ref{diam:fig:diam2}.

\begin{figure}
\centering
\includegraphics[width=3cm]{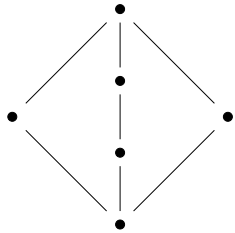}
\caption{A lattice which can be shown to admit a pseudo-polynomial
time algorithm for the submodular function minimisation problem. This
lattice is a Mal'tsev product of a diamond and the two element
lattice. By the results in this paper and the constructions in~\cite{sfm-prod}
this lattice gives a new tractable constraint language for \prob{Max
CSP}.}
\label{diam:fig:diam2}
\end{figure}

\flb{Connections to other problems.}
Minimising submodular functions on certain modular non-distributive
(the diamonds are modular and non-distributive) lattices has
applications to canonical forms of partitioned
matrices~\cite{block-triang-part-matrix,minimax-dm-decomp}. Finding a
polynomial time algorithm for minimising submodular functions on these
lattices was mentioned as an open problem
in~\cite{submod-min-P-iwata}.

The notion of submodular functions over arbitrary finite lattices
plays an important role in the complexity of the maximum
constraint satisfaction problem (\prob{Max CSP}). This connection was
first observed in~\cite{supmod-maxcsp} and in later papers the
connection was explored
further~\cite{maxcsp-fixed-journal,maxcsp-three}.
The connection between submodular function minimisation and \prob{Max
CSP} is that by proving oracle-tractability for new lattices for the
SFM problem implies tractability results (solvability in polynomial
time) for certain restrictions (so called \emph{constraint language
restrictions}) of \prob{Max CSP}.
By constructing algorithms for SFM with running times bounded by a
polynomial in $n$ and $\max_{\tup{t} \in \alg{M}_k^n} |f(\tup{t})|$, as we
do in Section~\ref{diam:sec:find-min}, one gets solvability in polynomial
time for the unweighted variant of \prob{Max CSP} (with appropriate
restrictions).
Providing good characterisations of SFM$(\alg{L})$, as we do in
Section~\ref{diam:sec:good}, implies \cc{coNP} containment results for
\prob{Max CSP} (with appropriate restriction). As \prob{Max CSP} is trivially in \cc{NP} we get
containment in \cc{NP} $\cap$ \cc{coNP} for these restrictions.  We
refer the reader to~\cite{supmod-maxcsp,sfm-prod} for further details
regarding the connection between SFM and \prob{Max CSP}.

In~\cite{sfm-prod} it is shown that the restrictions of \prob{Max CSP}
which one gets from the diamonds can be solved in polynomial
time. This means that the results for the diamonds in this paper does
not directly imply new tractability results for \prob{Max
CSP}. However, as mentioned in the previous section one can combine
the results in this paper with the lattice constructions
in~\cite{sfm-prod} to get tractability results for a larger class of
lattices which implies tractability results for new constraint
language restrictions of \prob{Max CSP}. (We again refer to
Figure~\ref{diam:fig:diam2} for an example of such a lattice.)

There is also a connection between SFM over lattices to the
Valued Constraint Satisfaction Problem
(VCSP). See~\cite{Cohen:etal:ai2006} for more information on VCSP. The
connection is very similar to the connection to \prob{Max CSP},
proving tractability results for new lattices for SFM implies new
tractable restrictions of VCSP. For VCSP there was, before the results
in this paper, no known non-trivial algorithms for the restrictions one
obtains from the diamonds.

We note that Raghavendra~\cite{optallcsp} recently proved almost
optimal results for the \emph{approximability} of \prob{Max CSP} for
constraint language restrictions, assuming that the unique games
conjecture (UGC) holds. However, for the cases which are solvable to
optimality the results in~\cite{optallcsp} gives us polynomial-time
approximation schemes. This should be compared to the connection
described above which gives polynomial time algorithms for some of
these cases.

\flb{Organisation.}  This paper is organised as follows, in
Section~\ref{diam:sec:submod-set-func} we give a short background on
submodular set functions, in Section~\ref{diam:sec:prel} we introduce the
notation we use, in Section~\ref{diam:sec:minmax} we prove our first main
result---the min--max theorem for submodular functions over
diamonds. The good characterisation is given in
Section~\ref{diam:sec:good}. \ignore{In Section~\ref{diam:sec:ellips} we give
some definitions and results which are related to the Ellipsoid
algorithm.} In Section~\ref{diam:sec:find-min} where we give the
pseudo-polynomial time algorithm for the minimisation problem.
\ignore{In Section~\ref{diam:sec:sublatt} we prove that sublattices of
oracle-tractable modular lattices are also oracle-tractable.} Finally,
in Section~\ref{diam:sec:concl} we give some conclusions and open
problems. 

\section{Background on Submodular Set Functions} \label{diam:sec:submod-set-func}
In this section we will give a short background on Edmonds' min--max
theorem for submodular set functions. This result was first proved by
Edmonds in~\cite{sfm-minmax}, but see also the
surveys~\cite{sfm-survey,sfm-survey-mccormick}. Let $V$ be a finite set. For a vector
$\tup{x} \in \mathbb{R}^V$ (i.e., $\tup{x}$ is a function from $V$
into $\mathbb{R}$) and a subset $Y \subseteq V$ define $\tup{x}(Y) =
\sum_{y \in Y} \tup{x}(y)$.  We write $\tup{x} \leq 0$ if $\tup{x}(v)
\leq 0$ for all $v \in V$ and $\tup{x}^-$ for the vector in which
coordinate $v$ has the value $\min \{0, \tup{x}(v) \}$. Let $f$ be a
submodular set function $f : 2^V \rightarrow \mathbb{R}$ such that
$f(\emptyset) = 0$ (this is not really a restriction, given a
submodular function $g$ we can define a new function $g'(X) = g(X) -
g(\emptyset)$, $g'$ satisfies $g'(\emptyset) = 0$ and is
submodular). The \emph{submodular polyhedron} and the \emph{base
polyhedron} defined by
\begin{align}
&P(f) = \{ \tup{x} \in \mathbb{R}^V \mid \forall Y \subseteq V, \tup{x}(Y) \leq f(Y) \}, \text{ and}  \notag \\
&B(f) = \{ \tup{x} \in \mathbb{R}^V \mid \tup{x} \in P(f), \tup{x}(V) = f(V) \} \notag
\end{align}
often play an important role in results related to submodular set
functions.  Edmonds~\cite{sfm-minmax} proved the following min--max
theorem
\begin{align}
\min_{X \subseteq V} f(X) &= \max \{ \tup{x}(V) \mid \tup{x} \in P(f), \tup{x} \leq 0 \} \notag \\
                          &= \max \{ \tup{x}^-(V) \mid \tup{x} \in B(f) \} \label{diam:eq:set-minmax} .
\end{align}
In Section~\ref{diam:sec:minmax} we give an analog
to~\eqref{diam:eq:set-minmax} for submodular functions over
diamonds.

\section{Preliminaries} \label{diam:sec:prel}
For a positive integer $n$, $[n]$ is the set $\{1, 2, \ldots, n\}$.
Given a lattice $(L, \glb, \lub)$ and $x, y \in L$ we write $x
\lattleq y$ if and only if $x \glb y = x$ (and hence $x \lub y =
y$). We write $x \lattl y$ if $x \lattleq y$ and $x \neq y$.  As
mentioned in the introduction, given a positive integer $n$, we can
construct the product lattice $L^n$ from $L$.  The top and bottom
elements of $L^n$ are denoted by $\tup{1}_{L^n}$ and $\tup{0}_{L^n}$,
respectively. We write $x \prec y$ if $x$ is covered by $y$ (that is,
if $x \lattl y$, and there is no $z \in L$ such that $x \lattl z
\lattl y$).

Recall that the diamonds are modular lattices (the rank function
$\rho$ is defined by $\rho(0_\alg{M}) = 0$, $\rho(a) = 1$ for all $a \in A$
and $\rho(1_\alg{M}) = 2$). As direct products of modular lattices also are
modular lattices it follows that direct products of diamonds are
modular lattices.

For a set $X$ we let $\mathbb{R}^{[n] \times X}$ be the set of
functions mapping $[n] \times X$ into $\mathbb{R}$. Such functions
will be called \emph{vectors} and can be seen as vectors indexed by
pairs from $[n] \times X$. For $\tup{x}, \tup{y} \in \mathbb{R}^{[n]
\times X}$ and $\alpha \in \mathbb{R}$ we define $\alpha \tup{x},
\tup{x} + \tup{y}, \tup{x}^- \in \mathbb{R}^{[n] \times X}$ as
$(\alpha \tup{x})(i, x) = \alpha \tup{x}(i, x)$, $(\tup{x} +
\tup{y})(i, x) = \tup{x}(i, x) + \tup{y}(i, x)$, and $\tup{x}^-(i, x)
= \min \{0, \tup{x}(i, x) \}$ for all $i \in [n]$ and $x \in X$,
respectively. If $\tup{x}(i, x) \leq 0$ for all $i \in [n]$ and $x \in
X$ we write $\tup{x} \leq 0$. For $i \in [n]$ we use $\tup{x}(i)$ to
denote the function $x' \in \mathbb{R}^X$ such that $\tup{x}(i,
x) = x'(x)$ for all $x \in X$.

For $i \in [n]$ and $a \in A$ let $\tup{\chi_{i, a}} \in
\mathbb{R}^{[n] \times A}$ be the vector such that $\tup{\chi_{i,
a}}(i, a) = 1$ and $\tup{\chi_{i, a}}(i', a') = 0$ for $(i', a') \neq
(i, a)$.  (So $\tup{\chi_{i,a}}$ is the unit vector for the coordinate
$(i,a)$.) Similarly, we use $\tup{\chi_i}$ to denote the vector
$\sum_{a \in A} \tup{\chi}_{i, a}$.  For a vector $\tup{x} \in
\mathbb{R}^{[n] \times A}$ and tuple $\tup{y} \in \alg{M}^n$ we define
\[
\tup{x}(\tup{y}) = \sum_{i = 1}^n g(\tup{x}(i), \tup{y}(i))
\]
where the function $g : \mathbb{R}^A \times \alg{M} \rightarrow \mathbb{R}$ is
defined by
\[
g(x, y) =
\left\{
\begin{array}{ll}
0                & \textrm{if } y = 0_\alg{M}, \\
x(y)             & \textrm{if } y \in A, \textrm{ and} \\
\max_{a, a' \in A, a \neq a'} x(a) + x(a') & \textrm{otherwise (if $y = 1_\alg{M}$).}
\end{array}
\right.
\]
(This should be compared to how applying a vector to a subset is
defined for submodular set functions, see~\cite{sfm-minmax}.)  For $\tup{x}, \tup{x'}
\in \mathbb{R}^{[n] \times A}$ we denote the usual scalar product by
$\langle \tup{x}, \tup{x}' \rangle$, so
\[
\langle \tup{x}, \tup{x}'\rangle = \sum_{i=1}^n \sum_{x \in A} \tup{x}(i, x) \tup{x}'(i, x) .
\]
Let $f$ be a submodular function on $\alg{M}^n$ such that $f(\tup{0}_{M^n})
\geq 0$. We define $P_M(f)$ and $B_M(f)$ as follows,
\begin{align}
&P_M(f) = \left\{ \tup{x} \in \mathbb{R}^{[n] \times A} \ \Big| \ \forall \tup{y} \in \alg{M}^n, \tup{x}(\tup{y}) \leq f(\tup{y}) \right\}, \text{ and} \notag \\
&B_M(f) = \left\{ \tup{x} \in \mathbb{R}^{[n] \times A} \ \Big| \ \tup{x} \in P_M(f), \tup{x}(\tup{1}_{\alg{M}^n}) = f(\tup{1}_{\alg{M}^n}) \right\} . \notag
\end{align}
Due to the definition of $g$ it is not hard to see that $P_M(f)$ is a
polyhedron. Note that if $\tup{t}$ contains at least one $1_\alg{M}$, then
$\tup{t}$ induce more than one linear inequality. If $\tup{t}$
contains no $1_\alg{M}$, then $\tup{t}$ only induce one linear
inequality. In general, a tuple with $m$ occurrences of $1_\alg{M}$ induces
${|A| \choose 2}^m$ linear inequalities. We use $I(\tup{t})$ to denote
the set of all vectors $\tup{e} \in \mathbb{R}^{[n] \times A}$ such
that $\tup{e}$ represents an inequality induced by $\tup{t}$ (that is,
an inequality of the form $\langle \tup{e}, \tup{x} \rangle \leq
f(\tup{t})$, where $\tup{e} \in I(\tup{t})$). Given a vector $\tup{x}
\in P_M(f)$ we say that a tuple $\tup{t} \in \alg{M}^n$ such that
$\tup{x}(\tup{t}) = f(\tup{t})$ is \emph{$\tup{x}$-tight}. 

We will also need the following definition.
\begin{definition}[Unified Vector for Diamonds] \label{diam:def:unif}
A vector $\tup{x} \in \mathbb{R}^A$ is \emph{unified} if there is an
atom $p \in A$ such that
\begin{itemize}
\item if $x, y \in A, x, y \neq p$, then $\tup{x}(x) = \tup{x}(y)$; and
\item if $a \in A$, then $\tup{x}(p) \geq \tup{x}(a)$.
\end{itemize}
We extend the definition of unified vectors to the vectors in
$\mathbb{R}^{[n] \times A}$ by saying that $\tup{x} \in
\mathbb{R}^{[n] \times A}$ is unified if $x \mapsto \tup{x}(i, x)$ is
unified for each $i \in [n]$.
\end{definition}

If the submodular inequality is strict for all incomparable pair of
elements then we say that the function is \emph{strictly submodular}.

\section{A Min--Max Theorem} \label{diam:sec:minmax}
The main results in this section are Theorem~\ref{diam:th:minmax} and
Theorem~\ref{diam:th:Pf=Bf}. We start by a lemma which shows that $B_M(f)$
is non-empty for any submodular function which maps the bottom of the
lattice to a non-negative value.

\begin{lemma} \label{diam:lem:greedy-Bf}
  Let $f : \alg{M}^n \rightarrow \mathbb{R}$ be submodular such that
  $f(\tup{0}) \geq 0$. There is a vector $\tup{x} \in \mathbb{R}^{[n]
  \times A}$ such that
  \begin{itemize}
    \item $\tup{x}$ is unified; and

    \item $\tup{x}(\tup{v_i}) = f(\tup{v_i})$ for all $i \in [n]$; and

    \item $\tup{x}(\tup{v_i}[i+1 = p_{i+1}]) = f(\tup{v_i}[i+1 =
      p_{i+1}])$ for all $i \in \{0, 1, \ldots, n-1\}$, where for $i
      \in [n]$, $p_{i}$ is the atom in Definition~\ref{diam:def:unif} for
      the vector $x \mapsto \tup{x}(i, x)$.
  \end{itemize}
  Furthermore, if $f$ is integer-valued, then $\tup{x}$ can be chosen
  to be integer-valued.
\end{lemma}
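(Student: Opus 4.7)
The plan is to build $\tup{x}$ greedily along the chain
$\tup{v_0} \prec \tup{v_0}[1 = p_1] \prec \tup{v_1} \prec \tup{v_1}[2 = p_2] \prec \cdots \prec \tup{v_n}$, in direct analogy with Edmonds' greedy algorithm for submodular set functions. For each $i \in [n]$, choose $p_i \in A$ to be an atom that maximises $f(\tup{v_{i-1}}[i = a])$ over $a \in A$. Then set $\tup{x}(1, p_1) = f(\tup{v_0}[1 = p_1])$ and, for $i \geq 2$, $\tup{x}(i, p_i) = f(\tup{v_{i-1}}[i = p_i]) - f(\tup{v_{i-1}})$; on every other atom put $\tup{x}(i, a) = f(\tup{v_i}) - f(\tup{v_{i-1}}[i = p_i])$. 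These numbers are obtained from values of $f$ by integer arithmetic, so the resulting $\tup{x}$ is automatically integer-valued whenever $f$ is.

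Write $\alpha_i = \tup{x}(i, p_i)$ and $\beta_i = \tup{x}(i, a)$ (for $a \neq p_i$). For $\tup{x}$ to be unified with distinguished atom $p_i$ in coordinate $i$, by Definition~\ref{diam:def:unif}, it suffices to verify $\alpha_i \geq \beta_i$. For $i \geq 2$ this rearranges to $2 f(\tup{v_{i-1}}[i = p_i]) \geq f(\tup{v_{i-1}}) + f(\tup{v_i})$. Picking any $a' \in A \setminus \{p_i\}$, the tuples $\tup{v_{i-1}}[i = p_i]$ and $\tup{v_{i-1}}[i = a']$ are incomparable with meet $\tup{v_{i-1}}$ and join $\tup{v_i}$ (since any two distinct atoms of $\alg{M}$ meet to $0_\alg{M}$ and join to $1_\alg{M}$); submodularity followed by the maximality of $p_i$ yields
\[
f(\tup{v_{i-1}}) + f(\tup{v_i}) \leq f(\tup{v_{i-1}}[i = p_i]) + f(\tup{v_{i-1}}[i = a']) \leq 2 f(\tup{v_{i-1}}[i = p_i]).
\]
The case $i = 1$ is handled by the same submodular inequality: it gives $2 f(\tup{v_0}[1 = p_1]) \geq f(\tup{v_0}) + f(\tup{v_1})$, and the hypothesis $f(\tup{0}) \geq 0$ then reduces this to $2 \alpha_1 \geq \alpha_1 + \beta_1$, i.e.\ $\alpha_1 \geq \beta_1$. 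This is the only place where the non-negativity assumption is used.

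Once unifiedness is established the tightness conditions follow by direct evaluation of $g$: $g(\tup{x}(i), 1_\alg{M}) = \alpha_i + \beta_i$, $g(\tup{x}(i), p_i) = \alpha_i$, and $g(\tup{x}(i), 0_\alg{M}) = 0$. Substituting into the definitions of $\tup{x}(\tup{v_i})$ and $\tup{x}(\tup{v_i}[i+1 = p_{i+1}])$ produces telescoping sums whose terms were arranged, by the choice of $\alpha_i$ and $\beta_i$, to account exactly for the $f$-increments along the chain above. The asymmetric definition of $\alpha_1$, in which $f(\tup{v_0})$ is not subtracted, is what makes the telescopes land on $f(\tup{v_i})$ rather than on $f(\tup{v_i}) - f(\tup{v_0})$.

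The main subtlety of the proof is this base-case bookkeeping: a fully uniform recursive definition $\alpha_i = f(\tup{v_{i-1}}[i = p_i]) - f(\tup{v_{i-1}})$ would miss the tightness target at $\tup{v_1}$ by $f(\tup{v_0})$, and after correcting $\alpha_1$ one must still retain $\alpha_1 \geq \beta_1$, which is precisely where $f(\tup{0}) \geq 0$ enters. Everything else is a routine verification driven by submodularity and the maximality of the chosen atoms.
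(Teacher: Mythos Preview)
Your proposal is correct and follows essentially the same greedy construction as the paper: choose each $p_i$ to maximise $f(\tup{v_{i-1}}[i=a])$, define $\tup{x}(i,p_i)$ and $\tup{x}(i,a)$ by the successive $f$-increments along the chain, verify $\alpha_i\geq\beta_i$ via the submodular inequality applied to $\tup{v_{i-1}}[i=p_i]$ and $\tup{v_{i-1}}[i=a']$, and then check tightness by telescoping. Your handling of the base case $i=1$ is in fact more explicit than the paper's, which states the special definition $\tup{x}(1,p_1)=f(\tup{v_0}[1=p_1])$ but does not separately verify $\alpha_1\geq\beta_1$ for it; you correctly isolate this as the one place where $f(\tup{0})\geq 0$ is needed.
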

\begin{proof}
  Given a submodular $f : \alg{M}^n \rightarrow \mathbb{R}$ we will
  construct a vector $\tup{x}$ which satisfies the requirements in the
  lemma. To do this we define a sequence of atoms $p_i$ for $i \in
  [n]$ inductively. To start the inductive definition let $p_1 \in
  \maxarg_{a \in A} f(\tup{v_0}[1 = a])$ and set $\tup{x}(1, p_1) =
  f(\tup{v_0}[1 = p_1])$. For the general case, choose $p_i \in A$ so
  that
  \[
  p_i \in \maxarg_{a \in A} f(\tup{v_i}[i+1 = a]) .
  \]
  For $i \in [n]$ set
  \begin{align}
    \tup{x}(i+1, p_{i+1}) = f(\tup{v_i}[i+1 = p_{i+1}]) - f(\tup{v_i}), \label{diam:eq:pi+1}
  \end{align}
  and for $a \in A, a \neq p_{i+1}$ set
  \begin{align}
    \tup{x}(i+1, a) = f(\tup{v_{i+1}}) - f(\tup{v_i}[i+1 = p_{i+1}]) . \label{diam:eq:a}
  \end{align}

  \flb{Claim A. If $a \in A$, then $\tup{x}(i+1, p_{i+1}) \geq
  \tup{x}(i+1, a)$.}

  Assume, without loss of generality, that $a \neq p_{i+1}$. We now
  get
  \begin{align}
  &f(\tup{v_{i+1}}) + f(\tup{v_i}) &\leq \notag \\
  &f(\tup{v_i}[i+1 = p_{i+1}]) + f(\tup{v_i}[i+1 = a]) &\leq \notag \\
  &2 f(\tup{v_i}[i+1 = p_{i+1}]) \notag
  \end{align}
  where the first inequality holds due to the submodularity of $f$ and
  the second inequality follows from our choice of $p_{i+1}$. This is
  equivalent to
  \[
  f(\tup{v_{i+1}}) - f(\tup{v_i}[i+1 = p_{i+1}]) \leq f(\tup{v_i}[i+1 = p_{i+1}]) - f(\tup{v_i})
  \]
  which is what we wanted to prove.  \qed

  For $i \in [n]$ and $j \in \{1, 2\}$ we define $c_{i,j}$ as
  $c_{i,1} = p_i$ and $c_{i,2} = 1_\alg{M}$. \qed

  \flb{Claim B. $\tup{x}(\tup{v_i}[i+1 = c_{i,j}]) = f(\tup{v_i}[i+1 =
  c_{i,j}])$ for all $(i,j) \in \{0,1,\ldots,n-1\} \times \{1,2\}$.}

  We prove this by induction over the pairs $(i,j)$ ordered
  lexicographically (so $(i, j) \leq (i', j')$ if and only if $i < i'$ or ($i =
  i'$ and $j \leq j'$)). With the pair $(i,j)$ we associate the tuple
  $\tup{v_i}[i+1 = c_{i,j}]$. Note that $(i,j) \leq (i',j')$ if and
  only if $\tup{v_i}[i+1 = c_{i+1,j}] \lattleq \tup{v_{i'}}[i'+1 =
  c_{i'+1,j'}]$. As $p_1 \in \maxarg_{a \in A} f(\tup{v_0}[1 = a])$
  the claim clearly holds for $(i,j) = (0,1)$. Now assume that it
  holds for all pairs $(i',j')$ such that $(i', j') \leq (i,j)$. If $j
  = 1$ then the next pair is $(i, 2)$ and we get
  \begin{align}
    \tup{x}(\tup{v_i}[i+1 = c_{i,2}]) &= \tup{x}(\tup{v_i}[i+1 = p_{i+1}]) + \tup{x}(i+1, a) \notag \\
                                        &= f(\tup{v_i}[i+1 = p_{i+1}]) + f(\tup{v_{i+1}}) - f(\tup{v_i}[i+1 = p_{i+1}]) \notag \\
                                        &= f(\tup{v_{i+1}}) . \notag
  \end{align}
  Here the first inequality follows from the definition of
  $\tup{x}(\cdot)$ and Claim~A. The second equality follows from the
  induction hypothesis and~\eqref{diam:eq:a}. If $j = 2$ the next pair is
  $(i+1,1)$ and we get
  \begin{align}
    \tup{x}(\tup{v_{i+1}}[i+2 = c_{i+2,1}]) &= \tup{x}(\tup{v_{i+1}}) + \tup{x}(i+2, p_{i+2}) \notag \\
                                            &= f(\tup{v_{i+1}}) + f(\tup{v_{i+1}}[i+2 = p_{i+2}]) - f(\tup{v_{i+1}}) \notag \\
                                            &= f(\tup{v_{i+1}}[i+2 = p_{i+2}]) \notag
  \end{align}
  As above the first equality follows from the definition of
  $\tup{x}(\cdot)$ and Claim~A. The second equality follows from the
  induction hypothesis and~\eqref{diam:eq:a}. \qed

  By Claim~A it follows that $\tup{x}$ is unified. By Claim~B
  $\tup{x}$ satisfies the second condition in the statement of the
  lemma. It is easy to see that if $f$ is integer-valued, then so is
  $\tup{x}$. \qed
\end{proof}

\begin{lemma} \label{diam:lem:spmod->inPf}
  Let $f : \alg{M}^n \rightarrow \mathbb{R}$ be submodular such that
  $f(\tup{0}) \geq 0$. Let $\tup{x}$ be a vector in $\mathbb{R}^{[n]
  \times A}$. If for each $i \in [n]$ there is an atom $p_i$ such that
  \begin{itemize}
    \item for all $i \in [n]$ we have $\tup{x}(\tup{v_i}) = f(\tup{v_i})$, and
    \item for all $i \in \{0,1,\ldots,n-1\}$ we have
      $\tup{x}(\tup{v_i}[i+1 = p_{i+1}]) = f(\tup{v_i}[i+1 =
      p_{i+1}])$,
  \end{itemize}
  then $\tup{x} \in P_M(f)$.
\end{lemma}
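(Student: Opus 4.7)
The plan is to show $\tup{x}(\tup{t}) \leq f(\tup{t})$ for an arbitrary $\tup{t} \in \alg{M}^n$. For each $i \in [n]$ I introduce the tuple $\tup{r_i} = \tup{v_{i-1}}[i = \tup{t}(i)]$ (equal to $1_\alg{M}$ on coordinates $1, \ldots, i-1$, to $\tup{t}(i)$ at coordinate $i$, and to $0_\alg{M}$ afterwards) and reduce the problem to two ingredients: (a) a per-coordinate upper bound $g(\tup{x}(i), \tup{t}(i)) \leq f(\tup{r_i}) - \tup{x}(\tup{v_{i-1}})$, and (b) a telescoping submodular inequality $\sum_{i=1}^n f(\tup{r_i}) \leq f(\tup{t}) + \sum_{i=1}^{n-1} f(\tup{v_i})$.

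For (a), consecutive tuples in the chain $\tup{v_{i-1}} \prec \tup{v_{i-1}}[i = p_i] \prec \tup{v_i}$ differ in a single coordinate, so the definition of $\tup{x}(\cdot)$ together with the two tightness hypotheses gives
\[
g(\tup{x}(i), 1_\alg{M}) = f(\tup{v_i}) - \tup{x}(\tup{v_{i-1}}), \qquad \tup{x}(i, p_i) = f(\tup{v_{i-1}}[i = p_i]) - \tup{x}(\tup{v_{i-1}}).
\]
The cases $\tup{t}(i) \in \{0_\alg{M}, p_i, 1_\alg{M}\}$ are immediate from these identities. The essential case is $\tup{t}(i) = a \in A \setminus \{p_i\}$: since $g(\tup{x}(i), 1_\alg{M})$ is a maximum over pair-sums, $\tup{x}(i, p_i) + \tup{x}(i, a) \leq g(\tup{x}(i), 1_\alg{M})$, which combined with the two identities yields $\tup{x}(i, a) \leq f(\tup{v_i}) - f(\tup{v_{i-1}}[i = p_i])$. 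Submodularity of $f$ applied to the incomparable pair $\tup{v_{i-1}}[i = p_i]$, $\tup{v_{i-1}}[i = a]$ (whose meet is $\tup{v_{i-1}}$ and whose join is $\tup{v_i}$) then rearranges to $f(\tup{r_i}) - f(\tup{v_{i-1}}) \geq f(\tup{v_i}) - f(\tup{v_{i-1}}[i = p_i])$. Combining with $f(\tup{v_{i-1}}) \geq \tup{x}(\tup{v_{i-1}})$ (equality for $i \geq 2$ by hypothesis, and reducing to $\tup{x}(\tup{v_0}) = 0 \leq f(\tup{0})$ for $i = 1$) closes the per-coordinate bound.

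For (b), set $\tup{s_i} = \tup{v_i} \lub \tup{t}$, so that $\tup{s_0} = \tup{t}$, $\tup{s_n} = \tup{1}$, and a coordinatewise check yields $\tup{s_{i-1}} \glb \tup{v_i} = \tup{r_i}$ and $\tup{s_{i-1}} \lub \tup{v_i} = \tup{s_i}$. Summing the submodular inequality for each pair $(\tup{s_{i-1}}, \tup{v_i})$ telescopes to the claimed bound. Finally, summing (a) over $i$ and observing that $\sum_{i=1}^n \tup{x}(\tup{v_{i-1}}) = \sum_{i=1}^{n-1} f(\tup{v_i})$ cancels the $\sum f(\tup{v_i})$ contributions and delivers $\tup{x}(\tup{t}) \leq f(\tup{t})$.

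The main obstacle is the case $\tup{t}(i) \in A \setminus \{p_i\}$: since the lemma does not assume unifiedness of $\tup{x}$, there is no a priori handle on the individual value $\tup{x}(i, a)$, and one must extract the needed upper bound from the two tight evaluations at $\tup{v_i}$ and $\tup{v_{i-1}}[i = p_i]$ combined with a single carefully chosen submodular inequality.
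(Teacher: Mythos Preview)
Your proof is correct, and it takes a genuinely different route from the paper's.

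The paper proves $\tup{x}(\tup{y}) \leq f(\tup{y})$ by induction along the chain
\[
\tup{0}_{\alg{M}^n} \lattl \tup{v_0}[1=p_1] \lattl \tup{v_1} \lattl \tup{v_1}[2=p_2] \lattl \cdots \lattl \tup{v_n},
\]
extending the inequality to successively larger principal ideals. At each step it applies the supermodularity of $\tup{y} \mapsto \tup{x}(\tup{y})$ once and the submodularity of $f$ once, using the induction hypothesis on a tuple with one coordinate lowered.

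Your argument is a direct (non-inductive) decomposition. You split the target inequality into a per-coordinate bound $g(\tup{x}(i),\tup{t}(i)) \leq f(\tup{r_i}) - \tup{x}(\tup{v_{i-1}})$ and a global telescoping bound $\sum_i f(\tup{r_i}) \leq f(\tup{t}) + \sum_{i=1}^{n-1} f(\tup{v_i})$ obtained by applying submodularity to the pairs $(\tup{s_{i-1}}, \tup{v_i})$. The only place supermodularity of $\tup{x}(\cdot)$ enters is the one-coordinate inequality $\tup{x}(i,p_i)+\tup{x}(i,a) \leq g(\tup{x}(i),1_\alg{M})$, which you phrase concretely via the definition of $g$.

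What each buys: the paper's climb is the standard ``greedy chain'' argument, organised so that a single supermodular/submodular sandwich at each level does the work; your version trades the induction for an explicit algebraic identity, making the cancellation of the $f(\tup{v_i})$ terms visible and isolating exactly which submodular inequalities are used (one for each non-$p_i$ atom coordinate in part~(a), and $n$ telescoping inequalities in part~(b)). One very minor remark: in the case $\tup{t}(i)=0_\alg{M}$ the bound $0 \leq f(\tup{v_{i-1}}) - \tup{x}(\tup{v_{i-1}})$ is not ``immediate from these identities'' but rather from the tightness hypothesis (for $i\geq 2$) and from $f(\tup{0})\geq 0$ (for $i=1$); you do supply this later, so there is no gap.
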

\begin{proof}
For $i \in [n]$ and $j \in \{1,2\}$ we define $c_{i,j}$ as follows
$c_{i,1} = p_i$ and $c_{i,2} = 1_\alg{M}$. We will prove by induction that
$\tup{x}(\tup{y}) \leq f(\tup{y})$ for all $\tup{y} \in \alg{M}^n$. As in
the proof of Claim~B in Lemma~\ref{diam:lem:greedy-Bf} the induction will
be over the pairs $\{0,1,\ldots,n-1\} \times \{1,2\}$ ordered
lexicographically. With the pair $(i,j)$ we associate the tuples
$\tup{y}$ such that $\tup{y} \lattleq \tup{v_i}[i+1 = p_{i,j}]$.

As
\[
\tup{x}(\tup{v_0}) = \tup{x}(\tup{0_{\alg{M}^n}}) = 0 \text{ and } f(\tup{0_{\alg{M}^n}}) \geq 0
\]
and
\[
\tup{x}(\tup{v_0}[1 = p_1]) = f(\tup{v_0}[1 = p_1])
\]
the statement holds for the pair $(0,1)$ (which corresponds to
$\tup{y} \lattleq \tup{0_{\alg{M}^n}}[1 = p_1]$). Let $i \in
\{0,1,\ldots,n-1\}$, $j \in \{1,2\}$, and $\tup{y} \in \alg{M}^n, \tup{y}
\lattleq \tup{v_i}[i+1 = c_{i+1,j}]$ and assume that the inequality
holds for all $\tup{y'} \in \alg{M}^n$ such that $\tup{y'} \lattleq
\tup{v_{i'}}[i'+1 = c_{i'+1,j'}]$ where $(i', j')$ is the predecessor
to the pair $(i,j)$. We will prove that the inequality holds for all
$\tup{y} \lattleq \tup{v_i}[i+1 = c_{i+1,j}]$.

To simplify the notation a bit we let $k = i+1$ and $y =
\tup{y}(k)$. If $y = 0_\alg{M}$ we are already done, so assume that $y \neq
0_\alg{M}$. If $y = p_{k}$ let $c = 0_\alg{M}$, if $y \in A, y \neq p_{k}$ let
$c = p_{k}$ and otherwise, if $y = 1_\alg{M}$ let $c = p_{k}$. Now,
\begin{align}
\tup{x}(\tup{y}) &\leq \tup{x}(\tup{v_i}[k = y \lub c])  - \tup{x}(\tup{v_i}[k = c]) + \tup{x}(\tup{y}[k = y \glb c]) \notag \\
                 &\leq \tup{x}(\tup{v_i}[k = y \lub c])  - \tup{x}(\tup{v_i}[k = c]) + f(\tup{y}[k = y \glb c]) \notag \\
                 &\leq f(\tup{v_i}[k = y \lub c]) - f(\tup{v_i}[k = c]) + f(\tup{y}[k = y \glb c]) \notag \\
                 &\leq f(\tup{y}) . \notag 
\end{align}
The first inequality follows from the supermodularity of
$\tup{x}$. The second inequality follows from the induction hypothesis
and the fact that $y \glb c \lattl y$ and $y \glb c \in \{0_\alg{M},
p_k\}$. The third inequality follows from $y \lub c, c \in \{0_\alg{M},
p_{k}, 1_\alg{M}\}$ and the assumptions in the statement of the
lemma. Finally, the last inequality follows from the submodularity of
$f$. \qed
\end{proof}

In the proof of Lemma~\ref{diam:lem:greedy-Bf} the vector $\tup{x} \in
\mathbb{R}^{[n] \times A}$ is constructed with a greedy approach---we
order the coordinates of the vector, $[n] \times A$, in a certain way
and then set each component to its maximum value subject to the
constraints given in the definition of $B_M(f)$. The greedy algorithm
\emph{does not} solve the optimisation problem for $P_M(f)$.  As an
example, let $\alg{M}_3 = (\{0_\alg{M}, 1_\alg{M}, a, b, c\}, \glb, \lub)$ be a diamond
and let $f : \alg{M}_3 \rightarrow \mathbb{R}$ be defined as $f(0_\alg{M}) = 0$,
$f(a) = f(b) = f(c) = f(1_\alg{M}) = 1$. The function $f$ is submodular. Now
let $\tup{c} \in \mathbb{R}^{[1] \times A}$ and $\tup{c}(1, a) =
\tup{c}(1, b) = \tup{c}(1, c) = 1$. From the greedy algorithm we will
get a vector $\tup{x} \in \mathbb{R}^{[1] \times A}$ such that
$\tup{x}(1, a) = 1$ and $\tup{x}(1, b) = \tup{x}(1, c) = 0$ (or some
permutation of this vector). However, the solution to $\max \langle
\tup{c}, \tup{y} \rangle, \tup{y} \in P_M(f)$ is $\tup{y}(1, a) =
\tup{y}(1, b) = \tup{y}(1, c) = 1/2$ and $3/2 = \langle \tup{c},
\tup{y} \rangle > \langle \tup{c}, \tup{x} \rangle = 1$. This example
also shows that the vertices of $P_M(f)$ are not necessarily integer
valued.  This should be compared to submodular set functions, where
the corresponding optimisation problem \emph{is} solved by the greedy
algorithm.~\cite{sfm-survey-mccormick}

Given an algorithm which solves the optimisation problem over $P_M(f)$
in time polynomial in $n$ we can use the equivalence of optimisation
and separation given by the Ellipsoid algorithm to solve the
separation problem for $P_M(f)$ in polynomial time. With such an
algorithm we can decide if $\tup{0} \in P_M(f)$ or not and by a binary
search we can find a minimiser of $f$ in polynomial time. So a
polynomial time algorithm for the optimisation problem over $P_M(f)$
would be desirable. (The approach outlined above can be used to
minimise submodular set functions, see~\cite{ellips-combopt} or,
e.g.,~\cite{ellips-book}.) We present a pseudo-polynomial
algorithm for the optimisation problem in Section~\ref{diam:sec:find-min}
which uses this technique.

We are now ready to state the two main theorems of this section.
\begin{theorem} \label{diam:th:minmax}
  Let $f : \alg{M}^n \rightarrow \mathbb{R}$ be a submodular function such
  that $f(\tup{0}_{\alg{M}^n}) = 0$, then
  \[
  \min_{\tup{x} \in \alg{M}^n} f(\tup{x}) =
  \max \left\{ \tup{z}(\tup{1_{\alg{M}^n}}) \ \big| \ 
      \tup{z} \in P_M(f), \tup{z} \leq 0, \text{ $\tup{z}$ is unified } \right\} .
  \]
  More over, if $f$ is integer-valued then there is an integer-valued
  vector $\tup{z}$ which maximises the right hand side.
\end{theorem}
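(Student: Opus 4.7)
The statement has two substantive parts---a weak-duality inequality and the matching construction---together with an integrality claim.

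\emph{Weak duality.} Fix any feasible $\tup{z}$ and any $\tup{x} \in \alg{M}^n$. Since $\tup{z} \in P_M(f)$ yields $\tup{z}(\tup{x}) \leq f(\tup{x})$, it is enough to prove the coordinatewise inequality $g(\tup{z}(i), 1_\alg{M}) \leq g(\tup{z}(i), \tup{x}(i))$ for every $i \in [n]$. Writing $\tup{z}(i, \cdot)$ in its unified form with top atom $p_i$ and common off-top value $c_i$, one has $g(\tup{z}(i), 1_\alg{M}) = \tup{z}(i, p_i) + c_i$; a brief case split on $\tup{x}(i) \in \{0_\alg{M}\} \cup A \cup \{1_\alg{M}\}$, using only $\tup{z}(i, \cdot) \leq 0$ (which makes the two-term sum the pointwise smallest value of $g$), completes the bound $\tup{z}(\tup{1}_{\alg{M}^n}) \leq \tup{z}(\tup{x}) \leq f(\tup{x})$.

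\emph{Strong duality.} Fix a minimiser $\tup{x}^*$ of $f$. I would construct the witness $\tup{z}$ in two stages. First, run a variant of Lemma~\ref{diam:lem:greedy-Bf} along a maximal chain in $\alg{M}^n$ whose first half climbs from $\tup{0}_{\alg{M}^n}$ to $\tup{x}^*$ and whose second half climbs from $\tup{x}^*$ to $\tup{1}_{\alg{M}^n}$; this yields a unified vector $\tup{x} \in B_M(f)$ that, by the chain design, is tight at $\tup{x}^*$, so $\tup{x}(\tup{x}^*) = f(\tup{x}^*)$. Second, derive $\tup{z}$ from $\tup{x}$ coordinate by coordinate: at $i$ with $\tup{x}^*(i) = 0_\alg{M}$, set $\tup{z}(i, \cdot) = 0$; at $i$ with $\tup{x}^*(i) \in A$, choose the top atom $p_i$ to be some atom \emph{different} from $\tup{x}^*(i)$, set $\tup{z}(i, p_i) = 0$, and take the remaining common value to be the largest number $\leq 0$ consistent with $\tup{z} \in P_M(f)$; at $i$ with $\tup{x}^*(i) = 1_\alg{M}$, carry over the entries of $\tup{x}$, clamped at $0$ where necessary. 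By design the relation $g(\tup{z}(i), 1_\alg{M}) = g(\tup{z}(i), \tup{x}^*(i))$ holds in every coordinate, giving $\tup{z}(\tup{1}_{\alg{M}^n}) = \tup{z}(\tup{x}^*)$. Membership $\tup{z} \in P_M(f)$ is then verified via an inductive argument on $\tup{t} \in \alg{M}^n$ ordered by $\lattleq$, in the style of Lemma~\ref{diam:lem:spmod->inPf}, using submodularity of $f$ and the tightness inherited from the greedy chain at $\tup{x}^*$ and $\tup{1}_{\alg{M}^n}$.

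\emph{Integrality.} All three ingredients---the greedy step of Lemma~\ref{diam:lem:greedy-Bf}, the clamping at $0$, and the selection of top atoms $p_i$---preserve integrality, so when $f$ is integer-valued the resulting $\tup{z}$ is integer-valued as well.

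\emph{Main obstacle.} The crux is the $P_M(f)$-verification in the second stage. Unlike Edmonds' set-function argument, where the negative-part operation $\tup{x} \mapsto \tup{x}^-$ lies automatically in $P(f)$, here the diamond's non-distributivity and the ${|A| \choose 2}^{m}$ inequalities spawned by tuples with $m$ occurrences of $1_\alg{M}$ block any such simple trick; coupling the top-atom choice with the clamping (zeroing out the ``wrong'' atom rather than the minimiser's atom) is exactly what makes Lemma~\ref{diam:lem:spmod->inPf} applicable, and I expect this interplay to be the longest piece of bookkeeping in the proof.
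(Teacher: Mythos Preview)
Your weak-duality argument is correct and is exactly the paper's one-line observation $\tup{z}(\tup{1}_{\alg{M}^n})\le\tup{z}(\tup{y})\le f(\tup{y})$.

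For strong duality, the paper takes a much shorter route than your two-stage construction. It never fixes a minimiser $\tup{x}^*$ or tailors a chain through it. Instead it replaces $f$ by its monotone lower envelope $f'(\tup{x})=\min_{\tup{y}\lattleq\tup{x}}f(\tup{y})$, checks that $f'$ is again submodular, observes that $P_M(f')\subseteq P_M(f)$ and that every $\tup{z}\in P_M(f')$ automatically satisfies $\tup{z}\le 0$, and then runs the \emph{unmodified} greedy step of Lemma~\ref{diam:lem:greedy-Bf} on $f'$ and invokes Lemma~\ref{diam:lem:spmod->inPf} as stated. This hands back a unified $\tup{z}\in B_M(f')$ with $\tup{z}(\tup{1}_{\alg{M}^n})=f'(\tup{1}_{\alg{M}^n})=\min f$; integrality is immediate because $f'$ is integral when $f$ is. The advantage is that no clamping, no atom-swapping, and no bespoke chain are needed: the two lemmas are used verbatim.

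Your construction, by contrast, has a real gap at the point you yourself flag. At coordinates with $\tup{x}^*(i)\in A$ you define the common off-top value as ``the largest number $\le 0$ consistent with $\tup{z}\in P_M(f)$''; but $P_M(f)$-membership is a global condition on all $|A|n$ entries, so this is not a coordinate-wise specification, and there is no reason the values obtained this way recover $\tup{x}(i,\tup{x}^*(i))$ --- without that, you lose $\tup{z}(\tup{x}^*)=f(\tup{x}^*)$ and hence the equality $\tup{z}(\tup{1}_{\alg{M}^n})=\min f$. A second issue is the ``variant of Lemma~\ref{diam:lem:greedy-Bf}'' forced through $\tup{x}^*$: the proof of that lemma needs $p_i\in\maxarg_{a}f(\tup{v_{i-1}}[i=a])$ to get unification (Claim~A there), so a chain dictated by $\tup{x}^*$ will in general not produce a unified vector, which then undermines your later appeal to Lemma~\ref{diam:lem:spmod->inPf}. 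The $f'$ trick avoids both problems simultaneously.
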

\begin{proof}
  If $\tup{z} \in P_M(f)$ and $\tup{z} \leq 0$ then
  \[
  \tup{z}(\tup{1_{\alg{M}^n}}) \leq \tup{z}(\tup{y}) \leq f(\tup{y})
  \]
  for any $\tup{y} \in \alg{M}^n$. Hence, LHS $\geq$ RHS holds. Consider the
  function $f' : \alg{M}^n \rightarrow \mathbb{R}$ defined by
  \[
  f'(\tup{x}) = \min_{\tup{y} \lattleq \tup{x}} f(\tup{y}) .
  \]
  Then $P_M(f') \subseteq P_M(f)$.

  \flb{Claim A. $f'$ is submodular.}

  Let $\tup{x'}, \tup{y'} \in
  \alg{M}^n$ and let $\tup{x} \lattleq \tup{x'}, \tup{y} \lattleq \tup{y'}$
  be tuples such that $f'(\tup{x'}) = f(\tup{x})$ and $f'(\tup{y'}) =
  f(\tup{y})$. Now,
  \[
  f'(\tup{x'}) + f'(\tup{y'}) =
  f(\tup{x}) + f(\tup{y}) \geq f(\tup{x} \glb \tup{y}) + f(\tup{x} \lub \tup{y}) \geq f'(\tup{x'} \glb \tup{y'}) + f'(\tup{x'} \lub \tup{y'})
  \]
  where the first equality follows from the definition of $f'$,
  $\tup{x}$ and $\tup{y}$, the first inequality follows from the
  submodularity of $f$ and the second inequality from the definition
  of $f'$ and $\tup{x} \glb \tup{y} \lattleq \tup{x'} \glb \tup{y'}$
  and $\tup{x} \lub \tup{y} \lattleq \tup{x'} \lub \tup{y'}$. \qed

  \flb{Claim B. For any $\tup{z} \in P_M(f')$ we have $\tup{z} \leq
  0$.}

  As $f(\tup{0}_{\alg{M}^n}) = 0$ we have $f'(\tup{x}) \leq 0$ for any
  $\tup{x} \in \alg{M}^n$. For $i \in [n]$ and $a \in A$ define
  $\tup{t_{i,a}} \in \alg{M}^n$ such that $\tup{t_{i,a}}(j) = 0_\alg{M}$ for $j
  \in [n], j \neq i$ and $\tup{t_{i,a}}(i) = a$. It follows from
  $\tup{z} \in P_M(f')$ that we have $\tup{z}(\tup{t_{i,a}}) =
  \tup{z}(i, a) \leq f'(\tup{t_{i,a}}) \leq 0$ for any $a \in A$ and
  $i \in [n]$. \qed

  \flb{Claim C. Any $\tup{z} \in B_M(f') \subseteq P_M(f')$ satisfies
  $\tup{z}(\tup{1}_{\alg{M}^n}) = f'(\tup{1}_{\alg{M}^n})$.}

  Follows from the definition of $B_M(f')$ \qed

  Finally, $f'(\tup{1}_{\alg{M}^n}) = \min_{\tup{x} \in \alg{M}^n} f(\tup{x})$
  which follows from the definition of $f'$. From
  Lemma~\ref{diam:lem:greedy-Bf} and Lemma~\ref{diam:lem:spmod->inPf} it now
  follows that LHS $\leq$ RHS holds. To prove the existence of a
  integer valued vector, note that the vector from
  Lemma~\ref{diam:lem:greedy-Bf} is integer valued if $f'$ is integer
  valued and $f'$ is integer valued if $f$ is integer valued. \qed
\end{proof}

We can reformulate Theorem~\ref{diam:th:minmax} to relate the minimum
of a submodular function $f$ to the maximum of a certain function
defined over the polyhedron $\{ \tup{x} \in P_M(f) \mid \tup{x} \leq
0 \}$. To do this we define a function $S : \mathbb{R}^{[n] \times
A} \rightarrow \mathbb{R}$ as follows
\[
S(\tup{x}) = \sum_{i=1}^n \min_{a \in A} \tup{x}(i, a) + \max_{a \in A} \tup{x}(i, a) .
\]
We then get the following corollary.
\begin{corollary}
\[
\min_{\tup{y} \in \alg{M}^n} f(\tup{y}) = 
\max \left\{ S(\tup{z}) \ \big| \ \tup{z} \in P_M(f), \tup{z} \leq 0 \right\} .
\]
\end{corollary}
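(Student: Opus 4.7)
The plan is to deduce the corollary directly from Theorem~\ref{diam:th:minmax} by comparing $S(\tup{z})$ with $\tup{z}(\tup{y})$ coordinate-by-coordinate. Writing $m_i = \min_{a \in A} \tup{z}(i,a)$ and $M_i = \max_{a \in A} \tup{z}(i,a)$, so that $S(\tup{z}) = \sum_i (m_i + M_i)$, everything will reduce to showing that at each coordinate $i$, $m_i + M_i \leq g(\tup{z}(i), \tup{y}(i))$.

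For the direction RHS $\geq$ LHS, I would first observe that on unified vectors these two quantities agree: if $\tup{z}$ is unified with distinguished atom $p_i$ at coordinate $i$, then $M_i = \tup{z}(i, p_i)$, all other atoms share the value $m_i$, and the maximum of $\tup{z}(i,a) + \tup{z}(i,a')$ over $a \neq a'$ is attained by taking $a = p_i$ and any $a' \neq p_i$, giving $M_i + m_i$. Hence $S(\tup{z}) = \tup{z}(\tup{1}_{\alg{M}^n})$ for every unified $\tup{z}$. Applying Theorem~\ref{diam:th:minmax} supplies a unified $\tup{z}^* \in P_M(f)$ with $\tup{z}^* \leq 0$ and $\tup{z}^*(\tup{1}_{\alg{M}^n}) = \min_{\tup{y}} f(\tup{y})$, so $S(\tup{z}^*)$ already meets the left-hand side.

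For the direction LHS $\geq$ RHS, I would prove the stronger statement that $S(\tup{z}) \leq \tup{z}(\tup{y})$ for every $\tup{y} \in \alg{M}^n$ and every $\tup{z} \in P_M(f)$ with $\tup{z} \leq 0$; combined with $\tup{z}(\tup{y}) \leq f(\tup{y})$ this yields $S(\tup{z}) \leq \min_{\tup{y}} f(\tup{y})$. Checking coordinate $i$: if $\tup{y}(i) = 0_\alg{M}$, then $g = 0$ and $m_i + M_i \leq 0$ since $\tup{z} \leq 0$; if $\tup{y}(i) = a \in A$, then $m_i \leq \tup{z}(i,a)$ and $M_i \leq 0$ give $m_i + M_i \leq \tup{z}(i,a) = g$; if $\tup{y}(i) = 1_\alg{M}$, pick an atom attaining $M_i$ and a distinct atom attaining $m_i$ (which is possible when $m_i < M_i$ because argmin and argmax then differ, and trivial when all values coincide), so $g = \max_{a \neq a'}(\tup{z}(i,a) + \tup{z}(i,a')) \geq M_i + m_i$.

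No real obstacle is anticipated; the delicate bookkeeping is in the third case, where one must use $|A| \geq 3$ to guarantee the existence of distinct atoms realising $m_i$ and $M_i$ in the non-degenerate subcase, and in making the explicit appeal to $\tup{z} \leq 0$ so that $M_i \leq 0$ in the second case.
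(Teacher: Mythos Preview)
Your proposal is correct. The direction RHS $\geq$ LHS is handled exactly as in the paper: both observe that $S(\tup{z}) = \tup{z}(\tup{1}_{\alg{M}^n})$ whenever $\tup{z}$ is unified, so the optimal unified vector supplied by Theorem~\ref{diam:th:minmax} already witnesses the maximum. For LHS $\geq$ RHS the paper takes a slightly different route: rather than bounding $S(\tup{z})$ directly by $\tup{z}(\tup{y})$ as you do, it takes an arbitrary $\tup{z} \in P_M(f)$ with $\tup{z} \leq 0$ and builds a unified $\tup{z'} \leq \tup{z}$ (hence still in $P_M(f)$ and $\leq 0$) satisfying $S(\tup{z}) = \tup{z'}(\tup{1}_{\alg{M}^n})$, and then invokes Theorem~\ref{diam:th:minmax} once more. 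Your direct coordinate-wise inequality $m_i + M_i \leq g(\tup{z}(i), \tup{y}(i))$ is arguably cleaner, since it avoids re-entering the min--max theorem; the paper's construction, on the other hand, makes explicit that the maximum of $S$ over the whole feasible set is already attained on unified vectors. As a minor aside, your third case only needs $|A| \geq 2$, not $|A| \geq 3$: when $m_i < M_i$ the argmin and argmax are automatically disjoint, and when $m_i = M_i$ any two distinct atoms suffice.
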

\begin{proof}
Follows from Theorem~\ref{diam:th:minmax} by two observations. If
$\tup{z}$ is unified, then $\tup{z}(\tup{1_{\alg{M}^n}}) =
S(\tup{z})$. Furthermore, any vector $\tup{z}$ can be turned into a
unified vector $\tup{z'}$ such that $\tup{z'} \leq \tup{z}$ and
$S(\tup{z}) = \tup{z'}(\tup{1_{\alg{M}^n}})$. (To construct $\tup{z'}$
from $\tup{z}$, for each $i \in [n]$, choose some
$p_i \in \maxarg_{a \in A} \tup{z}(i, a)$ and let $\tup{z'}(i, p_i)
= \tup{z}(i, p_i)$ and for $a \in A, a \neq p_i$ let $\tup{z'}(i, a)
= \min_{a \in A} \tup{z}(i, a)$.) \qed
\end{proof}

One might ask if there is any reason to believe that the min--max
characterisation given by Theorem~\ref{diam:th:minmax} is the ``right'' way
to look at this problem.  That is, can this min--max relation give
insight into the complexity of minimising submodular functions over
diamonds? Theorem~\ref{diam:th:minmax} is used in Section~\ref{diam:sec:good} to
get a good characterisation of submodular function minimisation over
diamonds, so it certainly gets us somewhere. In Section~\ref{diam:sec:find-min}
we present a pseudo-polynomial time algorithm which uses $P_M(f)$, but
it does not use Theorem~\ref{diam:th:minmax}. Additionally,
Theorem~\ref{diam:th:minmax} is in some sense fairly similar
to~\eqref{diam:eq:set-minmax}. In particular, in both cases the vectors are
functions from the atoms of the lattices to the real numbers and when
a vector is applied to a tuple (or a subset) it is computed as a sum
over the coordinates of the vector and the tuple. Furthermore, in this
sum the bottom of the lattice ($0_\alg{M}$ in the diamond case and
$\emptyset$ in the set case) do not contribute to the sum. There are
of course differences as well. The most obvious one is, perhaps, that
there is no element in the set case analogous to $1_\alg{M}$ in the diamond
case. Considering that, as far as we know, all combinatorial
algorithms for submodular set function minimisation is based
on~\eqref{diam:eq:set-minmax} and the similarity between
Theorem~\ref{diam:th:minmax} and~\eqref{diam:eq:set-minmax} one could hope that
Theorem~\ref{diam:th:minmax} could be the basis for a polynomial time
combinatorial algorithm for SFM$(\alg{M})$.

\ignore{
\begin{figure}
  \begin{flushleft}
    \textbf{Input:} Vector $\tup{x} \in \mathbb{R}^{[n] \times A}$ \\
    \textbf{Output:} Vector $\tup{x'} \in \mathbb{R}^{[n] \times A}$ \\
    \textbf{Unify$(\tup{x})$:}
  \end{flushleft}
  \begin{enumerate}
    \item For $i \in [n]$:
    \item $\quad$ Let $p_i = \maxarg_{a \in A} \tup{x}(i, a)$
    \item $\quad$ Set $\tup{x'}(i, p_i) \leftarrow \tup{x}(i, p_i)$
    \item $\quad$ For $a \in A$ such that $a \neq p_i$:
    \item $\quad \quad$ Set $\tup{x'}(i, a) \leftarrow  \min_{a \in A} \tup{x}(i, a)$
    \item Return $\tup{x'}$
  \end{enumerate}
  \caption{The procedure Unify.} \label{diam:fig:unify}
\end{figure}
}

The following theorem is an analog to the second equality in
Edmonds' min--max theorem for submodular set
functions~\eqref{diam:eq:set-minmax}.
\begin{theorem} \label{diam:th:Pf=Bf}
  Let $f : \alg{M}^n \rightarrow \mathbb{R}$ be a submodular function such
  that $f(\tup{0}_{\alg{M}^n}) = 0$, then
  \begin{align}
  \min_{\tup{x} \in \alg{M}^n} f(\tup{x}) &=
  \max \left\{ \tup{z}(\tup{1_{\alg{M}^n}}) \ \big| \ \tup{z} \in P_M(f), \tup{z} \leq 0, \text{ $\tup{z}$ is unified} \right\} \notag \\
  &= \max \left\{ \tup{x}^-(\tup{1_{\alg{M}^n}}) \ \big| \ \tup{x} \in B_M(f), \text{ $\tup{x}^-$ is unified} \right\} . \notag
  \end{align}
\end{theorem}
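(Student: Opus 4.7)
Since the first equality in the theorem is exactly Theorem~\ref{diam:th:minmax}, my plan focuses on showing the rightmost maximum equals the middle maximum. I would prove the two inequalities separately: the easy direction exploits $\tup{x} \mapsto \tup{x}^-$ as a feasibility-preserving map from $B_M(f)$ into the feasible set of the middle max, while the reverse direction requires lifting an optimiser of the middle max back to $B_M(f)$.

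For the upper bound, I would fix $\tup{x} \in B_M(f)$ with $\tup{x}^-$ unified and verify that $\tup{x}^-$ is feasible for the middle max. It is $\leq 0$ and unified by hypothesis; the nontrivial point is $\tup{x}^- \in P_M(f)$. This follows because the per-position function $g(\cdot, y)$ used in the definition of $\tup{z}(\tup{y})$ is coordinate-wise non-decreasing in its first argument, as one sees by inspecting the three cases $y = 0_\alg{M}$, $y \in A$, and $y = 1_\alg{M}$ (in the last case $g$ is a sum of the two largest coordinates, which is monotone componentwise). Hence $\tup{x}^- \leq \tup{x}$ coordinate-wise yields $\tup{x}^-(\tup{y}) \leq \tup{x}(\tup{y}) \leq f(\tup{y})$ for every $\tup{y} \in \alg{M}^n$, and combining with Theorem~\ref{diam:th:minmax} gives the rightmost max $\leq \min f$.

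For the lower bound, I need to exhibit some $\tup{x} \in B_M(f)$ with $\tup{x}^-$ unified and $\tup{x}^-(\tup{1_{\alg{M}^n}}) \geq \min f$. The natural starting point is the vector $\tilde{\tup{z}}$ produced in the proof of Theorem~\ref{diam:th:minmax} by applying Lemma~\ref{diam:lem:greedy-Bf} to $f'(\tup{y}) = \min_{\tup{y}' \lattleq \tup{y}} f(\tup{y}')$: this $\tilde{\tup{z}}$ is unified, satisfies $\tilde{\tup{z}} \leq 0$ (Claim~B of that proof), belongs to $P_M(f') \subseteq P_M(f)$, and attains $\tilde{\tup{z}}(\tup{1_{\alg{M}^n}}) = f'(\tup{1_{\alg{M}^n}}) = \min f$. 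Since $\tilde{\tup{z}} \leq 0$ we have $\tilde{\tup{z}}^- = \tilde{\tup{z}}$, so it suffices to lift $\tilde{\tup{z}}$ to some $\tup{x} \in B_M(f)$ whose negative part is still $\tilde{\tup{z}}$, i.e.\ to find $\tup{w} \geq 0$ supported on the coordinates $(i,a)$ with $\tilde{\tup{z}}(i,a) = 0$ such that $\tup{x} := \tilde{\tup{z}} + \tup{w} \in B_M(f)$.

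The main obstacle is demonstrating the existence of this lift. A naive attempt to obtain $\tup{x}$ by simply running the greedy of Lemma~\ref{diam:lem:greedy-Bf} on $f$ itself, using the top atoms of $\tilde{\tup{z}}$, can fail: where the slack $f - f'$ is non-monotone along the greedy chain, the resulting vector can have strictly negative entries at coordinates where $\tilde{\tup{z}}$ vanishes. My plan is therefore to set up the lift as a convex feasibility problem---each constraint $\tup{x}(\tup{y}) \leq f(\tup{y})$ is convex in $\tup{w}$ (since $g(\cdot, y)$ is convex as a max of linear functions), and carries slack at least $f(\tup{y}) - f'(\tup{y}) \geq 0$---and to argue that the total slack $f(\tup{1_{\alg{M}^n}}) - \min f$ can be absorbed entirely by raising zero-coordinates of $\tilde{\tup{z}}$. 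The chain of tight tuples for $\tilde{\tup{z}}$ given by Lemma~\ref{diam:lem:greedy-Bf} controls which coordinates are zero, and I expect a position-by-position case analysis along $\tup{v_0} \lattl \tup{v_1} \lattl \cdots \lattl \tup{v_n}$, using submodularity of $f$ together with Lemma~\ref{diam:lem:spmod->inPf} to maintain membership in $P_M(f)$, to produce the required $\tup{w}$. Once such $\tup{w}$ is constructed, $\tup{x} = \tilde{\tup{z}} + \tup{w}$ satisfies $\tup{x}^- = \tilde{\tup{z}}$, so $\tup{x}^-$ is unified and $\tup{x}^-(\tup{1_{\alg{M}^n}}) = \min f$, closing the chain of equalities. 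This final step is where most of the technical work resides.
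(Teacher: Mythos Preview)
Your easy direction (rightmost $\leq$ middle) is correct and matches the paper: $\tup{x}^-\le\tup{x}$ coordinatewise, $g(\cdot,y)$ is monotone, hence $\tup{x}^-\in P_M(f)$, and it is unified and $\le 0$.

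For the hard direction your plan diverges from the paper's and contains a real risk. You insist on a lift $\tup{x}=\tilde{\tup{z}}+\tup{w}$ with $\tup{w}\ge 0$ supported \emph{only} on the coordinates where $\tilde{\tup{z}}=0$, so that $\tup{x}^-=\tilde{\tup{z}}$ exactly. This is strictly stronger than what is needed: to close the inequality you only need some $\tup{x}\in B_M(f)$ with $\tup{x}^-$ unified and $\tup{x}^-(\tup{1_{\alg{M}^n}})\ge\min f$, not $\tup{x}^-=\tilde{\tup{z}}$. Your support restriction forces coordinate $i$ to be frozen whenever $\tilde{\tup{z}}(i,\cdot)<0$ everywhere, and you have not argued that the remaining unfrozen coordinates always carry enough room to absorb the slack $f(\tup{1_{\alg{M}^n}})-\min f$ while staying inside $P_M(f)$. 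The ``convex feasibility plus case analysis along the greedy chain'' you sketch is precisely the part that would have to be proved, and you leave it as an expectation.

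The paper avoids this difficulty entirely. Starting from \emph{any} optimiser $\tup{z}$ of the middle maximum, it runs an explicit greedy augmentation: at each step either (i) push the current top atom $\tup{x}(i,p)$ up as far as $P_M(f)$ allows, or, if that is blocked for every $(i,p)$, (ii) push all non-top atoms at some coordinate up, capped so unifiedness is preserved. Along the sequence one has $\tup{x_{j+1}}\ge\tup{x_j}$, $\tup{x_j}\in P_M(f)$, $\tup{x_j}$ unified, and consequently $\tup{x_{j+1}}^-(\tup{1_{\alg{M}^n}})\ge\tup{x_j}^-(\tup{1_{\alg{M}^n}})$; no support restriction is imposed. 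When the process stabilises at $\tup{y}$, blockedness of both moves at every coordinate produces, for each $k$, two $\tup{y}$-tight tuples whose $k$-th coordinates cover two distinct atoms; joining all of these and invoking Lemma~\ref{diam:lem:tight} forces $\tup{1_{\alg{M}^n}}$ to be $\tup{y}$-tight, i.e.\ $\tup{y}\in B_M(f)$. That tightness-closure argument via Lemma~\ref{diam:lem:tight} is the mechanism your plan lacks; it replaces the feasibility analysis you were anticipating with a short structural step.
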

\begin{proof}
  We prove that
  \begin{align}
  &\max \left\{ \tup{z}(\tup{1_{\alg{M}^n}}) \ \big| \ 
                \tup{z} \in P_M(f), \tup{z} \leq 0, \text{$\tup{z}$ is unified} \right\} = \notag \\
  &\max \left\{ \tup{x}^-(\tup{1_{\alg{M}^n}}) \ \big| \ 
                \tup{x} \in B_M(f), \text{$\tup{x}$ is unified} \right\} . \notag
  \end{align}
  The result then follows from Theorem~\ref{diam:th:minmax}.

  Let $\tup{x}$ be a vector which maximises the right hand side. It is
  clear that $\tup{x}^- \in P_M(f)$, $\tup{x}^- \leq 0$, and that
  $\tup{x}^-$ is unified. It follows that LHS $\geq$ RHS.

  Conversely, let $\tup{z}$ be a vector which maximises the left hand
  side. We will define a sequence of vectors
  $\tup{x_0}, \tup{x_1}, \ldots$. We start with $\tup{x_0} = \tup{z}$
  and for $j \geq 0$ we define $\tup{x_{j+1}}$ from $\tup{x_j}$
  according to the construction below.
  \begin{enumerate}  
  \item If there is some $i \in [n]$ and $p \in \maxarg_{a \in
    A} \tup{x}(i, a)$ such that $\alpha' > 0$ where
    \[
    \alpha' = \max \{ \alpha \in \mathbb{R} \mid \tup{x} + \alpha \tup{\chi_{i, p}} \in P_M(f) \} ,
    \]
    then let $\tup{x_{j+1}} = \tup{x_j} + \alpha' \cdot \tup{\chi_{i, p}}$.

  \item Otherwise, if there is some $i \in [n]$ and
    $p \in \maxarg_{a \in A} \tup{x}(i, a)$ such that $\alpha' > 0$ where
    \[
    \alpha' = \max \{ \alpha \in \mathbb{R} \mid 
                        \tup{x_j} + \alpha \cdot (\tup{\chi_i} - \tup{\chi_{i, p}}) \in P_M(f) \} ,
    \]
    then let $a$ be some atom distinct from $p$, let $m
  = \min \{ \alpha', \tup{x}(i, p) - \tup{x}(i, a)\}$, and let
  $\tup{x_{j+1}} = \tup{x_j} + m \cdot (\tup{\chi_i} - \tup{\chi_{i,
  p}})$.
  \end{enumerate}

  We make four observations of this construction.
  \begin{itemize}
    \item If we reach the second step, then $\maxarg_{a \in A} \tup{x}(i, a)$ is a one element set.
    \item For every $j$ the vector $\tup{x_j}$ is unified.
    \item For every $j$, $\tup{x_{j+1}} \geq \tup{x_j}$.
    \item For every $j$, $\tup{x_j} \in P(f)$.
  \end{itemize}
  These observations all follows directly from the construction above.
  It is not hard to convince oneself that there is an integer $m$ such
  that $\tup{x_m} = \tup{x_{m+1}}$ (and thus all vectors constructed
  after $m$ are equal). To see this, note that for a fixed $i \in [n]$
  if some atom $a$ is increased in step 1, then this atom will not be
  increased again at coordinate $i$. Let $\tup{y}$ denote the vector
  $\tup{x_m}$.

  Note that
  $\tup{x_{j+1}}^-(\tup{1}_{\alg{M}^n}) \geq \tup{x_{j}}^-(\tup{1}_{\alg{M}^n})$
  for all $j$. Hence in particular
  $\tup{y}^-(\tup{1}_{\alg{M}^n}) \geq \tup{z}(\tup{1}_{\alg{M}^n})$.
  As we have already proved that LHS $\geq$ RHS it now remains to prove
  that $\tup{y} \in B_M(f)$. As we already know that $\tup{y} \in
  P_M(f)$ this reduces to proving $\tup{y}(\tup{1_{\alg{M}^n}}) =
  f(\tup{1}_{\alg{M}^n})$.

  Let $\tup{p}$ be a tuple such that for $i \in [n]$ we have
  $\tup{p}(i) = \max_{a \in A} \tup{y}(i, a)$. As
  $\tup{y} = \tup{x_m} = \tup{x_{m+1}}$, it follows that for each $k \in [n]$ there
  is an atom $a \in A, a \neq \tup{p}(k)$ and tuples $\tup{t_k},
  \tup{t'_k} \in \alg{M}^n, \tup{p}(k) \lattleq \tup{t_k}(k), a \lattleq
  \tup{t'_k}(k)$ such that $\tup{t_k}$ and $\tup{t'_k}$ are
  $\tup{y}$-tight.  Now let,
  \[
  \tup{t} = \biglub_{k \in [n]} \tup{t_k} \lub \tup{t'_k} .
  \]
  As $\tup{y} \in P_M(f)$ it follows from Lemma~\ref{diam:lem:tight} that
  $\tup{y}(\tup{t}) = f(\tup{t})$. Note that for each $k \in [n]$ we
  have $(\tup{t_k} \lub \tup{t'_k})(k) = 1_\alg{M}$, it follows that
  $\tup{t} = \tup{1}_{\alg{M}^n}$ and hence $\tup{y} \in B_M(f)$. We
  conclude that LHS $\leq$ RHS. \qed
\end{proof}

\section{A Good Characterisation} \label{diam:sec:good}
In this section we show that there are membership proofs for $P_M(f)$
which can be checked in time polynomial in $n$. By using
Theorem~\ref{diam:th:minmax} this will lead to the existence of proofs that
can be checked in time polynomial in $n$ of the fact that a certain
tuple minimises a submodular function.
The following lemma states that if $\tup{a}$ and $\tup{b}$ are
$\tup{x}$-tight, then so are $\tup{a} \glb \tup{b}$ and $\tup{a} \lub
\tup{b}$. This simple result will be used repeatedly in the subsequent
parts of the paper.
\begin{lemma} \label{diam:lem:tight}
  Let $f : \alg{M}^n \rightarrow \mathbb{R}$ be a submodular function. Let
  $\tup{x} \in P_M(f)$ be a vector and let $\tup{a}, \tup{b} \in \alg{M}^n$
  be $\tup{x}$-tight tuples. Then, $\tup{a} \lub \tup{b}$ and $\tup{a}
  \glb \tup{b}$ are $\tup{x}$-tight.
\end{lemma}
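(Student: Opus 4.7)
The plan is to sandwich the quantities $\tup{x}(\tup{a}) + \tup{x}(\tup{b})$, $f(\tup{a}) + f(\tup{b})$, $f(\tup{a} \glb \tup{b}) + f(\tup{a} \lub \tup{b})$, and $\tup{x}(\tup{a} \glb \tup{b}) + \tup{x}(\tup{a} \lub \tup{b})$ between equal endpoints so that every inequality must be an equality. Using tightness of $\tup{a}$ and $\tup{b}$, submodularity of $f$, and $\tup{x} \in P_M(f)$ I would chain
\[
\tup{x}(\tup{a}) + \tup{x}(\tup{b}) \;=\; f(\tup{a}) + f(\tup{b}) \;\geq\; f(\tup{a} \glb \tup{b}) + f(\tup{a} \lub \tup{b}) \;\geq\; \tup{x}(\tup{a} \glb \tup{b}) + \tup{x}(\tup{a} \lub \tup{b}).
\]
To close the loop I would establish what one might call \emph{supermodularity} of the map $\tup{t} \mapsto \tup{x}(\tup{t})$: for all $\tup{a}, \tup{b} \in \alg{M}^n$,
\[
\tup{x}(\tup{a} \glb \tup{b}) + \tup{x}(\tup{a} \lub \tup{b}) \;\geq\; \tup{x}(\tup{a}) + \tup{x}(\tup{b}).
\]
Together with the chain above, this forces equality throughout, giving $\tup{x}(\tup{a} \glb \tup{b}) = f(\tup{a} \glb \tup{b})$ and $\tup{x}(\tup{a} \lub \tup{b}) = f(\tup{a} \lub \tup{b})$, which is exactly the tightness we want.

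The main step is the supermodularity claim, which I expect to be the only substantive calculation. Since $\tup{x}(\tup{t}) = \sum_i g(\tup{x}(i), \tup{t}(i))$, it reduces coordinate-wise to showing, for each $x \in \mathbb{R}^A$ and each pair $u, v \in \alg{M}$, the inequality $g(x, u \glb v) + g(x, u \lub v) \geq g(x, u) + g(x, v)$. If $u, v$ are comparable then $\{u \glb v, u \lub v\} = \{u, v\}$ and both sides agree. If $u, v$ are incomparable then, in a diamond, they must be distinct atoms, so $u \glb v = 0_\alg{M}$ and $u \lub v = 1_\alg{M}$, and the inequality becomes
\[
0 + \max_{a \neq a'} \bigl(x(a) + x(a')\bigr) \;\geq\; x(u) + x(v),
\]
which holds because the pair $(u, v)$ is among those over which the maximum is taken.

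A technical point worth stating explicitly is that when a tuple contains occurrences of $1_\alg{M}$, the definition of $\tup{x}(\tup{t})$ already picks out the tightest of the several $P_M(f)$-inequalities induced by $\tup{t}$; so equality $\tup{x}(\tup{t}) = f(\tup{t})$ is exactly the right notion of tightness for the chain above to close, and no separate argument about which inequality in $I(\tup{t})$ is active is needed.
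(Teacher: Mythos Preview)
Your proposal is correct and follows essentially the same sandwiching argument as the paper: chain tightness, submodularity of $f$, membership in $P_M(f)$, and supermodularity of $\tup{t}\mapsto\tup{x}(\tup{t})$ to force all inequalities to be equalities, then split into the two individual tightness statements. The paper simply asserts the supermodularity of $\tup{x}(\cdot)$ without justification, whereas you supply the coordinate-wise case analysis; otherwise the arguments coincide.
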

\begin{proof}
  \[
  \tup{x}(\tup{a} \lub \tup{b}) + \tup{x}(\tup{a} \glb \tup{b}) \leq
  f(\tup{a} \lub \tup{b}) + f(\tup{a} \glb \tup{b}) \leq
  f(\tup{a}) + f(\tup{b}) = \tup{x}(\tup{a}) + \tup{x}(\tup{b})
  \]
  The first inequality follows from $\tup{x} \in P_M(f)$, the second
  follows from the submodularity of $f$. The equality follows from the
  assumptions in the lemma. Note that $\tup{x}(\tup{a}) +
  \tup{x}(\tup{b}) \leq \tup{x}(\tup{a} \lub \tup{b}) +
  \tup{x}(\tup{a} \glb \tup{b})$.  Since $\tup{x}(\tup{a} \lub
  \tup{b}) \leq f(\tup{a} \lub \tup{b})$ and $\tup{x}(\tup{a} \glb
  \tup{b}) \leq f(\tup{a} \glb \tup{b})$, it follows that
  $\tup{x}(\tup{a} \lub \tup{b}) = f(\tup{a} \lub \tup{b})$ and
  $\tup{x}(\tup{a} \glb \tup{b}) = f(\tup{a} \glb \tup{b})$. \qed
\end{proof}

The following lemma is an important part of the main result in this
section.
\begin{lemma} \label{diam:lem:atmost-one-01}
  Let $\tup{c} \in \mathbb{R}^{[n] \times A}$ and assume that $\tup{x}$ maximises
  $\langle \tup{x}, \tup{c} \rangle$ over $P_M(f)$. Furthermore,
  assume that $\tup{a}, \tup{b} \in \alg{M}^n, \tup{a} \lattleq \tup{b}$ are
  $\tup{x}$-tight and for all $\tup{t} \in \alg{M}^n$ such that $\tup{a}
  \lattl \tup{t} \lattl \tup{b}$ the tuple $\tup{t}$ is not
  $\tup{x}$-tight. Then, there is at most one coordinate $i \in [n]$
  such that $\tup{a}(i) = 0_\alg{M}$ and $\tup{b}(i) = 1_\alg{M}$.
\end{lemma}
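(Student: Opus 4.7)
The plan is to argue by contradiction. Suppose there are two distinct coordinates $i, j \in [n]$ with $\tup{a}(i) = \tup{a}(j) = 0_\alg{M}$ and $\tup{b}(i) = \tup{b}(j) = 1_\alg{M}$. The goal will be to exhibit a pair of tuples strictly between $\tup{a}$ and $\tup{b}$ whose sum of $f$-values simultaneously exceeds and is bounded by $f(\tup{a}) + f(\tup{b})$.

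The only real insight needed is the choice of the tuple pair. Pick one of the two jumping coordinates, say $i$, and define
\[
\tup{u} = \tup{a}[i = 1_\alg{M}] \quad \text{and} \quad \tup{v} = \tup{b}[i = 0_\alg{M}].
\]
Both $\tup{u}$ and $\tup{v}$ clearly lie in the closed interval $[\tup{a}, \tup{b}]$, and the second jumping coordinate $j$ is exactly what makes them \emph{strictly} interior: $\tup{u}(j) = 0_\alg{M} \neq 1_\alg{M} = \tup{b}(j)$ rules out $\tup{u} = \tup{b}$, and $\tup{v}(j) = 1_\alg{M} \neq 0_\alg{M} = \tup{a}(j)$ rules out $\tup{v} = \tup{a}$. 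By the hypothesis that no strictly intermediate tuple is $\tup{x}$-tight, both satisfy $\tup{x}(\tup{u}) < f(\tup{u})$ and $\tup{x}(\tup{v}) < f(\tup{v})$.

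Next I would evaluate the two $\tup{x}$-values exactly. Writing $M_i = g(\tup{x}(i), 1_\alg{M})$ and using $\tup{a}(i) = 0_\alg{M}$, $\tup{b}(i) = 1_\alg{M}$ together with the definition of $g$, one gets $\tup{x}(\tup{u}) = \tup{x}(\tup{a}) + M_i$ and $\tup{x}(\tup{v}) = \tup{x}(\tup{b}) - M_i$. Tightness of $\tup{a}$ and $\tup{b}$ then yields $\tup{x}(\tup{u}) + \tup{x}(\tup{v}) = f(\tup{a}) + f(\tup{b})$, and summing the two strict inequalities from the previous paragraph gives $f(\tup{a}) + f(\tup{b}) < f(\tup{u}) + f(\tup{v})$.

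Finally, a coordinate-wise inspection shows $\tup{u} \glb \tup{v} = \tup{a}$ and $\tup{u} \lub \tup{v} = \tup{b}$: at coordinate $i$ this reduces to $1_\alg{M} \glb 0_\alg{M} = 0_\alg{M}$ and $1_\alg{M} \lub 0_\alg{M} = 1_\alg{M}$, while at every other coordinate we are taking meet/join of the comparable pair $\tup{a}(k) \lattleq \tup{b}(k)$. Submodularity of $f$ then gives $f(\tup{u}) + f(\tup{v}) \leq f(\tup{a}) + f(\tup{b})$, contradicting the strict inequality above. I note that this plan uses only $\tup{x} \in P_M(f)$ together with the tightness and minimality hypotheses, so the full optimality of $\tup{x}$ with respect to $\tup{c}$ is not actually needed; the main obstacle is simply finding the right pair $(\tup{u}, \tup{v})$, after which everything collapses to one application of submodularity.
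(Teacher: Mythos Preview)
Your argument has a fatal error in the last step: the submodularity inequality goes the other way. With $\tup{u}\glb\tup{v}=\tup{a}$ and $\tup{u}\lub\tup{v}=\tup{b}$, submodularity of $f$ says
\[
f(\tup{a})+f(\tup{b})=f(\tup{u}\glb\tup{v})+f(\tup{u}\lub\tup{v})\le f(\tup{u})+f(\tup{v}),
\]
which is \emph{consistent} with the strict inequality $f(\tup{a})+f(\tup{b})<f(\tup{u})+f(\tup{v})$ you derived, not in contradiction with it. Everything up to that point is correct, but it only reproduces an instance of submodularity; no contradiction arises.

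This is not a local slip that can be patched: your remark that ``the full optimality of $\tup{x}$ with respect to $\tup{c}$ is not actually needed'' is false, and the lemma genuinely fails for non-optimal $\tup{x}\in P_M(f)$. For a concrete counterexample take $\alg{M}_3$ with atoms $p,q,r$, $n=2$, set $f(\tup{t})=\rho(\tup{t})(3-\rho(\tup{t}))$ (strictly submodular, with values $0,2,2,0,-4$ at ranks $0,\dots,4$) and $\tup{x}(i,a)=-1$ for all $i,a$ (so $\tup{x}(\tup{t})=-\rho(\tup{t})$). Then $\tup{x}\in P_M(f)$, the only $\tup{x}$-tight tuples are $\tup{0}_{\alg{M}^2}$ and $\tup{1}_{\alg{M}^2}$, and both coordinates jump from $0_\alg{M}$ to $1_\alg{M}$.

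The paper's proof uses optimality in an essential way: assuming two jumping coordinates $i,j$, it perturbs $\tup{x}$ in the direction $\tup{\chi_i}-\tup{\chi_j}$ (choosing the sign via $\tup{c}$), and optimality forces this perturbation to leave $P_M(f)$, which produces an $\tup{x}$-tight tuple $\tup{t}$ with $\tup{t}(i)$ strictly above $\tup{t}(j)$ in a suitable sense. One then shows $(\tup{b}\glb\tup{t})\lub\tup{a}$ is $\tup{x}$-tight (by the closure of tight tuples under $\glb,\lub$) and lies strictly between $\tup{a}$ and $\tup{b}$, giving the contradiction. You will need to bring $\tup{c}$ back into the argument.
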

\begin{proof}
Assume that there is another coordinate $j \in [n], j \neq i$ such
that $\tup{a}(j) = 0_\alg{M}$ and $\tup{b}(j) = 1_\alg{M}$. We can assume, without
loss of generality, that
\[
\sum_{x \in A} \tup{c}(i, x) > \sum_{x \in A} \tup{c}(j, x) .
\]
\ignore{(FIXME: equality?)}  Let $\delta > 0$ and let $\tup{x'} =
\tup{x} + \delta \tup{\chi_i} - \delta \tup{\chi_j}$.  We cannot have
$\tup{x'} \in P_M(f)$ for any $\delta > 0$, because then $\tup{x}$ is
not optimal. As $\tup{x'} \not \in P_M(f)$ there is some
$\tup{x}$-tight tuple $\tup{t} \in \alg{M}^n$ such that ($\tup{t}(i) \in A$
and $\tup{t}(j) = 0_\alg{M}$) or ($\tup{t}(i) = 1_\alg{M}$ and $\tup{t}(j) \in
\{0_\alg{M}\} \cup A$). In either case, it follows from
Lemma~\ref{diam:lem:tight} that $\tup{t'} = (\tup{b} \glb \tup{t}) \lub
\tup{a}$ is $\tup{x}$-tight, which is a contradiction as $\tup{a}
\lattl \tup{t'} \lattl \tup{b}$. \qed
\end{proof}
The key lemma of this section is the following result. We will use
this lemma together with Lemma~\ref{diam:lem:atmost-one-01} in the proof of
the main result of this section (Theorem~\ref{diam:th:coNP}).

\begin{lemma} \label{diam:lem:poly-verify}
Let $n$ be a positive integer and let $f : \alg{M}^n \rightarrow \mathbb{R}$
be submodular which is provided to us by a value-giving oracle. Let
$\tup{x} \in \mathbb{R}^{[n] \times A}$ and $\tup{a}, \tup{b} \in \alg{M}^n$
such that $\tup{a} \lattleq \tup{b}$, $\tup{a}$ is $\tup{x}$-tight,
and there are at most $k$ coordinates $i \in [n]$ such that
$\tup{a}(i) = 0_\alg{M}$ and $\tup{b}(i) = 1_\alg{M}$. Under the assumption that
for all $\tup{t} \lattleq \tup{a}$ we have $\tup{x}(\tup{t}) \leq
f(\tup{t})$ it can be verified in time $O\left(n^{k+c} \right)$ that
$\tup{x}(\tup{y}) \leq f(\tup{y})$ holds for all $\tup{y} \lattleq
\tup{b}$, for some fixed constant $c$.
\end{lemma}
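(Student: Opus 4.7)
The plan is to prove this by induction on $k$, after first reducing the set of tuples that need checking via the supermodularity of $\tup{x}(\cdot)$ (which follows from a short case check on the definition of $g$). For any $\tup{y} \lattleq \tup{b}$, supermodularity gives
\[
\tup{x}(\tup{y}) \leq \tup{x}(\tup{y} \lub \tup{a}) + \tup{x}(\tup{y} \glb \tup{a}) - \tup{x}(\tup{a}).
\]
Since $\tup{y} \glb \tup{a} \lattleq \tup{a}$ the hypothesis yields $\tup{x}(\tup{y} \glb \tup{a}) \leq f(\tup{y} \glb \tup{a})$, and $\tup{x}(\tup{a}) = f(\tup{a})$ by tightness. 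If we have already certified $\tup{x}(\tup{y} \lub \tup{a}) \leq f(\tup{y} \lub \tup{a})$, then submodularity of $f$ gives $\tup{x}(\tup{y}) \leq f(\tup{y})$. Hence it suffices to verify the inequality for tuples in the interval $[\tup{a}, \tup{b}]$.

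For the base case $k = 0$ the interval $[\tup{a}, \tup{b}]$ is a product of chains of length at most two, i.e.\ a distributive lattice. After enumerating, for every coordinate $i$ where $\tup{b}(i) = 1_\alg{M}$ (there are only polynomially many and at most $\binom{|A|}{2}$ induced inequalities at each such coordinate, and in fact only coordinates with $\tup{a}(i) \in A$ and $\tup{b}(i) = 1_\alg{M}$ contribute a nontrivial choice), a fixed pair of atoms representing the induced inequality, the problem becomes: check that $f(\tup{y}) - \tilde{\tup{x}}(\tup{y}) \geq 0$ on $[\tup{a}, \tup{b}]$ where $\tilde{\tup{x}}$ is the resulting linear functional. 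The function $f - \tilde{\tup{x}}$ is submodular on a distributive lattice, and Schrijver's polynomial-time algorithm for submodular minimisation over distributive lattices settles it in $O(n^{c})$ time.

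For the inductive step with $k \geq 1$ I would pick a coordinate $i_0 \in K$ and split on the value of $\tup{y}(i_0)$. For each of the $|A|+1$ possibilities $v \in \{0_\alg{M}\} \cup A$, let $\tup{b}^v = \tup{b}[i_0 = v]$; then $\tup{a} \lattleq \tup{b}^v$, $\tup{a}$ is still $\tup{x}$-tight, and the number of coordinates with $\tup{a}(i) = 0_\alg{M}$, $\tup{b}^v(i) = 1_\alg{M}$ has dropped to $k-1$, so the induction hypothesis applies and gives an $O(n^{(k-1)+c})$ verification. For the remaining case $\tup{y}(i_0) = 1_\alg{M}$ I would enumerate the $\binom{|A|}{2}$ atom-pairs $(a, a')$ at $i_0$; for each pair the induced inequality has the form $x(i_0,a) + x(i_0,a') + \tup{x}'(\tup{y}') \leq f'(\tup{y}')$, where $\tup{x}'$ and $f'(\tup{y}') = f(\tup{y}'[i_0 = 1_\alg{M}])$ live on $n-1$ coordinates, and this is itself an instance of the same verification problem on $n-1$ variables with only $k-1$ bad coordinates.

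The main obstacle is that in the $\tup{y}(i_0) = 1_\alg{M}$ branch the projected tuple $\tup{a}|_{[n]\setminus\{i_0\}}$ is not automatically tight for $f'$ with respect to $\tup{x}'$, so the hypothesis of the induction on the reduced problem must be established separately. I would handle this either by shifting $f'$ by the appropriate constant and applying the greedy construction of Lemma~\ref{diam:lem:greedy-Bf} to produce an $\tup{x}'$-tight tuple $\tup{a}'$ covering $\tup{a}|_{[n]\setminus\{i_0\}}$ (maintaining the pointwise inequality assumption below $\tup{a}'$ using the already-verified branches with $v \neq 1_\alg{M}$), or by bypassing the recursion through a direct combinatorial argument using Lemma~\ref{diam:lem:tight} to assemble the needed inequalities from the tight tuples produced in the easier branches. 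Either way, since the splitting introduces only a constant blow-up at each level of recursion and the depth is $k$, the total running time is bounded by $O(n^{k+c})$ as claimed.
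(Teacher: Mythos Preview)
Your reduction to the interval $[\tup{a},\tup{b}]$ via the supermodularity of $\tup{x}(\cdot)$ and the tightness of $\tup{a}$ is exactly what the paper does, and it is correct.

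There are two genuine problems with the rest, however.

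\textbf{The base case is exponential as written.} When $k=0$ there can still be up to $n$ coordinates $j$ with $\tup{a}(j)\in A$ and $\tup{b}(j)=1_\alg{M}$. Enumerating a pair of atoms at each such coordinate gives $\binom{|A|}{2}^{m}$ cases with $m$ possibly $\Theta(n)$, not polynomially many. The enumeration is also unnecessary: $\tup{y}\mapsto\tup{x}(\tup{y})$ is supermodular on $\alg{M}^n$ (this is the ``short case check'' you already invoked), hence $\tup{y}\mapsto f(\tup{y})-\tup{x}(\tup{y})$ is submodular on the boolean interval $[\tup{a},\tup{b}]$, and a single submodular set-function minimisation suffices. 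This is precisely what the paper does (their functions $g_{\tup{z}}-h_{\tup{z}}$).

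\textbf{The obstacle in the inductive step is self-inflicted.} You try to apply the full statement of the lemma recursively in the $\tup{y}(i_0)=1_\alg{M}$ branch, and then lose the tightness hypothesis. But tightness of $\tup{a}$ is only used once, in the reduction to the interval; the verification on $[\tup{a},\tup{b}]$ itself never uses it. If you phrase the induction on the weaker claim ``verify $\tup{x}(\tup{y})\leq f(\tup{y})$ for all $\tup{y}\in[\tup{a},\tup{b}]$'', then in the $1_\alg{M}$ branch you are simply verifying $\tup{x}'(\tup{y}')\leq f(\tup{y}'[i_0=1_\alg{M}])-g(\tup{x}(i_0),1_\alg{M})$ on an interval with $k-1$ bad coordinates, and no tightness is required. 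Your proposed workarounds (invoking Lemma~\ref{diam:lem:greedy-Bf}, which produces vectors rather than tight tuples, or an unspecified use of Lemma~\ref{diam:lem:tight}) do not work and are not needed.

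The paper avoids the induction altogether: it enumerates all $|\alg{M}|^{|I|}\leq |\alg{M}|^k$ value-assignments $\tup{z}$ on the bad coordinates $I$ simultaneously, and for each $\tup{z}$ minimises the submodular set function $g_{\tup{z}}-h_{\tup{z}}$ over $2^J$. This is your argument unrolled, with both issues above fixed.
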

\begin{proof}
Let $I \subseteq [n]$ be the set of coordinates such that $i \in I$ if
and only if $\tup{a}(i) = 0_\alg{M}$ and $\tup{b}(i) = 1_\alg{M}$ and let $J = \{
j \in [n] \mid \tup{a}(j) \neq \tup{b}(j), j \not \in I \}$. Let $Z =
\{ \tup{z} \in \alg{M}^n \mid \forall i \not \in I: \tup{z}(i) = 0_\alg{M}
\}$. For a subset $Y = \{y_1, y_2, \ldots, y_m\}$ of $J$ and $\tup{z}
\in Z$ define $g_{\tup{z}} : 2^J \rightarrow \mathbb{R}$ as
\[
g_{\tup{z}}(Y) = f(\tup{a}[y_1 = \tup{b}(y_1), \ldots, y_m = \tup{b}(y_m)] \lub \tup{z}) .
\]
We claim that $g_{\tup{z}}$ is a submodular set function. Let $\tup{z}
\in Z$ and let $C = \{c_1, c_2, \ldots, c_k\}$ and $D = \{d_1, d_2,
\ldots, d_l\}$ be two arbitrary subsets of $J$. Define $\tup{c},
\tup{d} \in \alg{M}^n$ as $\tup{a}[c_1 = \tup{b}(c_1), \ldots, c_k =
\tup{b}(c_k)] \lub \tup{z}$ and $\tup{a}[d_1 = \tup{b}(d_1), \ldots, d_l =
\tup{b}(d_l)] \lub \tup{z}$, respectively. We now get
\[
g_{\tup{z}}(C) + g_{\tup{z}}(D) =    f(\tup{c}) + f(\tup{d})
            \geq f(\tup{c} \glb \tup{d}) + f(\tup{c} \lub \tup{d})
            =    g_{\tup{z}}(C \cap D) + g_{\tup{z}}(C \cup D) .
\]
Hence $g_{\tup{z}}$ is submodular for each $\tup{z} \in Z$. For a
subset $Y = \{y_1, y_2, \ldots, y_m\}$ of $J$ define $h_{\tup{z}} :
2^J \rightarrow \mathbb{R}$ as
\[
h_{\tup{z}}(Y) = \tup{x}(\tup{a}[y_1 = \tup{b}(y_1), \ldots, y_m = \tup{b}(y_m)] \lub \tup{z}) .
\]
We claim that $-h_{\tup{z}}$ is a submodular set function for each $\tup{z} \in
Z$. As above, let $C = \{c_1, c_2, \ldots, c_k\}$ and $D = \{d_1, d_2,
\ldots, d_l\}$ be two arbitrary subsets of $J$ and let $\tup{c} =
\tup{a}[c_1 = \tup{b}(c_1), \ldots, c_k = \tup{b}(c_k)] \lub \tup{z}$ and
$\tup{d} = \tup{a}[d_1 = \tup{b}(d_1), \ldots, d_l =
\tup{b}(d_l)] \lub \tup{z}$, then
\[
h_{\tup{z}}(C) + h_{\tup{z}}(D) =    \tup{x}(\tup{c}) + \tup{x}(\tup{d})
               \leq \tup{x}(\tup{c} \glb \tup{d}) + \tup{x}(\tup{c} \lub \tup{d})
                =    h_{\tup{z}}(C \cap D) + h_{\tup{z}}(C \cup D) .
\]
Hence, $-h_{\tup{z}}$ is submodular. Let $Y = \{y_1, y_2, \ldots, y_m\}$ be an
arbitrary subset of $J$ and let $\tup{z} \in Z$. For a fixed $k$ the inequalities
\begin{align}
&\tup{x}(\tup{a}[y_1 = \tup{b}(y_1), \ldots, y_m = \tup{b}(y_m)] \lub \tup{z}) \leq \notag \\
&f(\tup{a}[y_1 = \tup{b}(y_1), \ldots, y_m = \tup{b}(y_m)] \lub \tup{z}) \notag \\
&\iff \notag \\
&0 \leq g_{\tup{z}}(Y) - h_{\tup{z}}(Y) \label{diam:eq:submod-min}
\end{align}
can be verified to hold for every $Y \subseteq J$ and $\tup{z} \in Z$
in time polynomial in $n$ as, for each $x \in X$, the RHS
of~\eqref{diam:eq:submod-min} is a submodular set function in
$Y$. Conversely, if~\eqref{diam:eq:submod-min} does not hold for some $Y
\subseteq J$ and $x \in X$, then there is a tuple $\tup{t} \lattleq
\tup{b}$ such that $\tup{x}(\tup{t}) \not \leq f(\tup{t})$. To verify
that~\eqref{diam:eq:submod-min} holds for all $Y \subseteq J$ and $\tup{z}
\in Z$ find the minimum value of the RHS of~\eqref{diam:eq:submod-min} for
each $\tup{z} \in Z$ and compare it to 0 (note that $|Z|$ only depends
on $k$ and $|A|$). This can be done in time polynomial in $n$ by one
of the polynomial time algorithms for submodular function minimisation
(see, e.g.,~\cite{ellips-book,submod-min-P-iwata,submod-min-P-alex}
for descriptions of these algorithms).

Let $\tup{y} \in \alg{M}^n$ be a tuple such that $\tup{y} \lattleq
\tup{b}$. Note that if $\tup{a} \lattleq \tup{y}$, then it follows
from~\eqref{diam:eq:submod-min} that $\tup{x}(\tup{y}) \leq f(\tup{y})$.
For the sake of contradiction, assume that $\tup{x}(\tup{y}) \not \leq
f(\tup{y})$. By the submodularity of $f$ we get
\begin{align}
f(\tup{a} \lub \tup{y}) + f(\tup{a} \glb \tup{y}) \leq f(\tup{a}) + f(\tup{y}) . \label{diam:eq:submod-ay}
\end{align}
As $\tup{a} \lattleq \tup{a} \lub \tup{y} \lattleq \tup{b}$ it follows
from~\eqref{diam:eq:submod-min} that $\tup{x}(\tup{a} \lub \tup{y}) \leq
f(\tup{a} \lub \tup{y})$.  Furthermore, $\tup{y} \glb \tup{a} \lattleq
\tup{a}$ so by the assumptions in the lemma $\tup{x}(\tup{a} \glb
\tup{y}) \leq f(\tup{a} \glb \tup{y})$. By the choice of $\tup{a}$ and
$\tup{y}$ we get $\tup{x}(\tup{a}) = f(\tup{a})$ and $f(\tup{y}) <
\tup{x}(\tup{y})$.  It follows that
\begin{align}
\tup{x}(\tup{a} \lub \tup{y}) + \tup{x}(\tup{a} \glb \tup{y}) \leq f(\tup{a} \lub \tup{y}) + f(\tup{a} \glb \tup{y}) \label{diam:eq:y-f}
\end{align}
and
\begin{align}
f(\tup{a}) + f(\tup{y}) < \tup{x}(\tup{a}) + \tup{x}(\tup{y}) . \label{diam:eq:ff-yy}
\end{align}
But
\begin{align}
\tup{x}(\tup{a}) + \tup{x}(\tup{y}) \leq \tup{x}(\tup{a} \lub \tup{y}) + \tup{x}(\tup{a} \glb \tup{y}) \label{diam:eq:yy-yy}
\end{align}
so we get a contradiction by
combining~\eqref{diam:eq:y-f},~\eqref{diam:eq:submod-ay}, \eqref{diam:eq:ff-yy},
and~\eqref{diam:eq:yy-yy}. \qed
\end{proof}

Before we prove the main result of this section we need a few basic
facts about polyhedrons. Let $P \subseteq \mathbb{R}^n$ be a
polyhedron. The \emph{lineality space} of $P$, denoted by
$\text{lin.space}\ P$, is the set of vectors $\tup{x}$ such that there
is a vector $\tup{y} \in P$ and $\lambda \tup{x} + \tup{y} \in P$ for
all $\lambda \in \mathbb{R}$.  The \emph{characteristic cone} of $P$,
denoted by $\text{char.cone}\ P$, is the set of vectors $\tup{x} \in
\mathbb{R}^n$ such that for all $\tup{y} \in P$ and $\lambda \geq 0$
we have $\lambda \tup{x} + \tup{y} \in P$.

Given a submodular function $f$, it is not hard to see that the
characteristic cone of $P_M(f)$ are the vectors $\tup{x} \in
\mathbb{R}^{[n] \times A}$ such that $\tup{x} \leq 0$. Furthermore, the lineality
space of $P_M(f)$ is $\{ \tup{0} \}$. Given a polyhedron $P$ such that
$\text{lin.space}\ P = \{ \tup{0} \}$, it is well-known (see,
e.g,~\cite[Chapter~8]{linear-int-prog}) that any $\tup{x} \in P$ can
be represented as
$\tup{x} = \sum_{i = 1}^{n+1} \lambda_i \tup{y_i} + \tup{c}$
where $\tup{y_1}, \ldots, \tup{y_{n+1}}$ are vertices of $P$,
$\tup{c} \in \textnormal{char.cone}\ P$, $\sum_{i=1}^{n+1} \lambda_i =
1$, and $\lambda_i \geq 0$ for all $i$. (That is, $\tup{x}$ is the sum
of a convex combination of some of the vertices of $P$ and a vector in
the characteristic cone of $P$.) The fact that $n+1$ vertices suffice
is also well-known and is a corollary to Carath{\'e}odory's
Theorem~\cite{caratheodory} (see~\cite[Chapter~7.7]{linear-int-prog}
for a proof of the theorem and the corollary). We state this result
adapted to our setting as the following theorem.
\begin{theorem} \label{diam:th:caratheodory}
Let $f : \alg{M}^n \rightarrow \mathbb{R}$ be submodular and let $\tup{x}
\in P_M(f)$. Let $N = n \cdot |A|$. There are vertices $\tup{y_1},
\ldots, \tup{y_{N+1}}$ of $P_M(f)$, coefficients $\lambda_1, \ldots,
\lambda_{N+1} \in \mathbb{R}$, and a vector $\tup{c} \in
\mathbb{R}^{[n] \times A}$ such that
\[
\tup{x} = \sum_{i = 1}^{n+1} \lambda_i \tup{y_i} + \tup{c} ,
\]
$\tup{c} \leq 0$, $\sum_{i=1}^{N+1} \lambda_i = 1$, and $\lambda_i
\geq 0$ for each $i \in [N+1]$.
\end{theorem}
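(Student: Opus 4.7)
The plan is to verify the two polyhedral facts that the text already announces just before the theorem statement, and then invoke the cited corollary of Carath\'eodory's theorem as a black box. Since $P_M(f)$ is a polyhedron in $\mathbb{R}^{[n] \times A}$, and this ambient space has dimension $N = n \cdot |A|$, the cited result will produce a representation using at most $N+1$ vertices plus a characteristic cone vector, which matches what the theorem claims.

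The main step is to compute $\textrm{char.cone}\ P_M(f) = \{\tup{c} \in \mathbb{R}^{[n] \times A} \mid \tup{c} \leq 0\}$. For the inclusion $\supseteq$, fix $\tup{c} \leq 0$, $\tup{y} \in P_M(f)$, $\lambda \geq 0$, and $\tup{t} \in \alg{M}^n$, and inspect each coordinate according to the three cases in the definition of $g$. If $\tup{t}(i) = 0_\alg{M}$ the contribution is $0$; if $\tup{t}(i) \in A$ the contribution equals $\lambda \tup{c}(i, \tup{t}(i)) + \tup{y}(i, \tup{t}(i)) \leq \tup{y}(i, \tup{t}(i))$; and if $\tup{t}(i) = 1_\alg{M}$ the contribution is $\max_{a \neq a'}\bigl(\lambda(\tup{c}(i,a)+\tup{c}(i,a')) + \tup{y}(i,a) + \tup{y}(i,a')\bigr)$, which is bounded by $\max_{a \neq a'}(\tup{y}(i,a)+\tup{y}(i,a'))$ because every coefficient of $\lambda$ is nonpositive. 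Summing, $(\lambda\tup{c}+\tup{y})(\tup{t}) \leq \tup{y}(\tup{t}) \leq f(\tup{t})$, so $\lambda \tup{c}+\tup{y} \in P_M(f)$. For the reverse inclusion $\subseteq$, if $\tup{c}(i,a) > 0$ for some $(i,a)$, pick $\tup{t}$ with $\tup{t}(i) = a$ and $\tup{t}(j) = 0_\alg{M}$ for $j \neq i$; then $(\lambda \tup{c}+\tup{y})(\tup{t}) = \lambda \tup{c}(i,a) + \tup{y}(i,a)$ grows without bound in $\lambda$, which is incompatible with membership in $P_M(f)$.

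With the characteristic cone identified, the lineality space falls out immediately: any $\tup{c}$ for which both $\tup{c}$ and $-\tup{c}$ lie in $\textrm{char.cone}\ P_M(f)$ must satisfy $\tup{c} \leq 0$ and $-\tup{c} \leq 0$, forcing $\tup{c} = \tup{0}$. Hence $\text{lin.space}\ P_M(f) = \{\tup{0}\}$, and $P_M(f)$ is non-empty by Lemma~\ref{diam:lem:greedy-Bf}, so it has vertices.

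At this point the cited corollary of Carath\'eodory's theorem applies verbatim: every point of a non-empty polyhedron with trivial lineality space, living in an ambient space of dimension $N$, is the sum of a convex combination of at most $N+1$ of its vertices and a vector from its characteristic cone. Applying this to $\tup{x} \in P_M(f)$ produces vertices $\tup{y_1}, \ldots, \tup{y_{N+1}}$, nonnegative coefficients $\lambda_1, \ldots, \lambda_{N+1}$ summing to $1$, and a vector $\tup{c} \in \textrm{char.cone}\ P_M(f)$, i.e.\ $\tup{c} \leq 0$, with $\tup{x} = \sum_i \lambda_i \tup{y_i} + \tup{c}$, which is exactly the statement of the theorem. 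There is no real obstacle here beyond the careful case analysis for the $1_\alg{M}$-coordinates when computing the characteristic cone; everything else is an appeal to a standard fact about polyhedra.
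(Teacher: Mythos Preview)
Your proposal is correct and follows exactly the approach the paper takes: the paper does not give a separate proof of this theorem but simply states, in the paragraph preceding it, that the characteristic cone of $P_M(f)$ is $\{\tup{c} \leq 0\}$, that the lineality space is $\{\tup{0}\}$, and that the theorem then follows from the standard corollary of Carath\'eodory's theorem. You have merely filled in the details the paper leaves as ``not hard to see'' (and one small remark: non-emptiness of $P_M(f)$ is already given by the hypothesis $\tup{x} \in P_M(f)$, so the appeal to Lemma~\ref{diam:lem:greedy-Bf} is unnecessary).
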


We start with showing that the vertices of $P_M(f)$ can be encoded in
not too many bits. This is needed in the proof of
Theorem~\ref{diam:th:coNP}.

Let $m$ be a positive integer. Given a set of vector $X \subseteq
\mathbb{R}^m$ we use $\text{conv}(X)$ to denote the convex hull of $X$
and $\text{cone}(X)$ to denote
\[
\{ \lambda_1 x_1 + \ldots + \lambda_t x_t \mid t \in \mathbb{N}, x_1, \ldots, x_t \in X, \lambda_1, \ldots, \lambda_t \geq 0 \} .
\]
\begin{definition}[Facet- and vertex-complexity~\cite{ellips-book}]
Let $P \subseteq \mathbb{R}^m$ be a polyhedron and let $\phi$ and
$\nu$ be positive integers.
\begin{itemize}
  \item $P$ has \emph{facet-complexity at most $\phi$} if there exists
  a system of linear inequalities with rational coefficients that has
  solution set $P$ and such that any equation can be encoded with at
  most $\phi$ bits. If $P = \mathbb{R}^m$ we require that $\phi \geq
  m+1$.

  \item $P$ has \emph{vertex-complexity at most $\nu$} if there exist
  finite sets $V$ and $E$ of rational vectors such that $P =
  \text{conv}(V) + \text{cone}(E)$ and such that each vector in $V$
  and $E$ can be encoded with at most $\nu$ bits. If $P = \emptyset$
  we require that $\nu \geq m$.
\end{itemize}
\end{definition}

\begin{lemma}[Part of Lemma~6.2.4 in~\cite{ellips-book}] \label{diam:lem:facet-vertex-compl}
  Let $P \subseteq \mathbb{R}^m$ be a polyhedron. If $P$ has
    facet-complexity at most $\phi$, then $P$ has vertex-complexity at
    most $4m^2 \phi$.
\end{lemma}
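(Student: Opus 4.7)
The plan is to extract an explicit generating set $(V,E)$ with $P=\text{conv}(V)+\text{cone}(E)$ from the given facet description of $P$ and to bound the bit-length of every vector in that set using Cramer's rule and Hadamard's inequality.

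First I would fix a facet description $Ax\leq b$ of $P$ in which each row is encodable in at most $\phi$ bits; after clearing denominators (whose size is absorbed in $\phi$) the matrices $A$ and $b$ may be taken to have integer entries of absolute value at most $2^{\phi}$. By the Motzkin decomposition theorem, $P$ can be written as $\text{conv}(V_0)+\text{cone}(E_0)$, where $V_0$ is the set of minimal faces (vertices, when $P$ is pointed; otherwise translates within each minimal face) and $E_0$ is a finite set of generators of $\text{char.cone}\,P$ (obtained from extreme rays plus a basis of the lineality space and its negatives). Each element of $V_0\cup E_0$ can be chosen to be the unique solution of a square linear system extracted from rows of $[A\mid b]$: for a vertex, one takes $m$ linearly independent rows that are tight at the vertex and solves the corresponding equality system; for a ray of the characteristic cone, one takes $m-1$ linearly independent tight rows and normalises against an additional unit equation.

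Next I would bound the encoding length of a typical solution of such a system. By Cramer's rule, every coordinate is a ratio of two $m\times m$ determinants whose entries come from $A$, $b$, or a standard basis vector, hence are integers of absolute value at most $2^{\phi}$. Hadamard's inequality then gives
\[
|\det(M)| \;\leq\; \prod_{i=1}^{m}\|M_i\|_2 \;\leq\; \bigl(\sqrt{m}\cdot 2^{\phi}\bigr)^{m} \;=\; 2^{\,m\phi+\frac{1}{2}m\log m},
\]
so each numerator and each (nonzero) denominator has bit-length at most $m\phi+\tfrac{1}{2}m\log m+O(1)$. Each coordinate of a generator thus has bit-length $O(m\phi)$, and an entire generator (with $m$ coordinates) fits in bit-length $O(m^{2}\phi)$; the explicit constants work out to $4m^{2}\phi$ after using $\phi\geq m+1$ to absorb the $\log m$ and additive overhead.

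The main obstacle is a careful treatment of the degenerate cases: empty $P$, $P=\mathbb{R}^{m}$, unbounded $P$ with nontrivial lineality space, and highly degenerate vertices where more than $m$ inequalities are simultaneously tight. In each such case one has to select a rank-$m$ (or rank-$(m-1)$) subsystem among the tight rows and argue that the resulting finite families $V$ and $E$ still satisfy $P=\text{conv}(V)+\text{cone}(E)$; the convention $\nu\geq m$ and $\phi\geq m+1$ built into the definitions handles the trivial cases. Once this bookkeeping is in place, the Cramer/Hadamard bound applies uniformly to every generator, yielding vertex-complexity at most $4m^{2}\phi$ as required.
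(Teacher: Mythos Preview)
The paper does not prove this lemma at all; it is quoted verbatim as part of Lemma~6.2.4 of Gr\"otschel--Lov\'asz--Schrijver and used as a black box. Your sketch is precisely the standard argument given in that reference: express each vertex and each extreme ray of the Motzkin decomposition as the unique solution of a nonsingular $m\times m$ subsystem drawn from the facet inequalities, bound the resulting numerators and denominators via Cramer's rule and Hadamard's inequality, and absorb the lower-order $m\log m$ term using the convention $\phi\geq m+1$. The outline is correct.

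One small point worth tightening: be careful with the phrase ``entries of absolute value at most $2^{\phi}$.'' In the GLS encoding, $\phi$ bounds the \emph{total} bit-length of a single inequality (all $m+1$ rational entries together), not the bit-length of each entry separately. The determinant bound still goes through---indeed one gets the slightly sharper estimate $|\det M|\leq 2^{m\phi}$ directly---but the way you phrased it suggests each entry individually could be as large as $2^{\phi}$, which would inflate the constant. With the correct bookkeeping the $4m^{2}\phi$ bound falls out exactly as in the book.
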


\begin{lemma} \label{diam:lem:vertex-compl-PD}
  There is a constant $c$ such that for any submodular $f : \alg{M}^n
  \rightarrow \mathbb{Z}$ the polyhedron $P_M(f)$ has
  vertex-complexity at most
  \[
   c \cdot |A| n^3 \cdot \log \max(|f|) .
  \]
\end{lemma}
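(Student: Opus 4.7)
The plan is to bound the facet-complexity of $P_M(f)$ directly from its definition, and then invoke Lemma~\ref{diam:lem:facet-vertex-compl} to translate this into a vertex-complexity bound. The dimension of the ambient space is $m = n |A|$, since vectors in $\mathbb{R}^{[n] \times A}$ are indexed by $[n] \times A$.

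First I would recall the defining inequalities of $P_M(f)$. Each tuple $\tup{t} \in \alg{M}^n$ gives rise to one inequality $\langle \tup{e}, \tup{x} \rangle \leq f(\tup{t})$ for every $\tup{e} \in I(\tup{t})$. By the definition of $g$, every such $\tup{e}$ is a 0/1 vector: on coordinates $i$ with $\tup{t}(i) = 0_\alg{M}$ the entries are zero, on coordinates with $\tup{t}(i) \in A$ exactly the entry $(i, \tup{t}(i))$ is $1$, and on coordinates with $\tup{t}(i) = 1_\alg{M}$ a pair of entries $(i,a), (i,a')$ with $a \neq a'$ is $1$. The right-hand side $f(\tup{t})$ is an integer with $|f(\tup{t})| \leq \max(|f|)$. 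Therefore each such inequality can be encoded with at most $O(n |A| + \log \max(|f|))$ bits; in particular, $P_M(f)$ has facet-complexity at most $\phi = c_1 \cdot (n |A| + \log \max(|f|))$ for some absolute constant $c_1$.

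Next I would apply Lemma~\ref{diam:lem:facet-vertex-compl} with $m = n|A|$ and this value of $\phi$. This yields the vertex-complexity bound
\[
4 m^2 \phi \;=\; 4 (n|A|)^2 \cdot c_1 \cdot (n|A| + \log \max(|f|)) \;\leq\; c \cdot |A| n^3 \log \max(|f|)
\]
after absorbing constants into $c$ (and using that we may assume $\max(|f|) \geq 2$, since otherwise the bound is trivial; also the $|A|^3 n^3$ term can be absorbed into $|A| n^3 \log \max(|f|)$ up to constants for any fixed treatment of the size parameters).

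The only real content is the observation that the coefficients of the defining inequalities are 0/1, which keeps the facet-complexity linear in $n|A|$ and $\log \max(|f|)$; everything else is a direct application of Lemma~\ref{diam:lem:facet-vertex-compl}. There is no substantial obstacle — the care required is only in keeping track of which quantities contribute what to the encoding size.
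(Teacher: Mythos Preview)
Your proposal is correct and follows essentially the same approach as the paper: bound the facet-complexity of $P_M(f)$ directly from the definition and then apply Lemma~\ref{diam:lem:facet-vertex-compl} with $m = n|A|$. If anything, you give more detail than the paper does, by explicitly noting that the coefficient vectors $\tup{e} \in I(\tup{t})$ are $0/1$-valued and hence each inequality encodes in $O(n|A| + \log\max(|f|))$ bits; the paper simply asserts the facet-complexity bound $c\cdot |A|n\cdot\log\max(|f|)$ and invokes the lemma.
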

\begin{proof}
  From the definition of $P_M(f)$ it follows that $P_M(f)$ has
  facet-complexity at most $c \cdot |A| n \cdot \log \max(|f|)$.  for
  some constant $c$. The lemma now follows from
  Lemma~\ref{diam:lem:facet-vertex-compl}. \qed
\end{proof}

Lemma~\ref{diam:lem:vertex-compl-PD} tells us that the vertices of $P_M(f)$
can be encoded with not too many bits (that is, the size is bounded by
a polynomial in $n$ and $\log \max(|f|)$). We are now ready to prove
the main theorem in this section, that $\alg{M}$ is well-characterised.
\begin{theorem} \label{diam:th:coNP}
  For every $k \geq 3$ the lattice $\alg{M}_k$ is well-characterised.
\end{theorem}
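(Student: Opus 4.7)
The plan is to exhibit short certificates for the equality $\min_{\tup{x} \in \alg{M}_k^n} f(\tup{x}) = m$ that can be verified in time polynomial in $n$ and $\max_{\tup{t}} \log |f(\tup{t})|$. First I would normalise so that $f(\tup{0}_{\alg{M}^n}) = 0$ (shifting $m$ accordingly). The certificate for $\min f \leq m$ is simply a tuple $\tup{x}^\star \in \alg{M}_k^n$ with $f(\tup{x}^\star) \leq m$, verifiable by a single oracle call.

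For the reverse inequality, Theorem~\ref{diam:th:minmax} guarantees an integer-valued unified vector $\tup{z} \in P_M(f)$ with $\tup{z} \leq 0$ and $\tup{z}(\tup{1}_{\alg{M}^n}) = m$. To keep this witness polynomially bounded I would apply Theorem~\ref{diam:th:caratheodory} and decompose
\[
\tup{z} = \sum_{i=1}^{N+1} \lambda_i \tup{y}_i + \tup{c},
\]
where the $\tup{y}_i$ are vertices of $P_M(f)$, $\lambda_i \geq 0$, $\sum_i \lambda_i = 1$, and $\tup{c}$ is a nonpositive vector in the characteristic cone of $P_M(f)$. By Lemma~\ref{diam:lem:vertex-compl-PD} each $\tup{y}_i$ has encoding polynomial in $n$ and $\log \max |f|$; the $\lambda_i$ and $\tup{c}$ can likewise be taken with polynomial bit-complexity.

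The nontrivial task is confirming $\tup{y}_i \in P_M(f)$ for each $i$, since $P_M(f)$ has exponentially many defining inequalities. Here the prover additionally supplies, for each vertex, a chain of $\tup{y}_i$-tight tuples
\[
\tup{0}_{\alg{M}^n} = \tup{a}_0^{(i)} \lattl \tup{a}_1^{(i)} \lattl \cdots \lattl \tup{a}_{l_i}^{(i)} = \tup{1}_{\alg{M}^n},
\]
with at most one coordinate going from $0_\alg{M}$ to $1_\alg{M}$ between consecutive pairs. Such a chain exists: since $\tup{y}_i$ is a vertex, it is the unique maximiser of some linear functional over $P_M(f)$, and Lemma~\ref{diam:lem:atmost-one-01} supplies the single-jump bound whenever no tight tuple sits strictly between two successive ones; moreover, a dimension count on the $n|A|$ linearly independent tight inequalities defining the vertex forces the join of all tight tuples to equal $\tup{1}_{\alg{M}^n}$, which by Lemma~\ref{diam:lem:tight} (closure of tight tuples under joins) is therefore itself tight. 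The chain has length at most the height $2n$ of $\alg{M}^n$, and the verifier inductively applies Lemma~\ref{diam:lem:poly-verify} with $k = 1$ along the chain, certifying $\tup{y}_i(\tup{y}) \leq f(\tup{y})$ for all $\tup{y} \lattleq \tup{a}_j^{(i)}$ in time polynomial in $n$ per step.

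Once every vertex is certified, convexity of $g(\cdot, y)$ in its first argument (visible by inspecting the three cases in the definition of $g$) gives $\sum_i \lambda_i \tup{y}_i \in P_M(f)$, and adding the componentwise-nonpositive $\tup{c}$ preserves membership, so $\tup{z} \in P_M(f)$. Together with $\tup{z} \leq 0$ and $\tup{z}(\tup{1}_{\alg{M}^n}) = m$, Theorem~\ref{diam:th:minmax} then closes the lower bound. The principal obstacle is the chain-existence argument—in particular, proving that $\tup{1}_{\alg{M}^n}$ is always tight at any vertex of $P_M(f)$—which reduces to the dimension-counting observation sketched above combined with Lemma~\ref{diam:lem:tight}.
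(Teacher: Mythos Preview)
Your proposal is correct and follows essentially the same route as the paper: a minimising tuple certifies $\min f \leq m$, while the reverse inequality is certified by the unified non-positive vector $\tup{z}$ from Theorem~\ref{diam:th:minmax}, decomposed via Theorem~\ref{diam:th:caratheodory} into vertices of $P_M(f)$, each of which is verified to lie in $P_M(f)$ by supplying a chain of tight tuples and applying Lemma~\ref{diam:lem:poly-verify} inductively along it. The only cosmetic differences are that the paper has the verifier \emph{compute} the coefficients $\lambda_i$ and the characteristic-cone vector by solving the linear system (rather than have the prover supply them), and that you justify $\sum_i \lambda_i \tup{y}_i \in P_M(f)$ via convexity of $g(\cdot, y)$ whereas the paper simply uses that $P_M(f)$ is a polyhedron; both are immediate. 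Your explicit dimension-count argument for why $\tup{1}_{\alg{M}^n}$ is tight at every vertex is in fact more detailed than the paper, which records this as an ``observation'' without proof.
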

As usual we let $\alg{M}$ denote an arbitrary diamond. The idea in the proof
is that any point in $P_M(f)$ can be represented as a convex
combination of at most $n|A|+1$ vertices of $P_M(f)$ (this is
Carath{\'e}odory's theorem). Furthermore, by
Lemma~\ref{diam:lem:atmost-one-01} and an iterated use of
Lemma~\ref{diam:lem:poly-verify} there are membership proofs for the
vertices of $P_M(f)$ which can be checked in polynomial time. Hence,
we get membership proofs for all of $P_M(f)$ which can be checked
efficiently and by Theorem~\ref{diam:th:minmax} we obtain the result.

\begin{proof}
  Let $f : \alg{M}^n \rightarrow \mathbb{Z}$ be a submodular function and let
  $m$ be some integer. We will show that if $\min_{\tup{t} \in \alg{M}^n}
  f(\tup{t}) = m$, then there is a proof of this fact which can be
  checked in time polynomial in $n$.

  We can assume that $f(\tup{0}_{\alg{M}^n}) = 0$ as $\tup{t} \mapsto
  f(\tup{t}) - f(\tup{0}_{\alg{M}^n})$ is submodular.
  Let $N = n \cdot |A|$. The proof consists of a tuple $\tup{m} \in
  \alg{M}^n$, $N+1$ vectors $\tup{x_1}, \ldots, \tup{x_{N+1}} \in
  \mathbb{R}^{[n] \times A}$, for each $i \in [N+1]$ a sequence
  $\tup{t_i^1}, \ldots, \tup{t_i^{2n}} \in \alg{M}^n$ of tuples, and finally
  an integer-valued vector $\tup{c} \in \mathbb{R}^{[n] \times A}$.
  To verify the proof we first find $\tup{\lambda} = (\lambda_1,
  \ldots, \lambda_{N+1}) \in \mathbb{R}^{N+1}$ and $\tup{y} \in
  \mathbb{R}^{[n] \times A}$ such that
  \begin{align}
    \tup{y} \leq 0, \quad
    \sum_{i=1}^{N+1} \lambda_i \tup{x_i} + \tup{y} = \tup{c}, \quad
    \tup{\lambda} \geq 0, \quad \text{ and } \quad
    \sum_{i=1}^{N+1} \lambda_i = 1 . \label{diam:eq:syseq-verify}
  \end{align}
  This can be done in time polynomial in $n$. Reject the proof if
  there are no solutions to~\eqref{diam:eq:syseq-verify}. We proceed by
  checking that for each $i$
  \begin{itemize}
    \item $\tup{0}_{\alg{M}^n} = \tup{t_i^1} \lattleq \tup{t_i^2} \lattleq
    \ldots \lattleq \tup{t_i^{2n-1}} \lattleq \tup{t_i^{2n}} =
    \tup{1}_{\alg{M}^n}$, and

    \item $\tup{t_i^1}, \ldots, \tup{t_i^{2n}}$ are $\tup{x_i}$-tight, and

    \item for any $j \in [2n-1]$ there it at most one coordinate $l \in
      [n]$ such that $\tup{t_i^j}(l) = 0_\alg{M}$ and $\tup{t_i^{j+1}}(l) =
      1_\alg{M}$.
  \end{itemize}
  Reject the proof if any of these checks fail. We now want to verify
  that $\tup{x_i} \in P_M(f)$, this can be done by using
  Lemma~\ref{diam:lem:poly-verify} repeatedly. For $j = 1, 2, \ldots, 2n-1$
  we use the algorithm in Lemma~\ref{diam:lem:poly-verify} with $\tup{a} =
  \tup{t_i^j}$ and $\tup{b} = \tup{t_i^{j+1}}$. If all invocations of
  the algorithm succeeds we can conclude that $\tup{x_i} \in P_M(f)$,
  otherwise the proof is rejected. Finally, compute
  \[
  \tup{c} = \sum_{i=1}^{N+1} \lambda_i \tup{x_i} + \tup{y}
  \]
  and accept the proof if $\tup{c} \leq 0$, $\tup{c}$ is unified, and
  $\tup{c}(\tup{1_{\alg{M}^n}}) = f(\tup{m}) = m$.

  \medskip
  We now prove that this proof system is sound and complete.

  \flb{Completeness} (That is, if $m = \min_{\tup{y} \in \alg{M}^n}
  f(\tup{y})$ then there is a proof which the verifier accept.)
  By Theorem~\ref{diam:th:minmax} there is a unified integer-valued vector
  $\tup{c}$ such that $\tup{c} \in P_M(f)$, $\tup{c} \leq 0$ and $m =
  \tup{c}(\tup{1_{\alg{M}^n}})$. By Theorem~\ref{diam:th:caratheodory} there are vectors
  $\tup{x_1}, \ldots, \tup{x_{N+1}}$ such that for each $i \in [N+1]$
  $\tup{x_i}$ is a vertex of $P_M(f)$ and $\tup{c}$ is the sum of a
  convex combination of $\tup{x_1}, \ldots, \tup{x_{N+1}}$ with
  coefficients $\tup{\lambda} = (\lambda_1, \ldots, \lambda_{N+1})$
  and some vector $\tup{y} \in \text{char.cone}\ P_M(f)$, hence
  $\tup{\lambda}, \tup{y}$ is a solution to~\eqref{diam:eq:syseq-verify}.


  As for each $i \in [N+1]$ the vector $\tup{x_i}$ is a vertex of
  $P_M(f)$ it follows from Lemma~\ref{diam:lem:atmost-one-01} (and the
  observation that $\tup{0_{\alg{M}^n}}$ and $\tup{1_{\alg{M}^n}}$ are
  $\tup{x_i}$-tight) that there is a sequence of tuples $\tup{t_i^1},
  \tup{t_i^2}, \ldots, \tup{t_i^{2n}}$ such that $\tup{0_{\alg{M}^n}} =
  \tup{t_i^1} \lattleq \tup{t_i^2} \lattleq \ldots \lattleq
  \tup{t_i^{2n}} = \tup{1_{\alg{M}^n}}$ and for each $j \in [2n-1]$ there is
  at most one $l \in [n]$ such that $\tup{t_i^j}(l) = 0_\alg{M}$ and
  $\tup{t_i^{j+1}}(l) = 1_\alg{M}$.
  It follows that this proof is accepted by the verifier.

  \flb{Soundness} (That is, if there is a proof which the verifier
  accepts, then $m = \min_{\tup{y} \in \alg{M}^n} f(\tup{y})$.)
  As the verifier accepted the proof it follows from
  Lemma~\ref{diam:lem:poly-verify} that $\tup{x_i} \in P_M(f)$ for each $i
  \in [N+1]$. As $\tup{\lambda}$ and $\tup{y}$ is a solution
  to~\eqref{diam:eq:syseq-verify} it follows that $\tup{c} \in P_M(f)$ (it
  is a sum of a convex combination of some vectors contained in
  $P_M(f)$ and a vector in $\text{char.cone}\ P_M(f)$).
  From the acceptance of the verifier it also follows that $\tup{c}
  \geq 0$, $\tup{c}$ is unified, and $m = f(\tup{m}) =
  \tup{c}(\tup{1_{\alg{M}^n}})$. It now follows from Theorem~\ref{diam:th:minmax}
  that $m = \min_{\tup{y} \in \alg{M}^n} f(\tup{y})$. \qed
\end{proof}

In the proof system above, instead of letting the verifier
solve~\eqref{diam:eq:syseq-verify} we could have required that
$\tup{\lambda}$ and $\tup{y}$ are given in the proof. However, it is
not obvious that $\tup{\lambda}$ and $\tup{y}$ can be encoded in
$O(n^{k+c})$ bits (for some constant $c$). This follows from the
approach taken above by the fact that there are polynomial-time
algorithms for finding solutions to systems of linear inequalities and Lemma~\ref{diam:lem:vertex-compl-PD}.

Note that the vectors given in the proof do not need to be vertices of
$P_M(f)$. However, by using the tight tuples and by repeatedly using
Lemma~\ref{diam:lem:poly-verify} we can verify that the given vectors are
in fact contained in $P_M(f)$ anyway. By Lemma~\ref{diam:lem:atmost-one-01}
vectors and tight tuples always exist which satisfies the conditions
above (namely, if we chose some appropriate vertices of $P_M(f)$).

The following lemma, which uses Lemma~\ref{diam:lem:poly-verify}
essentially as we use it in Theorem~\ref{diam:th:coNP}, will be useful to
us in Section~\ref{diam:sec:find-min}.

\begin{lemma} \label{diam:lem:opt-tight-chain}
  Let $k$ be some fixed positive integer. Let $f : \alg{M}^n \rightarrow
  \mathbb{Z}$ be submodular and let $\tup{x}$ be a vector in
  $P_M(f)$. Let $\tup{t_1}, \ldots, \tup{t_m} \in \alg{M}^n$ be
  $\tup{x}$-tight tuples such that $\tup{0_{\alg{M}^n}} = \tup{t_1} \lattl
  \ldots \lattl \tup{t_m} = \tup{1_{\alg{M}^n}}$ and for each $j \in [m-1]$
  there is at most $k$ distinct $i_1, i_2, \ldots, i_k \in [n]$ such
  that $\tup{t_j}(i_1) = \tup{t_j}(i_2) = \ldots = \tup{t_j}(i_k) =
  0_\alg{M}$ and $\tup{t_{j+1}}(i_1) = \tup{t_{j+1}}(i_2) = \ldots =
  \tup{t_{j+1}}(i_k) = 1_\alg{M}$.

  For $i \in [m]$ let $E_i \subseteq I(\tup{t_i})$ such that $\tup{e}
  \in E_i$ if and only if $\langle \tup{e}, \tup{x} \rangle =
  f(\tup{t_i})$. Given $\tup{c} \in \mathbb{Q}^{[n] \times A}$,
  $\tup{x}$, and $\tup{t_1}, \ldots, \tup{t_m}$ it is possible to
  compute $\max \langle \tup{c}, \tup{y} \rangle$ subject to $\tup{y}
  \in P_M(f)$ and $\langle \tup{e}, \tup{y} \rangle = f(\tup{t_i})$
  for all $i \in [m]$ and $\tup{e} \in E_i$ in time polynomial in $n$,
  $\log \max(|f|)$ and the encoding length of $\tup{c}$.
\end{lemma}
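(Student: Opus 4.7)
The plan is to solve the optimisation via the Ellipsoid algorithm, invoking the standard equivalence between polynomial-time separation and polynomial-time optimisation~\cite{ellips-book}. Let me set
\[
Q = \{ \tup{y} \in P_M(f) \mid \langle \tup{e}, \tup{y} \rangle = f(\tup{t_i}) \text{ for all } i \in [m],\ \tup{e} \in E_i \} .
\]
The goal is then to establish (i) a polynomial bound on the facet-complexity of $Q$, and (ii) a polynomial-time separation oracle for $Q$; the lemma follows from the cited equivalence.

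For (i), $P_M(f)$ itself has facet-complexity polynomial in $n$ and $\log \max |f|$ by the argument in the proof of Lemma~\ref{diam:lem:vertex-compl-PD}, so I only need to describe the equality part of $Q$ by polynomially many linear constraints of polynomial encoding length. For every $i$ and every coordinate $j$ with $\tup{t_i}(j) = 1_\alg{M}$, let $E_{i,j}$ denote the set of unordered atom pairs $(a,a')$ attaining the maximum in $g(\tup{x}(j), 1_\alg{M})$. Every $\tup{e} \in E_i$ is obtained by picking one pair from each $E_{i,j}$ and adjoining the forced contributions from coordinates where $\tup{t_i}(j) \in A$. Hence ``$\langle \tup{e}, \tup{y}\rangle = f(\tup{t_i})$ for every $\tup{e} \in E_i$'' is equivalent to the polynomial-sized family: for every relevant $(i,j)$ the value $\tup{y}(j, a) + \tup{y}(j, a')$ is the same for all $(a,a') \in E_{i,j}$, together with one equation per $i$ fixing the total to $f(\tup{t_i})$.

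For (ii), given a query point $\tup{y}$, I would first test each of the explicit equalities above; any violation yields an immediate separating hyperplane. If all equalities hold, each $\tup{t_i}$ is $\tup{y}$-tight, and I then apply Lemma~\ref{diam:lem:poly-verify} iteratively with $\tup{a} = \tup{t_j}$, $\tup{b} = \tup{t_{j+1}}$ for $j = 1, \ldots, m-1$ to verify $\tup{y} \in P_M(f)$. The base case $\tup{y}(\tup{0}_{\alg{M}^n}) = 0 \leq f(\tup{0}_{\alg{M}^n})$ is immediate (assuming $f(\tup{0}) = 0$ WLOG), and each invocation runs in time polynomial in $n$ since at most $k$ coordinates jump from $0_\alg{M}$ to $1_\alg{M}$ between consecutive chain tuples. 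If some invocation fails, the submodular-set-function minimisations driving the proof of Lemma~\ref{diam:lem:poly-verify} return a concrete tuple $\tup{s} \lattleq \tup{t_{j+1}}$ with $\tup{y}(\tup{s}) > f(\tup{s})$, and any $\tup{e} \in I(\tup{s})$ that witnesses $\tup{y}(\tup{s})$ supplies the separating inequality $\langle \tup{e}, \cdot \rangle \leq f(\tup{s})$.

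Combining (i) and (ii), the Ellipsoid algorithm returns the desired maximum in time polynomial in $n$, $\log \max |f|$ and the encoding length of $\tup{c}$. The main technical point is the constructive extraction of a violated inequality in (ii); this is precisely why Lemma~\ref{diam:lem:poly-verify} is formulated via submodular-set-function minimisation, since standard SFM solvers return a certifying minimiser whenever the minimum is negative.
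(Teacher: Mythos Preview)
Your proposal is correct and follows essentially the same route as the paper: build a separation oracle for $Q$ by first checking the equality constraints coming from the $E_i$'s and then invoking Lemma~\ref{diam:lem:poly-verify} inductively along the chain $\tup{t_1} \lattl \cdots \lattl \tup{t_m}$ to certify membership in $P_M(f)$, and finally appeal to the Ellipsoid equivalence of separation and optimisation.

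The one place where your write-up differs is the treatment of the equality constraints. The paper tests, for each coordinate $j$ with $\tup{t_i}(j) = 1_\alg{M}$, whether the set of $\tup{x}$-maximising atom pairs is contained in the set of $\tup{y}$-maximising pairs, and then checks a single equation. Your formulation---that the value $\tup{y}(j,a) + \tup{y}(j,a')$ be constant across $(a,a') \in E_{i,j}$, together with one equation per $i$---is the precise linear-algebraic description of ``$\langle \tup{e}, \tup{y}\rangle = f(\tup{t_i})$ for all $\tup{e} \in E_i$'' and makes the polynomial facet-complexity bound explicit; this is a clean way to phrase it. You also spell out the constructive extraction of a separating hyperplane when Lemma~\ref{diam:lem:poly-verify} fails, which the paper leaves implicit. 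One small point worth making explicit in your induction: after step $j$ you know $\tup{y}(\tup{t}) \leq f(\tup{t})$ for all $\tup{t} \lattleq \tup{t_{j+1}}$, and combined with the (non-empty) equality $E_{j+1}$ this forces $\tup{y}(\tup{t_{j+1}}) = f(\tup{t_{j+1}})$, so the $\tup{y}$-tightness hypothesis needed for the next invocation of Lemma~\ref{diam:lem:poly-verify} is indeed available.
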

Note that we do not require that the running time depend polynomially
on $k$.

\begin{proof}
  We construct a separation algorithm for the polyhedron
  \begin{align}
  \left\{ \tup{y} \in P_M(f) \mid \forall i \in [m], \tup{e} \in E_i: \langle \tup{e}, \tup{y} \rangle = f(\tup{t_i}) \right\} . \label{diam:eq:opt-tight-chain-poly}
  \end{align}
  The lemma then follows from the equivalence of separation and
  optimisation given by the Ellipsoid algorithm.

  Given a vector $\tup{y} \in \mathbb{R}^{[n] \times A}$ we first test
  that for all $i \in [m]$ and $\tup{e} \in E_i$ we have $\langle
  \tup{e}, \tup{y} \rangle = f(\tup{t_i})$. For each $i \in [m]$ we do
  this as follows: For each $j \in [n]$ such that $\tup{t_i}(j) = 1_\alg{M}$
  the set of pairs of atoms $a, b \in A$ such that $\tup{x}(j, a) +
  \tup{x}(j, b)$ is maximised must be a subset of the set of pairs of
  atoms $a', b' \in A$ such that $\tup{y}(j, a') + \tup{y}(j, b')$ is
  maximised. (Otherwise there is some $\tup{e} \in E_i$ such that
  $\langle \tup{x}, \tup{e} \rangle = f(\tup{t_i}) \neq \langle
  \tup{y}, \tup{e} \rangle$.) If this is the case then $\langle
  \tup{y}, \tup{e} \rangle = f(\tup{t_i})$ for \emph{all} $\tup{e} \in
  E_i$ if and only if $\langle \tup{y}, \tup{e} \rangle =
  f(\tup{t_i})$ for \emph{some} $\tup{e} \in E_i$.

  Note that this test can be done in polynomial time in $n$ and $\log
  \max(|f|)$ as $m \leq |A| \cdot n$. We can then use the algorithm in
  Lemma~\ref{diam:lem:poly-verify} to test if $\tup{y} \in P_M(f)$. By
  combining these two tests we have a separation oracle
  for~\eqref{diam:eq:opt-tight-chain-poly} and hence the lemma follows. \qed
\end{proof}

\section{Finding the Minimum Value} \label{diam:sec:find-min}

In this section we will show that there is an algorithm which finds
the minimum value of a submodular $f : \alg{M}^n \rightarrow \mathbb{Z}$ in
time polynomial in $n$ and $\max(|f|)$. Note that from an algorithm
which computes $\min_{\tup{t} \in \alg{M}^n} f(\tup{t})$ one can construct
an algorithm to find a minimiser of $f$, i.e., find a tuple $\tup{y}
\in \alg{M}^n$ such that $f(\tup{y}) = \min_{\tup{t} \in \alg{M}^n}
f(\tup{t})$. This can be done by for each $x \in \alg{M}$ minimising $f_x :
\alg{M}^{n-1} \rightarrow \mathbb{R}$ defined by $f_x(\tup{t}) = f(x,
\tup{t})$. If $\min_{\tup{t} \in \alg{M}^{n-1}} f_x(\tup{t}) = \min_{\tup{t}
\in \alg{M}^n} f(\tup{t})$, then there is a minimiser $\tup{y} \in \alg{M}^n$ to
$f$ such that $\tup{y}(1) = x$. By iterating this procedure $n$ times
one finds a minimiser of $f$.

We start with a high level description of the algorithm. The starting
point is the separation problem for $P_M(f)$ and the observation that
$\tup{0} \in P_M(f)$ if and only if $\min_{\tup{t} \in \alg{M}^n} f(\tup{t})
\geq 0$. Hence, given an algorithm for deciding if $\tup{0}$ is
contained in $P_M(f)$ we can apply a binary search strategy to find a
minimiser of $f$. (Note that for any $c \in \mathbb{R}$ the function
$f + c$ is submodular if $f$ is submodular.)

In each iteration $i$ of the algorithm we maintain an upper bound
$u_i$ and lower bound $l_i$ on $\min_{\tup{t} \in \alg{M}^n} f(\tup{t})$. If
$\tup{0} \in P_M(f - (u_i-l_i)/2)$ (note that $f - (u_i - l_i)/2$,
i.e., the function $f' : \alg{M}^n \rightarrow \mathbb{R}$ defined by
$f'(\tup{t}) = f(\tup{t}) - (u_i-l_i)/2$, is submodular if $f$ is
submodular), we iterate the algorithm with $u_{i+1} = u_i$ and
$l_{i+1} = (u_i - l_i)/2$. Otherwise, if $\tup{0} \not \in P_M(f)$, we
set $u_{i+1} = (u_i - l_i)/2$ and $l_{i+1} = l_i$. For an initial upper
bound we can use $u_1 = f(\tup{0}_{\alg{M}^n})$. To find a lower bound $l_1$
we can use Theorem~\ref{diam:th:Pf=Bf} together with the greedy algorithm
in Lemma~\ref{diam:lem:greedy-Bf}. The running time of this algorithm is
$O(S \cdot \log \max(|f|) + n)$, where $S$ is the time taken to decide
if $\tup{0} \in P_M(f)$.

By the equivalence of separation and optimisation given by the
Ellipsoid algorithm it is sufficient to solve the optimisation problem
for $P_M(f)$. (The results we will need which are related to the
Ellipsoid algorithm are given in Subsection~\ref{diam:sec:ellips}. We refer
the reader to~\cite{ellips-book} for an in-depth treatment of the
theory related to this topic.) In the optimisation problem we are
given $\tup{c} \in \mathbb{Q}^{[n] \times A}$ and are supposed to
solve $\max \langle \tup{c}, \tup{y} \rangle, \tup{y} \in P_M(f)$. To
get the running time we are aiming for we must do this in time
polynomial in $n$, $\max(|f|)$ and the encoding length of $\tup{c}$.

To solve this problem our algorithm starts with a vertex of $P_M(f)$
and either finds an adjacent vertex with a strictly better measure or
concludes that no such vertex exists. (This technique is called the
primal-dual method, see~\cite[Section~12.1]{linear-int-prog}.) The
initial vertex is found by the greedy algorithm in
Lemma~\ref{diam:lem:greedy-Bf}. To make this approach run in
pseudo-polynomial time two parts are needed. The first one is that the
existence of a pseudo-polynomial algorithm to go from vertex to a
better one or conclude that no such vertex exists. We present an
algorithm for this in Section~\ref{diam:sec:improve}. The other part is
that we must ensure that the algorithm makes enough progress in each
iteration so that we get the bound on the running time we are aiming
for. To this end, we prove in Section~\ref{diam:sec:half} that the vertices
of $P_M(f)$ are half-integral.

This section is organised as follows: in Subsection~\ref{diam:sec:ellips}
state some results we will need related to the Ellipsoid algorithm. In
Subsection~\ref{diam:sec:struct} we prove a couple of results of the
structure of the vertices of $P_M(f)$. We also show that a submodular
function can be turned into a strictly submodular function such that
any minimiser of the latter is also a minimiser of the former. This
will be useful to us in the subsequent parts of the algorithm. In
Subsection~\ref{diam:sec:half} we prove that the vertices of $P_M(f)$ are
half-integral. Finally, in Subsection~\ref{diam:sec:improve} we show how we
can go from one vertex of $P_M(f)$ to a better one (if there is one)
and how this can be used to construct an optimisation algorithm for
$P_M(f)$. \ignore{The oracle-pseudo-tractability for the diamonds is given in
Corollary~\ref{diam:cor:o-p-t}.}

\subsection{The Ellipsoid Algorithm} \label{diam:sec:ellips}

In this subsection we present some definitions and results which are
related to the Ellipsoid algorithm. They are all from the
book~\cite{ellips-book}. As in~\cite{ellips-book} we make the
general assumption that for any oracle $O$ there is an integer $c$
such that when $O$ is given input data of length $n$ the length of the
output is $O(n^c)$.

\begin{definition}[Oracle-polynomial time]
An algorithm $A$, with access to an oracle $O$, runs in
\emph{oracle-polynomial time} if there is an integer $c$ such that given
any input of length $n$ $A$ makes $O(n^c)$ calls to $O$ and performs
$O(n^c)$ additional primitive operations.
\end{definition}

This is definition~6.2.2c in~\cite{ellips-book}.
\begin{definition}[Well-described polyhedron]
A \emph{well-described polyhedron} is a triple $(P; n, \phi)$ where $P
\subseteq \mathbb{R}^n$ is a polyhedron with facet-complexity at most
$\phi$. The encoding length of $(P; n, \phi)$ is $\phi + n$.
\end{definition}

This is definition~6.2.1 in~\cite{ellips-book}.
\begin{definition}[Strong optimization problem]
Given a polyhedron $P \subseteq \mathbb{R}^n$ and a vector $\tup{c}
\in \mathbb{Q}^n$, either
\begin{itemize}
  \item assert that $P$ is empty, or
  \item find a vector $\tup{y} \in P$ maximising $\langle \tup{c}, \tup{x} \rangle$ over $P$, or
  \item find a vector $\tup{z} \in \text{char.cone}\ P$ such that $\langle \tup{c},
  \tup{z} \rangle \geq 1$.
\end{itemize}
\end{definition}

This is definition~2.1.4 from~\cite{ellips-book}.
\begin{definition}[Strong separation problem]
  Given a vector $\tup{y} \in \mathbb{R}^n$, decide whether $\tup{y}
  \in P$, and if not, find a hyperplane that separates $\tup{y}$ from
  $P$; more exactly, find a vector $\tup{c} \in \mathbb{R}^n$ such
  that $\langle \tup{c}, \tup{y} \rangle > \max \{ \langle \tup{c},
  \tup{x} \rangle \mid \tup{x} \in P \}$.
\end{definition}

This is a part of Theorem~6.4.9 in~\cite{ellips-book}.
\begin{theorem}
Let $(P; n, \phi)$ be a well-described polyhedron. The strong
separation problem and strong optimisation problem for $(P; n, \phi)$
can be solved in oracle-polynomial time given an oracle for the other
problem.
\end{theorem}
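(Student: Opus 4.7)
The plan is to prove the equivalence in both directions via the Ellipsoid method, leveraging the fact that given a separation oracle and a bound on facet-complexity, the Ellipsoid algorithm can in polynomial time decide emptiness of a well-described polyhedron and, when non-empty, produce a point in it.

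For \emph{separation $\Rightarrow$ optimization}: given $\tup{c} \in \mathbb{Q}^n$ and a separation oracle for $P$, I would first invoke the Ellipsoid algorithm to decide whether $P$ is empty; if so, I assert emptiness. Otherwise I binary search on a parameter $\delta$, asking at each step whether the polyhedron $P_\delta = P \cap \{\tup{x} \mid \langle \tup{c}, \tup{x} \rangle \geq \delta\}$ is non-empty. A separation oracle for $P_\delta$ is obtained in a single call to the oracle for $P$ plus a check of the single inequality $\langle \tup{c}, \tup{x} \rangle \geq \delta$. Using the vertex-complexity bound derived from $\phi$ via Lemma~\ref{diam:lem:facet-vertex-compl}, the optimum is a rational with denominator bounded by $2^{\text{poly}(n,\phi)}$, so polynomially many bisection steps pin it down exactly. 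If $\delta$ grows without bound, I detect unboundedness, and applying the Ellipsoid algorithm to $\{\tup{z} \in \text{char.cone}\ P \mid \langle \tup{c}, \tup{z} \rangle \geq 1\}$---whose separation oracle is derivable from the one for $P$ by translation---produces the required improving direction in the characteristic cone.

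For \emph{optimization $\Rightarrow$ separation}: given $\tup{y} \in \mathbb{Q}^n$, I must either certify $\tup{y} \in P$ or exhibit a separating hyperplane. After translating so that an interior point of $P$ sits at the origin, the separation problem for $P$ becomes an optimization problem over the polar polyhedron $P^{\ast}$, whose facet-complexity is polynomial in $n + \phi$ by standard polar-duality estimates (valid inequalities for $P$ correspond to points of $P^{\ast}$). An optimization oracle for $P^{\ast}$ can be assembled from the given optimization oracle for $P$, after which the preceding direction applied to $P^{\ast}$ yields the separating hyperplane. The non-full-dimensional case is handled by first identifying the affine hull of $P$ using calls that either succeed in optimizing or return a hyperplane containing all of $P$.

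The hard part will be the uniform handling of all three outcomes required by the strong optimization problem---$P$ empty, the optimum attained, or the objective unbounded---since the algorithm must distinguish them without knowing a priori which applies. The standard device is to perturb $\tup{c}$ by a rational amount whose size is controlled by $\phi$ (small enough to preserve the exact set of optimizers), run the Ellipsoid-based subroutines on auxiliary polyhedra obtained by lifting and shifting, and then read off from the Ellipsoid algorithm's termination which of the three certificates applies. A second subtlety is keeping the encoding length of all intermediate vectors polynomial throughout; this is where Lemma~\ref{diam:lem:facet-vertex-compl} is used repeatedly to bound the sizes of the rational numbers manipulated by the Ellipsoid algorithm.
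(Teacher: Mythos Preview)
The paper does not prove this theorem at all: it is stated as a quotation, introduced with ``This is a part of Theorem~6.4.9 in~\cite{ellips-book}'', and no argument is supplied. It is a standard result from Gr\"otschel--Lov\'asz--Schrijver's monograph, used here as a black box. So there is nothing in the paper to compare your proposal against.

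That said, your sketch is broadly in the spirit of the classical proof, but a few steps are looser than they look. For separation $\Rightarrow$ optimisation, plain bisection on $\delta$ does not by itself yield an \emph{exact} rational optimum in polynomial time; the standard argument needs simultaneous Diophantine approximation (or an equivalent rounding device) to convert an approximate optimum into an exact vertex, and the treatment of the unbounded case requires more than ``translation'' to obtain a separation oracle for the characteristic cone. For optimisation $\Rightarrow$ separation, the polarity reduction is the right idea, but you presuppose an interior point of $P$ to set up the polar, and obtaining one from an optimisation oracle alone is itself nontrivial (GLS handle this via a sequence of auxiliary problems and careful case analysis on dimension). You flag some of these issues in your final paragraph, but as written the proposal is a high-level outline rather than a proof; for the purposes of this paper, simply citing the GLS theorem---as the paper does---is the appropriate move.
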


\subsection{The Structure of the Vertices of $P_M(f)$} \label{diam:sec:struct}

The following lemma is stated in~\cite[Theorem~2.1]{combopt-schrijver}
for the boolean lattice. Essentially the same proof works for modular
lattices. We give a version of the lemma specialised to $\alg{M}^n$.
\begin{lemma} \label{diam:lem:schrijver-ineq}
Let $\tup{t}, \tup{u} \in \alg{M}^n$ such that $\tup{t} \not \lattleq
\tup{u}$ and $\tup{u} \not \lattleq \tup{t}$, then
\[
\rho(\tup{t})(2n - \rho(\tup{t})) + \rho(\tup{u})(2n - \rho(\tup{u})) >
\rho(\tup{t} \glb \tup{u}) (2n - \rho(\tup{t} \glb \tup{u})) + \rho(\tup{t} \lub \tup{u})(2n - \rho(\tup{t} \lub \tup{u}))
\]
\end{lemma}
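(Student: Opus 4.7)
The plan is to reduce the claimed inequality to a simple algebraic fact using the modularity of $\alg{M}^n$. Since $\alg{M}$ is modular with rank function $\rho$ (as noted in Section~\ref{diam:sec:prel}) and $\alg{M}^n$ is a direct product of modular lattices, $\alg{M}^n$ is itself modular, and its rank function (which we again call $\rho$) satisfies the modular equality
\[
\rho(\tup{t}) + \rho(\tup{u}) = \rho(\tup{t} \glb \tup{u}) + \rho(\tup{t} \lub \tup{u}) .
\]
Also, $\rho(\tup{1}_{\alg{M}^n}) = 2n$, which explains the factor $2n$ in the statement.

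Let me set $a = \rho(\tup{t})$, $b = \rho(\tup{u})$, $c = \rho(\tup{t} \glb \tup{u})$, and $d = \rho(\tup{t} \lub \tup{u})$. Then $a + b = c + d$. Moreover, since $\tup{t} \not\lattleq \tup{u}$ we have $\tup{t} \glb \tup{u} \lattl \tup{t}$ and hence $c < a$; symmetrically $c < b$ (and therefore $a < d$ and $b < d$).

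Writing $N = 2n$, the inequality to prove expands to
\[
(a+b)N - (a^2 + b^2) > (c+d)N - (c^2 + d^2) ,
\]
and using $a+b = c+d$ this reduces to the equivalent claim $c^2 + d^2 > a^2 + b^2$, i.e.\ $ab > cd$. The final step is a direct computation: since $c + d = a + b$, we have $d = a + b - c$, so
\[
ab - cd = ab - c(a + b - c) = ab - ca - cb + c^2 = (a - c)(b - c) ,
\]
and this is strictly positive because $a > c$ and $b > c$. This yields the lemma.

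There is no real obstacle here; the only point that requires care is invoking modularity of $\alg{M}^n$ to obtain $a + b = c + d$, after which the inequality collapses to the identity $ab - cd = (a-c)(b-c)$.
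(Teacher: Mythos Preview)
Your proof is correct and follows essentially the same route as the paper's: both invoke the modular law $\rho(\tup{t})+\rho(\tup{u})=\rho(\tup{t}\glb\tup{u})+\rho(\tup{t}\lub\tup{u})$ and reduce the inequality to the positivity of $(\rho(\tup{t})-\rho(\tup{t}\glb\tup{u}))(\rho(\tup{u})-\rho(\tup{t}\glb\tup{u}))$. The only difference is cosmetic---the paper parametrises by $\alpha=\rho(\tup{t}\glb\tup{u})$, $\beta=\rho(\tup{t})-\alpha$, $\gamma=\rho(\tup{u})-\alpha$, $\delta=2n-\rho(\tup{t}\lub\tup{u})$ and expands directly, whereas you first simplify to $ab>cd$ before factoring.
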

\begin{proof}
Let $\alpha = \rho(\tup{t} \glb \tup{u}), \beta = \rho(\tup{t}) -
\rho(\tup{t} \glb \tup{u})$, $\gamma = \rho(\tup{u}) - \rho(\tup{t}
\glb \tup{u})$, and $\delta = 2n - \rho(\tup{t} \lub \tup{u})$. Then
the LHS is equal to
\[
(\alpha + \beta)(\gamma + \delta) + (\alpha + \gamma)(\beta + \delta) = 2\alpha \delta + 2 \beta \gamma + \alpha \gamma + \beta \delta + \alpha \beta + \gamma \delta
\]
as $\rho$ is modular (i.e., $\rho(\tup{t}) + \rho(\tup{u}) =
\rho(\tup{t} \lub \tup{u}) + \rho(\tup{t} \glb \tup{u})$). The RHS is
equal to
\[
\alpha (\beta + \gamma + \delta) + (\alpha + \beta + \gamma) \delta = 2 \alpha \delta + \alpha \gamma + \beta \delta + \alpha \beta + \gamma \delta .
\]
Since $\beta \gamma > 0$ the lemma follows. \qed
\end{proof}

The lemma above tells us that the function $\tup{t} \mapsto
\rho(\tup{t})(2n - \rho(\tup{t}))$ is strictly submodular. Note that
if $f : \alg{M}^n \rightarrow \mathbb{R}$ is submodular, then $f$ can be
turned into a strictly submodular function $f' : \alg{M}^n \rightarrow
\mathbb{R}$ by $f'(\tup{t}) = f(\tup{t}) + \epsilon \rho(\tup{t}) (2n
- \rho(\tup{t}))$. Observe that if $\epsilon > 0$ is chosen small
enough then any minimiser of $f'$ is also a minimiser of $f$. Strictly
submodular functions are an interesting subset of the submodular
functions due to this observation and the following lemma.
\begin{lemma} \label{diam:lem:strict-chain}
Let $f : \alg{M}^n \rightarrow \mathbb{R}$ be strictly submodular and let
$\tup{x}$ be a vertex of $P_M(f)$. Then, the $\tup{x}$-tight tuples
form a chain.
\end{lemma}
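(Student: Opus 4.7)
The plan is a short proof by contradiction. Suppose that the $\tup{x}$-tight tuples do not form a chain, so there exist $\tup{x}$-tight tuples $\tup{a}, \tup{b} \in \alg{M}^n$ which are incomparable in the product order. I would aim to show that this contradicts strict submodularity, using exactly the chain of inequalities appearing in the proof of Lemma~\ref{diam:lem:tight}.

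First I would (re)verify the supermodularity inequality $\tup{x}(\tup{a}) + \tup{x}(\tup{b}) \leq \tup{x}(\tup{a} \glb \tup{b}) + \tup{x}(\tup{a} \lub \tup{b})$ used in Lemma~\ref{diam:lem:tight}; this reduces coordinate-wise to checking $g(x, y_1) + g(x, y_2) \leq g(x, y_1 \glb y_2) + g(x, y_1 \lub y_2)$ for $x \in \mathbb{R}^A$ and $y_1, y_2 \in \alg{M}$. All cases are trivial equalities except when $y_1, y_2$ are distinct atoms, and there the right-hand side equals $\max_{a \neq a'}(x(a) + x(a'))$, which is bounded below by $x(y_1) + x(y_2)$ since $(y_1, y_2)$ is one of the pairs entering the maximum.

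Combining supermodularity of $\tup{x}(\cdot)$, the assumption that $\tup{a}$ and $\tup{b}$ are $\tup{x}$-tight, and the inequalities $\tup{x}(\tup{t}) \leq f(\tup{t})$ for $\tup{t} \in \{\tup{a} \glb \tup{b}, \tup{a} \lub \tup{b}\}$ (which follow from $\tup{x} \in P_M(f)$), I obtain
\[
f(\tup{a}) + f(\tup{b}) = \tup{x}(\tup{a}) + \tup{x}(\tup{b}) \leq \tup{x}(\tup{a} \glb \tup{b}) + \tup{x}(\tup{a} \lub \tup{b}) \leq f(\tup{a} \glb \tup{b}) + f(\tup{a} \lub \tup{b}).
\]
But since $\tup{a}$ and $\tup{b}$ are incomparable, strict submodularity of $f$ yields the opposite strict inequality $f(\tup{a} \glb \tup{b}) + f(\tup{a} \lub \tup{b}) < f(\tup{a}) + f(\tup{b})$, a contradiction. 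There is really no obstacle here; the proof is essentially immediate once the supermodularity observation is made. I note in passing that the argument does not seem to use the vertex hypothesis at all: any $\tup{x} \in P_M(f)$ would do, and in particular the conclusion of the lemma holds for every point of $P_M(f)$ when $f$ is strictly submodular.
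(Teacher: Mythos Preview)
Your proof is correct and is essentially the paper's own argument: both derive a contradiction by combining the supermodularity of $\tup{x}(\cdot)$, the tightness of $\tup{a}$ and $\tup{b}$, and the strict submodular inequality for the incomparable pair. The only cosmetic difference is that the paper first invokes Lemma~\ref{diam:lem:tight} to conclude that $\tup{a}\glb\tup{b}$ and $\tup{a}\lub\tup{b}$ are also $\tup{x}$-tight and then gets $\tup{x}(\tup{a})+\tup{x}(\tup{b}) > \tup{x}(\tup{a}\glb\tup{b})+\tup{x}(\tup{a}\lub\tup{b})$, whereas you use $\tup{x}\in P_M(f)$ directly to bound $\tup{x}(\tup{a}\glb\tup{b})+\tup{x}(\tup{a}\lub\tup{b})\leq f(\tup{a}\glb\tup{b})+f(\tup{a}\lub\tup{b})$; your route is marginally more direct. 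Your closing remark is also correct: neither argument uses the vertex hypothesis, so the conclusion holds for every $\tup{x}\in P_M(f)$.
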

\begin{proof}
Assume, for the sake of contradiction, that $\tup{t}, \tup{u} \in \alg{M}^n$
are $\tup{x}$-tight and $\tup{t} \not \lattleq \tup{u}$ and $\tup{u}
\not \lattleq \tup{t}$. It follows that
\begin{align}
\tup{x}(\tup{t}) + \tup{x}(\tup{u}) = f(\tup{t}) + f(\tup{u}) > f(\tup{t} \glb \tup{u}) + f(\tup{t} \lub \tup{u}) = \tup{x}(\tup{u} \lub \tup{v}) + \tup{x}(\tup{u} \glb \tup{v}) . \label{diam:eq:strict-submod}
\end{align}
The last equality follows from the fact that the $\tup{x}$-tight
tuples are closed under $\glb$ and $\lub$. However,
\eqref{diam:eq:strict-submod} contradicts the supermodularity of
$\tup{x}$. \qed
\end{proof}

Lemma~\ref{diam:lem:strict-chain} tells us that, for strictly submodular
$f$, for each vertex $\tup{x}$ of $P_M(f)$ the set of $\tup{x}$-tight
tuples is a chain in $\alg{M}^n$. As the dimension of $P_M(f)$ is $|A|n$,
for every vertex $\tup{x}$ of $P_M(f)$ there are $|A|n$ linearly
independent inequalities which are satisfied with equality by
$\tup{x}$. This means that for every such $\tup{x}$ there is a chain
$\tup{t_1} \lattl \tup{t_2} \lattl \ldots \lattl \tup{t_m}$ in $\alg{M}^n$
and linearly independent vectors $\tup{e_1}, \tup{e_2}, \ldots,
\tup{e_{|A|n}}$ such that for each $i \in [|A|n]$ there is some $j(i)
\in [m]$ such that $\tup{e_i} \in I(\tup{t_{j(i)}})$ and for all $i
\in [|A|n]$ we have $\langle \tup{e_i}, \tup{x} \rangle =
f(\tup{t_{j(i)}})$. Furthermore, $\tup{x}$ is the only vector which
satisfies $\langle \tup{e_i}, \tup{x} \rangle = f(\tup{t_i})$ for all
$i \in [|A|n]$.

For general (not necessarily strict) submodular functions the set of
$\tup{x}$-tight tuples is not necessarily a chain, but one can prove
that for every vertex there is a chain of tuples such that some subset
of the inequalities induced by the tight tuples characterises the
vertex. That is, given the subset of inequalities induced by such a
chain of tight tuples there is only one point in $P_M(f)$ which
satisfies all the inequalities with equality. Formally we state this
as the following lemma.

\begin{lemma} \label{diam:lem:gen-chain}
Let $f : \alg{M}^n \rightarrow \mathbb{R}$ be submodular and let $\tup{x}$
be a vertex of $P_M(f)$. Then there is a chain $\tup{t_1} \lattl
\tup{t_2} \lattl \ldots \lattl \tup{t_m}$ in $\alg{M}^n$ and linearly
independent vectors $\tup{e_1}, \tup{e_2}, \ldots, \tup{e_{|A|n}}$,
which are all $\tup{x}$-tight, such that for each $i \in [|A|n]$ there
is some $j(i) \in [m]$ such that $\tup{e_i} \in I(\tup{t_{j(i)}})$.
\end{lemma}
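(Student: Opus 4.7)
The plan is to reduce the general submodular case to the strictly submodular case of Lemma~\ref{diam:lem:strict-chain} by perturbing $f$ to a strictly submodular function and then passing to the limit. Using the strictly submodular, non-negative function $g(\tup{t}) = \rho(\tup{t})(2n - \rho(\tup{t}))$ provided by Lemma~\ref{diam:lem:schrijver-ineq} (note $g(\tup{0}_{\alg{M}^n}) = 0$), the perturbation $f_\epsilon := f + \epsilon g$ is strictly submodular for every $\epsilon > 0$ and satisfies $P_M(f) \subseteq P_M(f_\epsilon)$. The first step is to fix a suitable direction $\tup{c} \in \mathbb{R}^{[n] \times A}$: let $N$ be the cone generated by the $\tup{x}$-tight vectors. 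Since $\tup{x}$ is a vertex, $N$ is full-dimensional, and because the tight vectors all lie in $\{0,1\}^{[n] \times A}$, $N \subseteq \{\tup{c} \geq \tup{0}\}$. I would choose $\tup{c}$ in the interior of $N$; then $\tup{c} > \tup{0}$ componentwise (any zero coordinate would admit a small negative perturbation while staying in $N$), and $\tup{x}$ is the unique maximizer of $\langle \tup{c}, \cdot\rangle$ over $P_M(f)$.

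For each $\epsilon > 0$, let $\tup{x}_\epsilon$ be a vertex of $P_M(f_\epsilon)$ that maximizes $\langle \tup{c}, \cdot\rangle$; the maximum is finite because $\tup{c} > \tup{0}$ makes $\langle \tup{c}, \cdot\rangle$ strictly negative on the recession cone $\{\tup{z} \leq 0\}$ of $P_M(f_\epsilon)$. A standard compactness argument then gives $\tup{x}_\epsilon \to \tup{x}$ as $\epsilon \to 0^+$: the vectors $\tup{x}_\epsilon$ remain uniformly bounded for $\epsilon \in (0,1]$, any subsequential limit lies in $\bigcap_{\delta > 0} P_M(f_\delta) = P_M(f)$ (using closedness of each $P_M(f_\delta)$ and $P_M(f_{\epsilon'}) \subseteq P_M(f_\delta)$ for $\epsilon' \leq \delta$), and the inequality $\langle \tup{c}, \tup{x}_\epsilon\rangle \geq \langle \tup{c}, \tup{x}\rangle$ combined with the uniqueness of $\tup{x}$ as maximizer forces the limit to be $\tup{x}$.

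Since each $f_\epsilon$ is strictly submodular, Lemma~\ref{diam:lem:strict-chain} together with the discussion following it supplies, for every $\epsilon > 0$, a chain $C_\epsilon = \{\tup{t}_1^\epsilon \lattl \ldots \lattl \tup{t}_{m_\epsilon}^\epsilon\}$ of $\tup{x}_\epsilon$-tight tuples such that $\bigcup_i I(\tup{t}_i^\epsilon)$ contains $|A|n$ linearly independent $\tup{x}_\epsilon$-tight vectors. Because $\alg{M}^n$ admits only finitely many chains and each $I(\tup{t})$ is finite, I would extract a subsequence $\epsilon_k \to 0$ along which both the chain $C_{\epsilon_k} = \{\tup{t}_1 \lattl \ldots \lattl \tup{t}_m\}$ and, for each $i$, the set $E_i \subseteq I(\tup{t}_i)$ of $\tup{x}_{\epsilon_k}$-tight vectors are independent of $k$. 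Taking $k \to \infty$ in the identity $\langle \tup{e}, \tup{x}_{\epsilon_k}\rangle = f(\tup{t}_i) + \epsilon_k g(\tup{t}_i)$ for each $\tup{e} \in E_i$ yields $\langle \tup{e}, \tup{x}\rangle = f(\tup{t}_i)$, so every $\tup{t}_i$ is $\tup{x}$-tight and every $\tup{e} \in E_i$ is an $\tup{x}$-tight vector in $I(\tup{t}_i)$. The $|A|n$ linearly independent vectors that $\bigcup_i E_i$ contains for each $\epsilon_k$ now serve as the required $\tup{e}_1, \ldots, \tup{e}_{|A|n}$.

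The main delicate step is the convergence $\tup{x}_\epsilon \to \tup{x}$, which is why $\tup{c}$ must be chosen in the interior of $N$: the containment $N \subseteq \{\tup{c} \geq \tup{0}\}$ automatically forces $\tup{c} > \tup{0}$, yielding a finite optimum and bounded optimizers on the perturbed polyhedra, while the interior condition guarantees that $\tup{x}$ is the unique candidate to which the $\tup{x}_\epsilon$ can accumulate.
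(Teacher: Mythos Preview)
Your proof is correct and follows the same overall strategy as the paper: perturb $f$ to the strictly submodular $f_\epsilon = f + \epsilon\,\rho(\cdot)(2n-\rho(\cdot))$, pick $\tup{c}$ so that $\tup{x}$ is the unique maximiser of $\langle \tup{c},\cdot\rangle$ over $P_M(f)$, take an optimal vertex $\tup{x}_\epsilon$ of $P_M(f_\epsilon)$, and use Lemma~\ref{diam:lem:strict-chain} to get a chain of tight tuples that is then shown to work for $\tup{x}$.

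The execution of the last step differs. The paper fixes a single sufficiently small $\epsilon$, lets $A$ be the matrix whose rows are the $|A|n$ tight vectors for $\tup{x}_\epsilon$, and argues directly (via two short claims) that $A^{-1}\tup{b} \in P_M(f)$ and $\langle \tup{c}, A^{-1}\tup{b}\rangle = \langle \tup{c}, \tup{x}\rangle$, whence $A^{-1}\tup{b} = \tup{x}$ by uniqueness. You instead let $\epsilon \to 0$, use compactness and the finiteness of chains in $\alg{M}^n$ to extract a subsequence with a constant chain and constant tight-vector sets, and pass the tightness equalities to the limit. Your route avoids the explicit linear-algebra manipulation and is arguably cleaner; the paper's route avoids the compactness machinery and the careful choice of $\tup{c}>\tup{0}$ needed to guarantee bounded optimisers. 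Both are standard ways to transfer vertex data across a perturbation, and neither is materially harder than the other.
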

\begin{proof}
Define $f'(\tup{t}) = f(\tup{t}) + \epsilon \rho(\tup{t})(2n -
\rho(\tup{t}))$ and choose $\epsilon > 0$ small. From
Lemma~\ref{diam:lem:schrijver-ineq} it follows that $f'$ is strictly
submodular.

Let $\tup{c} \in \mathbb{R}^{[n] \times A}$ such that $\tup{x}$ is the
unique optimum to $\max \langle \tup{c}, \tup{y} \rangle, \tup{y} \in
P_M(f)$. Let $\tup{x'}$ be a vertex of $P_M(f')$ which is an optimum
to $\max \langle \tup{c}, \tup{y} \rangle, \tup{y} \in P_M(f')$. From
Lemma~\ref{diam:lem:strict-chain} it follows that as $\tup{x'}$ is a vertex
of $P_M(f')$ there are $\tup{t_1} \lattl \tup{t_2} \lattl \ldots
\lattl \tup{t_m}$ and $\tup{x'}$-tight $\tup{e_1}, \ldots,
\tup{e_{n|A|}}$ as in the statement of the lemma. Let $\tup{e_1},
\ldots, \tup{e_{n|A|}}$ be the rows of the matrix $A$ and define
$\tup{b} = (f(\tup{t_{j(1)}}), \ldots, f(\tup{t_{j(m)}}))^T$ and
\[
\tup{\epsilon} = \epsilon \cdot (\rho(\tup{t_{j(1)}})(2n - \rho(\tup{t_{j(1)}})), \ldots,
                                 \rho(\tup{t_{j(m)}})(2n - \rho(\tup{t_{j(m)}})))^T .
\]
It follows that $\tup{x'} = A^{-1} (\tup{b} + \tup{\epsilon})$. We
proceed by establishing two claims.

\flb{Claim A. $A^{-1} \tup{b} \in P_M(f)$.}

To see this assume for the
sake of contradiction that $A^{-1} \tup{b} \not \in P_M(f)$, then
there is some $\tup{t} \in \alg{M}^n$ and $\tup{e} \in I(\tup{t})$ such that
$\tup{e} A^{-1} \tup{b} > f(\tup{t})$. However,
\begin{align}
\tup{e} \tup{x'} =
\tup{e} A^{-1} (\tup{b} + \tup{\epsilon}) =
\tup{e} A^{-1} \tup{b} + \tup{e} A^{-1} \tup{\epsilon} \leq
f(\tup{t}) + \epsilon \rho(\tup{t})(2n - \rho(\tup{t})) . \label{diam:eq:ineq-xp}
\end{align}
As $\tup{e} A^{-1} \tup{b} > f(\tup{t})$ we can choose some $\delta >
0$ such that $\tup{e} A^{-1} \tup{b} > f(\tup{t}) + \delta$. By
choosing $\epsilon$ so that $|\epsilon \rho(\tup{t})(2n -
\rho(\tup{t})) - \tup{e} A^{-1} \tup{\epsilon}| < \delta$ we get
\[
\tup{e} A^{-1} \tup{b} + \tup{e} A^{-1} \tup{\epsilon} > f(\tup{t}) + \epsilon \rho(\tup{t})(2n - \rho(\tup{t}))
\]
which contradicts~\eqref{diam:eq:ineq-xp}. We conclude that $A^{-1} \tup{b}
\in P_M(f)$. \qed

\flb{Claim B. $\langle A^{-1} \tup{b}, \tup{c} \rangle = \langle
\tup{x}, \tup{c} \rangle$.}

Assume, for the sake of contradiction,
that $\langle A^{-1} \tup{b}, \tup{c} \rangle < \langle \tup{x},
\tup{c} \rangle$. (The inequality $\leq$ follows from our choice of
$\tup{x}$ and Claim~A.) Note that $\tup{x} \in P_M(f) \subseteq
P_M(f')$. For sufficiently small $\epsilon$ we get
\[
\langle \tup{x'}, \tup{c} \rangle =
\langle A^{-1}(\tup{b} + \tup{\epsilon}), \tup{c} \rangle <
\langle \tup{x}, \tup{c} \rangle
\]
which contradicts the optimality of $\tup{x'}$. \qed

We have shown that for every vertex $\tup{x} \in P_M(f)$ there is some
vertex $\tup{x'} \in P_M(f')$ which satisfies some inequalities, given
by the matrix $A$, with equality. As $f'$ is strictly submodular it
follows from Lemma~\ref{diam:lem:strict-chain} that the $\tup{x'}$-tight
tuples form a chain. By Claim~A the inequalities in $A$ also defines a
point in $P_M(f)$. Furthermore, by Claim~B this point maximises
$\langle \tup{y}, \tup{c} \rangle$ over $P_M(f)$. By our choice of
$\tup{c}$ it follows that $A^{-1} \tup{b} = \tup{x}$. The lemma
follows. \qed
\end{proof}

\subsection{$P_M(f)$ is Half-integral} \label{diam:sec:half}
In this subsection we will prove that if $f : \alg{M}^n \rightarrow
\mathbb{Z}$ is submodular, then the vertices of $P_M(f)$ are
half-integral.

\begin{lemma} \label{diam:lem:vertex-unif}
Let $f : \alg{M}^n \rightarrow \mathbb{R}$ be submodular and let $\tup{x}$
be a vertex of $P_M(f)$. For each $i \in [n]$ there are three
possibilities
\begin{enumerate}
\item $\tup{x}(i, a) = \tup{x}(i, b)$ for all $a, b \in A$; or
\item there is exactly one atom $a' \in A$ such that $\tup{x}(i, a') > \min_{a \in A} \tup{x}(i, a)$; or
\item there is exactly one atom $a' \in A$ such that $\tup{x}(i, a') < \max_{a \in A} \tup{x}(i, a)$.
\end{enumerate}
\end{lemma}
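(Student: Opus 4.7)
The plan is to invoke Lemma~\ref{diam:lem:gen-chain} to obtain a chain $\tup{t_1} \lattl \tup{t_2} \lattl \ldots \lattl \tup{t_m}$ in $\alg{M}^n$ together with $|A|n$ linearly independent $\tup{x}$-tight vectors $\tup{e_1}, \ldots, \tup{e_{|A|n}}$, each lying in $I(\tup{t_{j(k)}})$ for some tuple in the chain. Viewing these vectors as the rows of an invertible $|A|n \times |A|n$ matrix, every $|A|$-subset of its columns is linearly independent; in particular, the $|A|$ columns indexed by $(i, a)$ for $a \in A$ are linearly independent, equivalently the coordinate-$i$ projections $\tup{e_k}(i, \cdot)$ span $\mathbb{R}^A$.

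The key structural observation is that the restriction $\tup{t_1}(i) \lattleq \ldots \lattleq \tup{t_m}(i)$ is a chain in $\alg{M}$, so at most one atom $a_i$ appears among the values $\tup{t_{j(k)}}(i)$. Consequently every projection $\tup{e_k}(i, \cdot)$ is either $\tup{0}$ (when $\tup{t_{j(k)}}(i) = 0_\alg{M}$), the standard basis vector $e_{a_i} \in \mathbb{R}^A$ (when $\tup{t_{j(k)}}(i) = a_i$), or a vector of the form $e_p + e_q$ for some pair $\{p, q\}$ attaining $\max_{p' \neq q'} \tup{x}(i, p') + \tup{x}(i, q')$ (when $\tup{t_{j(k)}}(i) = 1_\alg{M}$, since the $\tup{x}$-tightness of $\tup{e_k}$ forces the pair at coordinate $i$ to be a maximiser). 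Hence $\mathbb{R}^A$ must be spanned by a collection of such pair vectors together with at most one additional unit vector $e_{a_i}$.

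I then argue by contradiction. Suppose $\tup{x}(i, \cdot)$ avoids all three cases, and let $U_1 = \{a \in A : \tup{x}(i, a) = \max_b \tup{x}(i, b)\}$. A sub-case analysis on $|U_1|$ completes the proof. If $|U_1| = 1$ with unique maximiser $a^*$, the maximising pairs are precisely $\{\{a^*, b\} : b \in U_2\}$, where $U_2$ is the set of atoms attaining the second-largest value; the vectors $\{e_{a^*} + e_b : b \in U_2\}$ span a $|U_2|$-dimensional subspace, and adjoining the single available unit vector $e_{a_i}$ can raise this to at most $|U_2| + 1$, forcing $|U_2| = |A| - 1$ and hence Case~2 of the lemma. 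If $|U_1| \geq 3$, the identity $(e_a + e_b) + (e_a + e_c) - (e_b + e_c) = 2 e_a$ for distinct $a, b, c \in U_1$ shows that the pair vectors already span $\mathbb{R}^{U_1}$; adjoining $e_{a_i}$ yields at most $|U_1| + 1$ dimensions, forcing $|A \setminus U_1| \leq 1$ and therefore either Case~1 ($U_1 = A$) or Case~3 (the one non-maximal atom is the unique valley). If $|U_1| = 2$, the unique maximising pair is $U_1$ itself, so the pair vectors contribute only one dimension and the total is at most two, contradicting $|A| \geq 3$.

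The main conceptual hurdle is to show that the at-most-one-atom-per-coordinate property inherited from Lemma~\ref{diam:lem:gen-chain} is strong enough to force the three-case classification: without it, many unit vectors $e_a$ could potentially enter the span of tight projections, and no structural constraint on $\tup{x}(i, \cdot)$ would follow. Given this restriction, the remaining work is routine dimension counting for the spans of the star graph (when $|U_1| = 1$) and the complete graph on $U_1$ (when $|U_1| \geq 2$), the only delicate point being the separate handling of $|U_1| = 2$ where the pair span drops to a single dimension.
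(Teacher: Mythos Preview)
Your proof is correct and takes a genuinely different route from the paper's.

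The paper argues directly from the vertex property: it picks $\tup{c}$ so that $\tup{x}$ is the unique optimum of $\langle \tup{c},\cdot\rangle$ over $P_M(f)$, and then perturbs $\tup{x}$ along carefully chosen directions (e.g.\ $\tup{x}+\delta\tup{\chi}_{i,a}$ or $\tup{x}\pm\delta\tup{\chi}_{i,a}\mp\delta\sum_{b\in A_2}\tup{\chi}_{i,b}$) to force the existence of two $\tup{x}$-tight tuples $\tup{t_1},\tup{t_2}$ whose $i$-th coordinates are distinct atoms lying strictly below the top two values of $\tup{x}(i,\cdot)$. It then exploits submodularity of $f$ and the strict gap in $g(\tup{x}(i),1_\alg{M})$ to obtain $\tup{x}(\tup{t_1})+\tup{x}(\tup{t_2}) < \tup{x}(\tup{t_1}\lub\tup{t_2})+\tup{x}(\tup{t_1}\glb\tup{t_2})$, contradicting supermodularity of $\tup{x}(\cdot)$. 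No appeal to Lemma~\ref{diam:lem:gen-chain} is made.

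Your argument instead pulls the chain and the $|A|n$ tight inequality vectors from Lemma~\ref{diam:lem:gen-chain}, observes that tightness forces each $1_\alg{M}$-coordinate of an $\tup{e_k}$ to pick a maximising pair, and then reduces the lemma to a clean dimension count in $\mathbb{R}^A$: the coordinate-$i$ projections must span $\mathbb{R}^A$, yet they lie in the span of at most one unit vector $e_{a_i}$ together with the pair vectors $e_p+e_q$ over maximising pairs, and a short case split on $|U_1|$ bounds that span below $|A|$ unless one of the three alternatives holds. This is more linear-algebraic and arguably more transparent; it also shows that Lemma~\ref{diam:lem:vertex-unif} is already implicit in Lemma~\ref{diam:lem:gen-chain}, a dependence the paper does not exploit (it proves the two lemmas independently and only combines them later in Theorem~\ref{diam:th:half}). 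The paper's approach, on the other hand, is self-contained and highlights the submodularity/supermodularity tension directly.
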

\begin{proof}
  As $\tup{x}$ is a vertex of $P_M(f)$ there is a $\tup{c} \in
  \mathbb{R}^{[n] \times A}$ such that $\tup{x}$ is the unique optimum
  to $\langle \tup{c}, \tup{y} \rangle, \tup{y} \in P_M(f)$. As the
  optimum exist it follows that $\tup{c} \geq \tup{0}$. Assume, for
  the sake of contradiction, that there is a coordinate $i \in [n]$
  such that the statement of the lemma does not hold for $i$.  Let
  $A_1$ be the atoms $a' \in A$ which satisfies $\tup{x}(i, a') =
  \max_{a \in A} \tup{x}(i, a)$. Similarly, let $A_2$ be the atoms $a'
  \in A_2$ which satisfies $\tup{x}(i, a') = \max_{a \in A \setminus
  A_1} \tup{x}(i, a)$. Finally, let $A_3 = A \setminus (A_1 \cup
  A_2)$. We will first prove the following claim.

  \flb{Claim. There are distinct atoms $b, c \in A$ and
  $\tup{x}$-tight tuples $\tup{t_1}, \tup{t_2} \in \alg{M}^n$ such that
  $\tup{t_1}(i) = b$ and $\tup{t_2}(i) = c$, furthermore $b \in A_3$
  or $b, c \in A_2$.}

  If $a \in A_3$ let $\tup{x'} = \tup{x} + \delta \tup{\chi}(i, a)$
  for some small $\delta > 0$. As $\tup{x}$ is the unique optimum it
  follows that $\tup{x'} \not \in P_M(f)$ and hence there is an
  $\tup{x}$-tight tuple $\tup{t} \in \alg{M}^n$ such that $\tup{t}(i) =
  a$. So if $|A_3| \geq 2$, then the claim holds. Similarly, if $|A_1|
  \geq 2$, then any for any $a \in A \setminus A_1$ we get an
  $\tup{x}$-tight tuple $\tup{t}$ such that $\tup{t}(i) = a$. (Again
  this follows from considering the vector $\tup{x'} = \tup{x} +
  \delta \tup{\chi}(i, a)$.)

  So $|A_3| \leq 1$ and ($|A_1| = 1$ or $|A_1| \geq |A|-1$). If $|A_3| =
  0$ and ($|A_1| = 1$ or $|A_1| \geq |A|-1$), then the statement of the
  lemma holds, so we must have $|A_3| = 1$. This implies that $|A_1| =
  1$.

  Let $A_1 = \{a\}$. If $\tup{c}(i, a) \geq \sum_{b \in A_2} \tup{c}(i,
  b)$, then let $\tup{x'} = \tup{x} + \delta \tup{\chi}(i, a) - \delta
  \sum_{b \in A_2} \tup{\chi}(i, b)$ for some small $\delta > 0$. It
  follows that $\tup{x'} \not \in P_M(f)$ and hence there is an
  $\tup{x}$-tight tuple $\tup{t}$ with $\tup{t}(i) = a$. In the other
  case, when $\tup{c}(i, a) < \sum_{b \in A_2} \tup{c}(i, b)$, we let
  $\tup{x'} = \tup{x} - \delta \tup{\chi}(i, a) + \delta \sum_{b \in
  A_2} \tup{\chi}(i, b)$. It follows that there is an $\tup{x}$-tight
  tuple $\tup{t}$ with $\tup{t}(i) \in A_2$. \qed

  Let $b$ and $c$ be the atoms in the claim above and let $\tup{t_1}$
  and $\tup{t_2}$ be the $\tup{x}$-tight tuples in the claim.
%
%
  As $f$ is submodular we have
  \[
  f(\tup{t_1} \lub \tup{t_2}) + f(\tup{t_1} \glb \tup{t_2}) \leq f(\tup{t_1}) + f(\tup{t_2}) = \tup{x}(\tup{t_1}) + \tup{x}(\tup{t_2}) .
  \]
  From this inequality and the fact that $\tup{x} \in P_M(f)$ it
  follows that
  \begin{align}
  &\tup{x}(\tup{t_1} \lub \tup{t_2}) + \tup{x}(\tup{t_1} \glb \tup{t_2}) &\leq \notag \\
  &f(\tup{t_1} \lub \tup{t_2}) + f(\tup{t_1} \glb \tup{t_2}) &\leq \notag \\
  &\tup{x}(\tup{t_1}) + \tup{x}(\tup{t_2}) &\leq \notag \\
  &\tup{x}(\tup{t_1} \lub \tup{t_2}) + \tup{x}(\tup{t_1} \glb \tup{t_2}) . \notag
  \end{align}
  We conclude that $\tup{x}(\tup{t_1} \lub \tup{t_2}) +
  \tup{x}(\tup{t_1} \glb \tup{t_2}) = \tup{x}(\tup{t_1}) +
  \tup{x}(\tup{t_2})$. However, this leads to a contradiction:
  \begin{align}
  &\tup{x}(\tup{t_1}) + \tup{x}(\tup{t_2}) = \tup{x}(i, b) + \tup{x}(i, c) + \tup{x}(\tup{t_1}[i = 0_\alg{M}]) + \tup{x}(\tup{t_2}[i = 0_\alg{M}]) &\leq \notag \\
  &\tup{x}(i, b) + \tup{x}(i, c) + \tup{x}((\tup{t_1} \lub \tup{t_2})[i = 0_\alg{M}]) + \tup{x}(\tup{t_1} \glb \tup{t_2}) &< \notag \\
  &\tup{x}(\tup{t_1} \lub \tup{t_2}) + \tup{x}(\tup{t_1} \glb \tup{t_2}) \notag
  \end{align}
  So the coordinate $i$ cannot exist. \qed
\end{proof}

The lemma above can be strengthened if $|A| = 3$, in this case only 1
and 2 are possible. To see this, assume that $A = \{a_1, a_2, a_3\}$
and $\tup{x}(i, a_1) = \tup{x}(i, a_2) > \tup{x}(i, a_3)$. Let
$\tup{x'} = \tup{x} + \delta \tup{\chi}(i, a_1) - \delta \tup{\chi}(i,
a_2)$ (or $\tup{x'} = \tup{x} - \delta \tup{\chi}(i, a_1) + \delta
\tup{\chi}(i, a_2)$ if $\tup{c}(i, a_1) < \tup{c}(i, a_2)$). As
$\tup{x'} \not \in P_M(f)$ it follows that there is some
$\tup{x}$-tight tuple $\tup{t}$ with $\tup{t}(i) = a_1$ (or
$\tup{t}(i) = a_2$). We can then proceed as in the proof above.

We will need the following lemma
from~\cite{tight-bounds-2-approx-ilp2} in our proof of the
half-integrality of $P_M(f)$.
\begin{lemma} \label{diam:lem:col-sum-2}
Let $A$ be a $m \times n$ integral matrix satisfying
\[
\sum_{i = 1}^m |A_{ij}| \leq 2
\]
for $j \in [n]$. Then, for every square non-singular submatrix $S$ of
$A$, $S^{-1}$ is half-integral.
\end{lemma}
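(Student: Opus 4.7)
The plan is to argue by strong induction on the size $k$ of $S$. The base case $k=1$ is immediate: if $S = (a)$ is a non-singular $1 \times 1$ submatrix, then $|a| \in \{1,2\}$ and $S^{-1} = (1/a)$ is half-integer.

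For the inductive step, I distinguish cases based on the column structure of $S$. Non-singularity rules out identically-zero columns. If some column of $S$ contains exactly one nonzero entry $a$, I permute rows and columns to place it at position $(1,1)$, so that
\[
S = \begin{pmatrix} a & u^T \\ 0 & S' \end{pmatrix}
\]
where $S'$ is a non-singular $(k-1) \times (k-1)$ submatrix of $A$ still satisfying the column hypothesis. Block inversion gives
\[
S^{-1} = \begin{pmatrix} 1/a & -u^T S'^{-1}/a \\ 0 & S'^{-1} \end{pmatrix},
\]
and by induction $S'^{-1}$ is half-integer. When $|a|=1$, this expression is visibly half-integer. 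When $|a|=2$, the column of $S$ containing $a$ is already saturated, and the column-sum hypothesis on $A$ forces $|u_{j'}| + \|S'_{\cdot, j'}\|_1 \leq 2$ for each remaining column $j'$; this tight interaction between $u^T$ and $S'$ is precisely what is needed to make $u^T S'^{-1}$ an integer vector.

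The remaining case is when every column of $S$ has exactly two nonzero $\pm 1$ entries. I view $S$ as the signed incidence matrix of a multigraph $H$ on $k$ vertices (the rows) with $k$ edges (the columns), calling an edge \emph{balanced} or \emph{unbalanced} according to whether the two nonzero entries of its column have opposite or equal signs. Non-singularity of $S$ rules out tree components (a tree on $n$ vertices supports only $n-1$ edges), so every connected component of $H$ is unicyclic; moreover each such cycle must be unbalanced, else the corresponding block of $S$ would have determinant zero. A direct calculation on each block then gives $|\det S| = 2^c$ where $c$ is the number of connected components of $H$, and a parallel structural analysis of the $(k-1) \times (k-1)$ minors shows that each is divisible by $2^{c-1}$. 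Cramer's rule now produces the half-integrality of $S^{-1}$.

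The hardest step will be the careful verification of the $|a|=2$ sub-case together with the minor-divisibility claim in the signed-graph case. A cleaner route that avoids separately handling $\pm 2$ entries is to first expand simultaneously along the set $T$ of columns of $S$ containing a $\pm 2$ entry together with the corresponding rows: this multi-column Laplace expansion produces a factor $\pm 2^{|T|}$ and leaves behind a non-singular submatrix of $A$ with entries in $\{-1,0,1\}$, so the whole proof reduces to the signed-graph case, with the exponent of $2$ in $|\det S|$ becoming $|T|+c$.
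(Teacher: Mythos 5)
The paper does not prove this lemma; it is quoted verbatim from the reference \cite{tight-bounds-2-approx-ilp2}, so there is no internal proof to compare yours against. Judged on its own terms, your outline has the right structure: the induction on $k$, the case split (some column of $S$ has a single nonzero entry $a$ with $|a|\in\{1,2\}$, versus every column having exactly two $\pm1$ entries), the block-triangular inversion when a single-nonzero column exists, and the signed-incidence analysis in the remaining case are all sound, and you correctly observe that nonsingularity forces every component of the associated signed graph to be unicyclic with a ``negative'' cycle and hence $|\det S|=2^c$. The minor-divisibility claim you leave unverified is in fact easy to close: if the deleted row (vertex) and deleted column (edge) lie in different components the cofactor is zero, and if they lie in the same component the remaining $c-1$ intact component blocks contribute a factor $\pm 2^{c-1}$, so every cofactor is divisible by $2^{c-1}$ and half-integrality follows by Cramer.

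The genuine gap is the $|a|=2$ subcase, and the assertion that the column constraint ``is precisely what is needed to make $u^T S'^{-1}$ an integer vector'' does not close it: this is a nontrivial statement that requires its own argument, and the ``cleaner route'' you sketch (Laplace expansion along all $\pm 2$ columns) only controls $\det S$, not the individual cofactors needed for $S^{-1}$, so it does not bypass the issue. The gap is, however, fillable by a separate nested induction on the following abstract claim: if $\hat S = \bigl(\begin{smallmatrix} u^T \\ S' \end{smallmatrix}\bigr)$ is any $k\times(k-1)$ integer matrix whose columns have $\ell_1$-norm at most $2$ and $S'$ is nonsingular, then $u^T S'^{-1}$ is integral. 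Indeed, if $u\neq 0$ pick $j$ with $u_j\neq 0$; the column constraint and nonsingularity force $|u_j|=1$ and $S'_{\cdot,j}=\pm e_i$, so after permuting $S'=\bigl(\begin{smallmatrix} s & w^T \\ 0 & S'' \end{smallmatrix}\bigr)$ with $|s|=1$, and block inversion gives $u^T S'^{-1}=\bigl(u_1/s,\ (u'-(u_1/s)w)^T S''^{-1}\bigr)$; one checks that the new top row $\tilde u = u'-(u_1/s)w$ again satisfies $|\tilde u_\ell|+\|S''_{\cdot,\ell}\|_1\le 2$ by the triangle inequality, so the inductive hypothesis applies to $\bigl(\begin{smallmatrix}\tilde u^T\\ S''\end{smallmatrix}\bigr)$. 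Note that this auxiliary induction must be stated for arbitrary integer matrices satisfying the column bound, not just submatrices of $A$, since $\tilde u$ is a $\mathbb{Z}$-combination of rows of $A$ rather than a row. With that lemma in hand and the minor-divisibility argument above, your proof is complete.
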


By combining Lemma~\ref{diam:lem:strict-chain}, Lemma~\ref{diam:lem:vertex-unif}
 and Lemma~\ref{diam:lem:col-sum-2} we are able the obtain the following
 theorem which asserts the half-integrality of $P_M(f)$.
\begin{theorem} \label{diam:th:half}
Let $f : \alg{M}^n \rightarrow \mathbb{Z}$ be submodular. For any vertex
$\tup{x}$ of $P_M(f)$ and any $i \in [n]$ and $a \in A$ we have $\tup{x}(i,
a) \in \{ 1/2 \cdot k \mid k \in \mathbb{Z} \}$.
\end{theorem}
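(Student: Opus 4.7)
The plan is to represent the vertex $\tup{x}$ as the unique solution of a square linear system in a reparameterised variable set of size at most $2n$, and then to apply an integer unimodular row operation so that every column of the resulting matrix has absolute-value sum at most $2$, putting us in a position to invoke Lemma~\ref{diam:lem:col-sum-2}. First I would apply Lemma~\ref{diam:lem:gen-chain} to $\tup{x}$ to obtain a chain $\tup{t_1} \lattl \tup{t_2} \lattl \ldots \lattl \tup{t_m}$ in $\alg{M}^n$ and $|A|n$ linearly independent $\tup{x}$-tight vectors, each drawn from some $I(\tup{t_{j(i)}})$. Using Lemma~\ref{diam:lem:vertex-unif}, for each coordinate $i \in [n]$ I introduce a variable $y_i$ equal to the common value of the ``majority'' atoms at $i$; when $i$ falls in case~2 or case~3 of that lemma I additionally introduce an auxiliary variable $z_i > 0$ encoding the absolute deviation of the special atom $a'_i$, so that $\tup{x}(i, a'_i) = y_i + z_i$ in case~2 and $\tup{x}(i, a'_i) = y_i - z_i$ in case~3, while $\tup{x}(i, a) = y_i$ for every non-special $a$. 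Letting $d$ be the number of coordinates in case~2 or case~3, the total number of variables is $n + d$.

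Under this change of variables a tight vector $\tup{e} \in I(\tup{t})$ yields a linear equation whose coefficients I can read off directly: the $y_i$-coefficient is $0$, $1$, or $2$ according to whether $\tup{t}(i)$ is $0_\alg{M}$, an atom, or $1_\alg{M}$; in case~2 the $z_i$-coefficient is $1$ precisely when $\tup{t}(i) \in \{a'_i, 1_\alg{M}\}$ and $0$ otherwise; in case~3 the $z_i$-coefficient is $-1$ precisely when $\tup{t}(i) = a'_i$ and $0$ otherwise (in particular $0$ at $\tup{t}(i) = 1_\alg{M}$, since the tight pair must exclude $a'_i$ in order to realise the maximum $2y_i$). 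A crucial observation is that any two tight vectors in the same $I(\tup{t_j})$ induce identical equations in the $(y, z)$-variables, because the tight pair-choices at each $1_\alg{M}$-coordinate are interchangeable as far as $(y, z)$-coefficients are concerned. A brief rank argument---if the reparameterised matrix had a non-trivial kernel vector, lifting it back to $\mathbb{R}^{[n] \times A}$ would produce a non-zero vector in the kernel of the non-singular $|A|n \times |A|n$ matrix assembled from the $\tup{e}_i$---shows that the reparameterised matrix has rank $n + d$. I can therefore extract a non-singular $(n+d) \times (n+d)$ subsystem $A'' \tup{u} = \tup{b}''$ whose rows correspond to $n + d$ distinct tight tuples $\tup{t_{k_1}} \lattl \tup{t_{k_2}} \lattl \ldots \lattl \tup{t_{k_{n+d}}}$ taken in chain order, with $\tup{b}''$ integer-valued.

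The main obstacle is to apply the ``backward difference'' row operation $P$ (defined by $(P \tup{v})_1 = \tup{v}_1$ and $(P \tup{v})_r = \tup{v}_r - \tup{v}_{r-1}$ for $r \geq 2$, whence $|\det P| = 1$) and to verify that every column of $P A''$ has absolute-value sum at most $2$. For each $y_i$-column the sequence of coefficients $c_1, c_2, \ldots, c_{n+d}$ is monotone non-decreasing in $\{0, 1, 2\}$ since $\tup{t_{k_1}}(i) \lattleq \tup{t_{k_2}}(i) \lattleq \ldots$, so the backward-difference column sum telescopes to $|c_1| + (c_2 - c_1) + \ldots + (c_{n+d} - c_{n+d-1}) = c_{n+d} \leq 2$. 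For each $z_i$-column in case~2 the coefficient sequence is non-decreasing in $\{0, 1\}$, giving a column sum of at most $1$. The delicate case is a $z_i$-column in case~3, where the sequence can have the shape $0, \ldots, 0, -1, \ldots, -1, 0, \ldots, 0$: the $-1$-block begins at the first $r$ with $\tup{t_{k_r}}(i) = a'_i$ and ends (if at all) at the first $r$ with $\tup{t_{k_r}}(i) = 1_\alg{M}$. Since this sequence contains at most two level-changes, each of magnitude $1$, a direct calculation shows the backward-difference column sum is at most $2$.

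Once this column-sum bound is established, Lemma~\ref{diam:lem:col-sum-2} gives that $(P A'')^{-1}$ is half-integral, and since $f$ is integer-valued $P \tup{b}''$ is integer-valued, so $\tup{u} = (P A'')^{-1} (P \tup{b}'')$ is half-integral. Finally every $\tup{x}(i, a)$ equals one of $y_i$, $y_i + z_i$, or $y_i - z_i$, and is therefore half-integral, completing the proof.
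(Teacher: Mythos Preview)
Your proposal is correct and follows essentially the same approach as the paper: obtain a defining chain via Lemma~\ref{diam:lem:gen-chain}, reduce to at most $2n$ variables using the structure from Lemma~\ref{diam:lem:vertex-unif}, extract a square subsystem ordered along the chain, apply a backward-difference row operation, and invoke Lemma~\ref{diam:lem:col-sum-2}. The only difference is cosmetic: the paper reparameterises to the pair (special-atom value, common value) rather than your (common value, deviation), which is a unimodular change of basis, so your column-pattern analysis and the paper's $(0,\ldots,0,1,\ldots,1,2,\ldots,2)$ / $(0,\ldots,0,1,\ldots,1,0,\ldots,0)$ description are equivalent.
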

\begin{proof}
Let $\tup{x}$ be a vertex of $P_M(f)$. By Lemma~\ref{diam:lem:gen-chain}
there is a chain of $\tup{x}$-tight tuples $\tup{t_1} \lattl \ldots
\lattl \tup{t_m}$ and linearly independent vectors $\tup{e_1}, \ldots,
\tup{e_{|A|n}}$ such that for each $i \in [|A|n]$ there is some $j(i)
\in [m]$ such that $\tup{e_i} \in I(\tup{t_{j(i)}})$. We can also
assume that for $i \leq i'$ we have $j(i) \leq j(i')$. Let $E$ be the
matrix with rows $\tup{e_1}, \ldots, \tup{e_{|A|n}}$, then $\tup{x}$
is the unique solution to $E \tup{x} = \tup{b}$, where
\[
\tup{b} = \left( f\left(\tup{t_{j(1)}}\right), f\left(\tup{t_{j(2)}}\right), \ldots, f\left(\tup{t_{j(|A|n)}}\right) \right)^T .
\]
Let $A = \{a_1, a_2, \ldots, a_{|A|}\}$. By
Lemma~\ref{diam:lem:vertex-unif} we can assume, without loss of generality,
that and $\tup{x}(i, a_2) = \tup{x}(i, a_3) = \ldots = \tup{x}(i,
a_{|A|})$.  If $\tup{x}(i, a_1) = \tup{x}(i, a_2) = \ldots =
\tup{x}(i, a_{|A|})$ we can identify $\tup{x}(i, a_1), \ldots,
\tup{x}(i, a_{|A|})$ without changing the set of solutions to $E
\tup{x} = \tup{b}$, in the other case when $\tup{x}(i, a_1) > \min_{a
\in A} \tup{x}(i, a)$ or $\tup{x}(i, a_1) < \max_{a \in A} \tup{x}(i,
a)$ we can identify $\tup{x}(i, a_2), \ldots, \tup{x}(i, a_{|A|})$
without changing the set of solutions to $E \tup{x} = \tup{b}$. After
having identified these variables we get a system of linear equations,
$E' \tup{x'} = \tup{b'}$, which has a unique solution. Furthermore,
the solution to $E' \tup{x'} = \tup{b'}$ is half-integral if and only
if $E \tup{x} = \tup{b}$ has a half-integral solution (that is, if and
only if $\tup{x}$ is half-integral). Let $X \subseteq [n] \times \{1,
2\}$ such that for each $i \in [n]$, $(i, 1) \in X$ and $(i, 2) \in X$
if and only if $\tup{x}(i, a_1) > \min_{a \in A} \tup{x}(i, a)$ or
$\tup{x}(i, a_1) < \max_{a \in A} \tup{x}(i, a)$. We can describe the
rows, $\tup{e'_1}, \tup{e'_2}, \ldots, \tup{e'_{|A|n}} \in
\mathbb{R}^X$ of $E'$ as follows
\begin{itemize}
\item if $\tup{x}(i, a_1) = \tup{x}(i, a_2) = \ldots = \tup{x}(i,
a_{|A|})$, then $\tup{e'_j}(i, 1) = \sum_{a \in A} \tup{e_j}(i, a)$;

\item otherwise (if $\tup{x}(i, a_1) > \min_{a \in A} \tup{x}(i, a)$
or $\tup{x}(i, a_1) < \max_{a \in A} \tup{x}(i, a)$), then
$\tup{e'_j}(i, 1) = \tup{e_j}(i, a_1)$ and $\tup{e'_j}(i, 2) = \sum_{a
\in A, a \neq a_1} \tup{e'_j}(a)$.
\end{itemize}
As the solution to $E' \tup{x'} = \tup{b'}$ and $E \tup{x} = \tup{b}$
are equal, modulo the identification of some of the variables, there
is a subset $R = \{r_1, r_2, \ldots, r_{|X|}\} \subseteq [|A|n]$ with
$r_1 < r_2 < \ldots < r_{|X|}$ such that the matrix $E''$,
with rows $\{ \tup{e'_i} \mid i \in R \}$, has an
inverse. Furthermore, this inverse is half-integral (that is,
$E''^{-1}$ is half-integral) if and only if the solution to $E'
\tup{x'} = \tup{b'}$ is half-integral.

It is easy to see that the entries of $E''$ are contained in
$\{0,1,2\}$. Furthermore, if $\tup{c}$ is an arbitrary column of
$E''$, then it is of the form $(0, \ldots, 0, 1, \ldots, 1, 2, \ldots,
2)^T$ or $(0, \ldots, 0, 1, \ldots, 1, 0, \ldots, 0)^T$ (in these
patterns, $x, \ldots, x$ means that $x$ occurs zero or more times). It
follows that for each $(i, k) \in X$ we have
\begin{align}
\sum_{l = 1}^{|X|} |\tup{e'_{r_{l+1}}}(i, k) - \tup{e'_{r_l}}(i, k)| \leq 2 . \label{diam:eq:leq-two}
\end{align}
Following the proof of Theorem~1 in~\cite{fract-matroid-match} we now
define
\[
U =
\left(
\begin{array}{ccccc}
1      & 0      & \cdots &  \cdots & 0     \\
-1     & \ddots & \ddots &         & \vdots \\
0      & \ddots & \ddots &  \ddots & \vdots \\
\vdots & \ddots & \ddots &  \ddots & 0     \\
0      & \cdots & 0      & -1      & 1
\end{array}
\right) .
\]
We can then express the inverse of $E''$ as $(U E'')^{-1}
U$. By~\eqref{diam:eq:leq-two} and Lemma~\ref{diam:lem:col-sum-2} it follows
that $(U E'')^{-1}$ is half-integral and hence $E''^{-1}$ is
half-integral as well, which implies that $\tup{x}$ is
half-integral. \qed
\end{proof}

\subsection{Finding Augmentations} \label{diam:sec:improve}
Let $f : \alg{M}^n \rightarrow \mathbb{Z}$ be submodular.  In this section
we will show that there is an algorithm which decides if $\tup{0} \in
P_M(f)$ in time polynomial in $n$ and $\max(|f|)$. The strategy of the
algorithm is to use the equivalence between separation and
optimisation given by the Ellipsoid algorithm and solve the
optimisation problem for $P_M(f)$ instead. In the optimisation problem
we are given $\tup{c} \in \mathbb{Q}^{[n] \times A}$ and are supposed
to find $\max \langle \tup{c}, \tup{y} \rangle, \tup{y} \in
P_M(f)$. This problem is solved by iterating an augmentation step in
which we are in some vertex $\tup{x}$ of $P_M(f)$ and wish to find
some vertex $\tup{x'}$, adjacent to $\tup{x}$, such that $\langle
\tup{c}, \tup{x'} \rangle > \langle \tup{c}, \tup{x} \rangle$.

Let $\tup{c} \in \mathbb{Q}^{[n] \times A}$ and assume that we want to
solve $\max \langle \tup{c}, \tup{y} \rangle, \tup{y} \in P_M(f)$. Let
$T$ be the set of all $\tup{x}$-tight tuples and let $E \subseteq
\cup_{\tup{t} \in T} I(\tup{t})$ such that $\tup{e} \in E$ if and only
if there is some $\tup{t} \in T$ with $\tup{e} \in I(\tup{t})$ and
$\langle \tup{e}, \tup{x} \rangle = f(\tup{t})$. Finding a vector
$\tup{y} \in \mathbb{R}^{[n] \times A}$ such that there is some
$\delta > 0$ which satisfies $\langle \tup{c}, \tup{x} \rangle <
\langle \tup{c}, \tup{x} + \delta \tup{y} \rangle$ and $\tup{x} +
\delta \tup{y} \in P_M(f)$ or conclude that no such vector $\tup{y}$
exists is equivalent to solving the linear program
\ignore{
\begin{align}
\max \langle \tup{c}, \tup{z} \rangle \text{ subject to} \label{diam:eq:increasing} \\
\langle \tup{e}, \tup{z} \rangle \leq 0 \text{ for all } \tup{e} \in E \notag \\
\langle \tup{c}, \tup{z} \rangle \leq 1 . \notag
\end{align}
}
\begin{align} \label{diam:eq:increasing}
\max \langle \tup{c}, \tup{z} \rangle \text{ subject to }
\forall \tup{e} \in E: \langle \tup{e}, \tup{z} \rangle \leq 0 \text{ and }
\langle \tup{c}, \tup{z} \rangle \leq 1 .
\end{align}
(Here $\tup{z}$ contains the variables.) The optimum of this linear
program is $0$ if $\tup{x}$ is optimal and $1$ otherwise. The
separation problem for this polyhedron reduces to computing
\begin{align}
\max_{\tup{e} \in E} \langle \tup{e}, \tup{z} \rangle . \notag 
\end{align}
Define $f' : \alg{M}^n \rightarrow \mathbb{Z}$ as
$f'(\tup{t}) = (n^2 + 1) \cdot f(\tup{t}) + \rho(\tup{t})(2n - \rho(\tup{t}))$.
It is not hard to see that a minimiser of $f'$ is also a minimiser of
$f$. Furthermore, by Lemma~\ref{diam:lem:schrijver-ineq}, $f'$ is strictly
submodular. When minimising submodular functions we can thus assume
that the function is strictly submodular. By
Lemma~\ref{diam:lem:strict-chain} if $\tup{x}$ is a vertex of $P_M(f')$,
then $T$ (the $\tup{x}$-tight tuples) is a chain. This implies that
$|T| \leq 2n$.

\begin{lemma} \label{diam:lem:find-inc-dir}
If $f : \alg{M}^n \rightarrow \mathbb{Z}$ is strictly submodular and
$\tup{x}$ a vertex of $P_M(f)$, then the linear
program~\eqref{diam:eq:increasing} can be solved in time polynomial in $n$,
$\log \max(|f|)$ and the encoding length of $\tup{c}$. (Assuming that $T$ is
available to the algorithm.)
\end{lemma}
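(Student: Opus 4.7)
The plan is to apply the equivalence between strong separation and strong optimisation for well-described polyhedra (stated in Section~\ref{diam:sec:ellips}) to the LP~\eqref{diam:eq:increasing}, reducing the problem to the construction of a polynomial-time separation oracle. The feasible region is cut out by the single inequality $\langle \tup{c}, \tup{z}\rangle \leq 1$ together with the inequalities $\langle \tup{e}, \tup{z}\rangle \leq 0$ for $\tup{e} \in E$. Each $\tup{e}$ is a $0/1$ vector and the corresponding right-hand side is $0$, so every such inequality has constant bit-complexity; the single remaining inequality has encoding length linear in the encoding length of $\tup{c}$. Hence the polyhedron is well-described with facet-complexity polynomial in $n$, $\log \max(|f|)$ and the encoding length of $\tup{c}$, and it suffices to build a strong separation oracle running in time polynomial in these parameters.

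Next I would construct the separation oracle. Given $\tup{z} \in \mathbb{Q}^{[n] \times A}$, checking $\langle \tup{c}, \tup{z}\rangle \leq 1$ is immediate. For the remaining constraints, I need to compute $M(\tup{z}) = \max_{\tup{e} \in E} \langle \tup{e}, \tup{z} \rangle$ and either certify $M(\tup{z}) \leq 0$ or return an $\tup{e} \in E$ with $\langle \tup{e}, \tup{z}\rangle > 0$. Because $f$ is strictly submodular and $\tup{x}$ is a vertex of $P_M(f)$, Lemma~\ref{diam:lem:strict-chain} ensures that $T$ forms a chain in $\alg{M}^n$ and hence $|T| \leq 2n$; since $T$ is available to the algorithm by hypothesis, I simply iterate over its elements. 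For a fixed $\tup{t} \in T$, any $\tup{e} \in I(\tup{t})$ is specified by choosing, for each coordinate $i$ with $\tup{t}(i) = 1_\alg{M}$, an unordered pair of distinct atoms $\{a_i, b_i\}$; the requirement $\tup{e} \in E$ (equivalently $\langle \tup{e}, \tup{x}\rangle = f(\tup{t})$) decouples into the coordinate-wise condition that each such pair attains $\max_{a \neq a'} \tup{x}(i, a) + \tup{x}(i, a')$. Consequently $\max_{\tup{e} \in E \cap I(\tup{t})} \langle \tup{e}, \tup{z}\rangle$ is obtained by independently selecting, at each $1_\alg{M}$-coordinate, the $\tup{x}$-optimal pair that also maximises $\tup{z}(i, a) + \tup{z}(i, b)$, which is a brute-force search over at most $|A|^2$ pairs. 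Taking the maximum over $\tup{t} \in T$ yields $M(\tup{z})$ together with a witnessing $\tup{e}$ in time polynomial in $n$, $\log \max(|f|)$, and the encoding length of $\tup{z}$.

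Combining these two ingredients, the ellipsoid-based equivalence from Section~\ref{diam:sec:ellips} solves~\eqref{diam:eq:increasing} in time polynomial in $n$, $\log \max(|f|)$ and the encoding length of $\tup{c}$, which is the conclusion of the lemma. The main obstacle I foresee is the potentially exponential size of $E$: a single tuple $\tup{t}$ with many $1_\alg{M}$-entries induces combinatorially many inequalities, so an explicit enumeration of $E$ is hopeless. The assumption of strict submodularity is used precisely to overcome this: it bounds $|T|$ by $2n$ via Lemma~\ref{diam:lem:strict-chain}, and the coordinate-wise decoupling of the tightness condition on $I(\tup{t})$ then makes the per-tuple separation step cheap. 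The remaining work — verifying the facet-complexity bound and invoking the theorem from Section~\ref{diam:sec:ellips} — is routine.
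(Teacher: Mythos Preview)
Your proposal is correct and follows essentially the same approach as the paper: bound $|T|\leq 2n$ via Lemma~\ref{diam:lem:strict-chain}, build a polynomial-time separation oracle for~\eqref{diam:eq:increasing}, and invoke the separation--optimisation equivalence from Section~\ref{diam:sec:ellips}. In fact you spell out more than the paper does---the coordinate-wise decoupling argument for computing $\max_{\tup{e}\in E\cap I(\tup{t})}\langle\tup{e},\tup{z}\rangle$ is exactly what is needed to handle the potentially exponential size of $E$, which the paper only acknowledges parenthetically without detailing the fix.
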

\begin{proof}
  As $f$ is strictly submodular it follows from
  Lemma~\ref{diam:lem:strict-chain} that $|T| \leq 2n$. Hence, the
  separation problem for~\eqref{diam:eq:increasing} can be solved in
  polynomial time. By the equivalence of separation and optimisation
  given by the Ellipsoid algorithm it follows
  that~\eqref{diam:eq:increasing} can be solved in time polynomial in $n$,
  $\log \max(|f|)$ and the encoding length of $\tup{c}$. (Note that even
  though $|T| \leq 2n$, the number of inequalities in $E$ may be
  exponential in $n$. In particular the tuple $\tup{1}_{\alg{M}^n}$ can
  induce as many as ${|A| \choose 2}^n$ inequalities.) \qed
\end{proof}
By the algorithm in Lemma~\ref{diam:lem:find-inc-dir} we can find an
optimal solution $\tup{z}$ to~\eqref{diam:eq:increasing}. We can use this
algorithm to find adjacent vertices which are better (if there are
any). We also need to find the largest $\delta > 0$ such that $\tup{x}
+ \delta \tup{z} \in P_M(f)$. We construct an algorithm for this in
Lemma~\ref{diam:lem:find-better}, but first we need a lemma.
\begin{lemma} \label{diam:lem:no-large-gap}
Let $\tup{y}$ be an optimal solution to~\eqref{diam:eq:increasing} which is
a vertex such that $\langle \tup{c}, \tup{y} \rangle = 1$. Assume that
there are $\tup{t_1}, \tup{t_2} \in \alg{M}^n$, $\tup{t_1} \lattl
\tup{t_2}$, and $\tup{e_1} \in E \cap I(\tup{t_1}), \tup{e_2} \in E
\cap I(\tup{t_2})$ such that $\langle \tup{e_1}, \tup{y} \rangle =
f(\tup{t_1})$ and $\langle \tup{e_2}, \tup{y} \rangle =
f(\tup{t_2})$. Furthermore, assume that there is no $\tup{u} \in T$
such that $\tup{t_1} \lattl \tup{u} \lattl \tup{t_2}$ with any
$\tup{e} \in E \cap I(\tup{u})$ and $\langle \tup{e}, \tup{y} \rangle
= f(\tup{u})$. Then, there are no three distinct coordinates $i, j, k
\in [n]$ such that $\tup{t_1}(i) = \tup{t_1}(j) = \tup{t_1}(k) = 0_\alg{M}$
and $\tup{t_2}(i) = \tup{t_2}(j) = \tup{t_2}(k) = 1_\alg{M}$.
\end{lemma}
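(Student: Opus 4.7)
The plan is to contradict the fact that $\tup{y}$ is a vertex by exhibiting a one-dimensional family of perturbations that preserves every constraint active at $\tup{y}$. Suppose for contradiction that three distinct coordinates $i, j, k \in [n]$ satisfy $\tup{t_1}(\ell) = 0_\alg{M}$ and $\tup{t_2}(\ell) = 1_\alg{M}$ for every $\ell \in \{i,j,k\}$.

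First I would reduce to the case where $T$ is a chain. The surrounding subsection shows that replacing $f$ by the strictly submodular function $(n^2+1) f + \rho(2n - \rho)$ leaves the set of minimisers unchanged, so by Lemma~\ref{diam:lem:strict-chain} the set $T$ of $\tup{x}$-tight tuples is a chain. Consequently every $\tup{u} \in T$ is comparable with both $\tup{t_1}$ and $\tup{t_2}$, and the hypothesis excluding strictly intermediate active tuples means that any active constraint from $E$ comes from some $\tup{u} \in T$ with $\tup{u} \lattleq \tup{t_1}$ or $\tup{u} \lattgeq \tup{t_2}$.

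Next I would parametrise a candidate perturbation $\tup{\delta}$ by three scalars $\epsilon_i, \epsilon_j, \epsilon_k$ via $\tup{\delta}(\ell, a) = \epsilon_\ell$ for $\ell \in \{i, j, k\}$ and every $a \in A$, with $\tup{\delta} = 0$ elsewhere, and count the linear equations imposed on these scalars by the active constraints of~\eqref{diam:eq:increasing}. For every active $\tup{u} \in T$ with $\tup{u} \lattleq \tup{t_1}$, the tuple $\tup{u}$ is $0_\alg{M}$ at each of $i, j, k$, so the corresponding $\tup{e} \in I(\tup{u})$ has no weight on the support of $\tup{\delta}$ and imposes no condition. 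For every active $\tup{u} \in T$ with $\tup{u} \lattgeq \tup{t_2}$, the tuple $\tup{u}$ is $1_\alg{M}$ at each of $i, j, k$, and the corresponding $\tup{e}$ picks a pair of atoms at each such coordinate, so its action on $\tup{\delta}$ equals $2(\epsilon_i + \epsilon_j + \epsilon_k)$ regardless of which pairs are chosen; every such constraint collapses to the single equation $\epsilon_i + \epsilon_j + \epsilon_k = 0$. Finally, the active objective constraint $\langle \tup{c}, \tup{y} \rangle = 1$ contributes at most one further equation $\sum_{\ell \in \{i,j,k\}} C_\ell \, \epsilon_\ell = 0$, where $C_\ell = \sum_{a \in A} \tup{c}(\ell, a)$.

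With three unknowns and at most two independent equations, a nonzero $(\epsilon_i, \epsilon_j, \epsilon_k)$ exists; for sufficiently small $|t|$ the perturbed direction $\tup{y} + t \tup{\delta}$ preserves every active equality while keeping inactive inequalities strictly satisfied, contradicting the vertex property of $\tup{y}$. The main obstacle is justifying the chain structure of $T$: without the strict-submodularity reduction, tuples in $T$ incomparable with $\tup{t_1}$ and $\tup{t_2}$ could inject additional constraints at the three coordinates and pin $(\epsilon_i, \epsilon_j, \epsilon_k)$ to zero, defeating the perturbation argument.
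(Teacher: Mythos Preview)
Your argument is correct and is essentially the paper's own proof. Both exploit the same perturbation $\sum_{\ell\in\{i,j,k\}}\epsilon_\ell\,\tup{\chi_\ell}$ and the chain structure of $T$ (from strict submodularity) to conclude that every active $\tup{e}\in E$ either ignores all three coordinates or imposes only $\epsilon_i+\epsilon_j+\epsilon_k=0$; the paper simply builds this equation in from the start by writing the perturbation as $(\alpha+\beta)\tup{\chi_i}-\alpha\tup{\chi_j}-\beta\tup{\chi_k}$ and then observes that the resulting two-parameter family contradicts the (at most) one-dimensionality of the active face once the $\tup{c}$-constraint is removed, whereas you keep three parameters and count equations, including the one from $\tup{c}$, explicitly. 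Your remark that the chain hypothesis on $T$ is what makes the argument go through is exactly right and is used implicitly in the paper's proof.
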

\begin{proof}
Let $E' \subseteq E$ be the vectors which define tight inequalities
for $\tup{y}$. As $\tup{y}$ is a vertex and $\langle \tup{c}, \tup{y}
\rangle = 1$, it follows that the polyhedron $P = \{ \tup{z} \in
\mathbb{R}^{[n] \times A} \mid \langle \tup{e}, \tup{z} \rangle =
f(\tup{t}), \tup{e} \in E', \tup{e} \in I(\tup{t})\}$ is one
dimensional.

Let $\alpha, \beta \in \mathbb{R}$ be arbitrary and define $\tup{y'}
\in \mathbb{R}^{[n] \times A}$ by
\[
\tup{y'} = \tup{y} + (\alpha + \beta) \tup{\chi_i} - \alpha \tup{\chi_j} - \beta \tup{\chi_k} .
\]
From the non-existence of any $\tup{u} \in \alg{M}^n$ such that $\tup{t_1}
\lattl \tup{u} \lattl \tup{t_2}$ and $\tup{e} \in E \cap I(\tup{u})$,
$\langle \tup{e}, \tup{y} \rangle = f(\tup{u})$ it follows that
$\langle \tup{e}, \tup{y'} \rangle = f(\tup{t})$ for all $\tup{e} \in
E', \tup{e} \in I(\tup{t})$. However, this means that $\tup{y'} \in P$
and as $\alpha$ and $\beta$ where arbitrary it follows that $P$ is not
one-dimensional. This is a contradiction and the lemma follows.  \qed
\end{proof}

The following lemma is a crucial part of our pseudo-polynomial time
algorithm for SFM$(\alg{M})$. With the algorithm in this lemma we are able
to go from one vertex in $P_M(f)$ to a better one (if there is a
better one).

\begin{lemma} \label{diam:lem:find-better}
Let $f : \alg{M}^n \rightarrow \mathbb{Z}$ be a strictly submodular
function. Given $\tup{c} \in \mathbb{Q}^{[n] \times A}$, a vertex
$\tup{x}$ of $P_M(f)$, and the set of $\tup{x}$-tight tuples $T$,
there is an algorithm which is polynomial in $n$, $\log \max(|f|)$ and
the encoding length of $\tup{c}$ which finds a vertex $\tup{y} \in
P_M(f)$ such that $\langle \tup{c}, \tup{y} \rangle > \langle \tup{c},
\tup{x} \rangle$ or concludes that no such vertex exist. If $\tup{y}$
exists the set of $\tup{y}$-tight tuples can be computed within the
same time bound.
\end{lemma}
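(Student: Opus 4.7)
The plan is a primal--dual scheme: repeatedly compute an improving direction via Lemma~\ref{diam:lem:find-inc-dir}, perform a line search to the first boundary of $P_M(f)$ along that direction, and iterate until the current iterate is a vertex with a strictly improved objective.

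First, I form the set $E$ of inequalities induced by the tuples in $T$ and apply Lemma~\ref{diam:lem:find-inc-dir} to the linear program~\eqref{diam:eq:increasing}. If its optimum is $0$ then $\tup{x}$ already maximises $\langle \tup{c}, \cdot \rangle$ over $P_M(f)$ and the algorithm reports that no better vertex exists. Otherwise we obtain a vertex solution $\tup{z}$ with $\langle \tup{c}, \tup{z}\rangle = 1$ and $\langle \tup{e}, \tup{z}\rangle \leq 0$ for every $\tup{e} \in E$. Crucially, Lemma~\ref{diam:lem:no-large-gap} guarantees that the tuples of $T$ whose induced inequalities are tight at $\tup{z}$ form a sub-chain in which any two consecutive elements differ in at most two coordinates going from $0_\alg{M}$ to $1_\alg{M}$.

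Second, determine the largest $\delta^{*} > 0$ such that $\tup{x}' := \tup{x} + \delta^{*} \tup{z}$ still lies in $P_M(f)$. Only inequalities $\tup{e}$ with $\langle \tup{e}, \tup{z}\rangle > 0$ restrict $\delta^{*}$. Although $P_M(f)$ has exponentially many facets, membership of $\tup{x} + \delta \tup{z}$ in $P_M(f)$ for a trial $\delta$ can be verified in polynomial time by applying Lemma~\ref{diam:lem:poly-verify} iteratively along the sub-chain above (extended by $\tup{0}_{\alg{M}^n}$ and $\tup{1}_{\alg{M}^n}$ at the ends if needed); since $k = 2$ by Lemma~\ref{diam:lem:no-large-gap}, each invocation runs in $O(n^{2+c})$ time. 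Using the equivalence of separation and optimisation from Subsection~\ref{diam:sec:ellips}, this polynomial-time separation routine also yields $\delta^{*}$ itself together with a violated inequality that exposes one or more new $\tup{x}'$-tight tuples. I then update $\tup{x} \gets \tup{x}'$, adjoin the newly tight tuples to $T$, and close $T$ under $\glb$ and $\lub$ via Lemma~\ref{diam:lem:tight}; by Lemma~\ref{diam:lem:strict-chain} the enlarged $T$ remains a chain.

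Each iteration strictly enlarges the linear span of the tight inequalities at the current iterate, so after at most $n|A|$ rounds we reach a vertex $\tup{y}$ of $P_M(f)$ with $\langle \tup{c}, \tup{y}\rangle > \langle \tup{c}, \tup{x}\rangle$, and the final $T$ consists precisely of the $\tup{y}$-tight tuples. The main obstacle is the line search in Step~2: a tuple with $\ell$ occurrences of $1_\alg{M}$ induces $\binom{|A|}{2}^{\ell}$ inequalities, so neither $\delta^{*}$ nor the new tight tuples can be enumerated directly. Lemma~\ref{diam:lem:no-large-gap} is the essential ingredient that tames this blow-up by bounding the jump parameter in Lemma~\ref{diam:lem:poly-verify} to the constant $k = 2$ on the chains that actually arise during the search.
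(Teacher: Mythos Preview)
Your high-level plan---solve the auxiliary LP~\eqref{diam:eq:increasing} for an improving direction, then push along that direction until a new facet of $P_M(f)$ is hit---is the same as the paper's. The paper, however, executes the ``push'' differently and adds a step you are missing.

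\textbf{How the paper moves to the new vertex.} Rather than an explicit line search, the paper takes the sub-chain $T'\subseteq T$ of tuples whose inequalities are tight for the direction $\tup{z}$ and invokes Lemma~\ref{diam:lem:opt-tight-chain} once: it maximises $\langle \tup{c},\cdot\rangle$ over $P_M(f)$ intersected with the equalities coming from $T'$. Since $\tup{z}$ is a vertex of~\eqref{diam:eq:increasing}, these equalities already have rank $n|A|-1$, so this restricted polyhedron is one-dimensional and its optimum is a vertex $\tup{y}$ of $P_M(f)$ with strictly larger objective. No iteration is needed. Your ``at most $n|A|$ rounds'' argument is both unnecessary and unjustified as stated: when you move from $\tup{x}$ to $\tup{x}'$ the tuples in $T\setminus T'$ become slack, so ``adjoining the newly tight tuples to $T$'' is wrong, and the linear span of genuinely tight inequalities need not increase monotonically.

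\textbf{The real gap: recovering the full chain at $\tup{y}$.} The lemma promises the \emph{complete} set of $\tup{y}$-tight tuples, because the outer algorithm (Theorem~\ref{diam:th:sep-0}) needs it to set up the next LP~\eqref{diam:eq:increasing}. Your line search (or the ellipsoid separation you appeal to) yields at most one new binding tuple---the one that blocks $\delta$---so your final $T'$ plus that one tuple is in general a strict subset of the $\tup{y}$-tight chain. The paper addresses this explicitly: between any two consecutive elements $\tup{a}\lattl\tup{b}$ of $T'$ (where by Lemma~\ref{diam:lem:no-large-gap} at most two coordinates jump from $0_\alg{M}$ to $1_\alg{M}$), it fixes those two coordinates to all pairs $(p,q)\in\alg{M}^2$ and minimises the submodular set function $f_{p,q}(\cdot)=f(\cdot)-\tup{y}(\cdot)$ over the remaining interval, which is a product of two-element chains. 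Standard submodular set-function minimisation then enumerates every $\tup{y}$-tight tuple between $\tup{a}$ and $\tup{b}$. Without this step your output $T$ is incomplete and the next call to Lemma~\ref{diam:lem:find-inc-dir} would be working with the wrong constraint set.
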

\begin{proof}
If there is such a vertex $\tup{y}$, then the value of the optimum of
the linear program~\eqref{diam:eq:increasing} is $1$. By
Lemma~\ref{diam:lem:find-inc-dir} this optimum $\tup{y'}$ can be found in
polynomial time. The set of tuples $T' \subseteq T$ which are
$\tup{y'}$-tight can be found in polynomial time (as $|T| \leq
2n$). Furthermore, by Lemma~\ref{diam:lem:no-large-gap} the gap between two
successive tuples in $T'$ is not too large. It follows from
Lemma~\ref{diam:lem:opt-tight-chain} that we can find a vertex $\tup{y}$ of
$P_M(f)$ such that $\langle \tup{c}, \tup{y} \rangle > \langle
\tup{c}, \tup{x} \rangle$ in polynomial time. 

It remains to find the rest of the $\tup{y'}$-tight tuples within the
stated time bound. By Lemma~\ref{diam:lem:no-large-gap} for any
consecutive tuples $\tup{a}, \tup{b}$ in $T'$ there are at most two
distinct coordinates $i,j \in [n]$ such that $\tup{a}(i) = \tup{a}(j)
= 0_\alg{M}$ and $\tup{b}(i) = \tup{b}(j) = 1_\alg{M}$. We will show
that for every such pair $\tup{a}, \tup{b}$ in $T'$ we can find the
$\tup{y}$-tight tuples $\tup{t}$ which satisfies $\tup{a} \lattl
\tup{t} \lattl \tup{b}$. To do this, for each $p, q \in M$, we find
the minimisers to the submodular function $f_{p,q}$ defined as
$f_{p,q}(\tup{x}) = f(\tup{x}[i = p, j = q]) -
\tup{y}(\tup{x}[i = p, j = q])$ over the set $X = \{\tup{x} \in M^n \mid \tup{a} \lattl \tup{x}
\lattl \tup{b}\}$. As $f$ is submodular and $\tup{y}$ is supermodular it follows that $f'$ is submodular.
To minimise $f_{p,q}$ over $X$ we can minimise it over at most $n^2
|A|^2$ intervals defined by $\{ \tup{x} \in M^n \mid \tup{a^*}
\lattleq \tup{x} \lattleq \tup{b_*} \}$ where $\tup{a} \prec
\tup{a^*}$ and $\tup{b_*} \prec \tup{b}$ (there are at most $n |A|$ choices for
$\tup{a^*}$ and at most $n |A|$ choices for $\tup{b_*}$).

Note that each of these intervals is a product of the two element
lattice and hence this minimisation can be done with the known
algorithms for minimising submodular set functions. We can use this
method to find all minimisers of $f_{p,q}$ in the interval we are
interested in. (When we have found one minimiser $\tup{m}$ we
iteratively minimise $f_{p,q}$ over the sets $\{ \tup{x} \in M^n \mid
\tup{a}
\lattl \tup{x} \lattl \tup{m} \}$ and $\{ \tup{x} \in M^n \mid \tup{m} \lattl
\tup{x} \lattl \tup{b} \}$.) As the $\tup{y}$-tight tuples is a chain in $M^n$
there are only a polynomial number of $\tup{y}$-tight tuples and hence
this step of the algorithm runs in polynomial time. Hence the set of
all $\tup{y}$-tight tuples can be found within the stated time
bound. \qed
\end{proof}

We are now finally ready to show the existence of a pseudo-polynomial
time separation algorithm for $P_M(f)$.

\begin{theorem} \label{diam:th:sep-0}
Let $f : \alg{M}^n \rightarrow \mathbb{Z}$ be submodular. It is possible to
decide if $\tup{0}$ is contained in $P_M(f)$ or not in time polynomial
in $n$ and $\max(|f|)$.
\end{theorem}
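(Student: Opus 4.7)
The plan is to reduce the question ``is $\tup{0} \in P_M(f)$?'' to the strong optimization problem on $P_M(f)$ via the equivalence of strong separation and strong optimization from Section~\ref{diam:sec:ellips}, and then to solve that optimization by a primal-dual augmentation scheme based on Lemma~\ref{diam:lem:find-better}.

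As a preliminary step, I would strictify $f$. Set $f'(\tup{t}) = (n^2+1) f(\tup{t}) + \rho(\tup{t})(2n-\rho(\tup{t}))$; by Lemma~\ref{diam:lem:schrijver-ineq}, $f'$ is strictly submodular. Since the perturbation is a nonnegative integer bounded by $n^2$ while the multiplier $n^2+1$ strictly exceeds $n^2$, one checks that $\min f \geq 0$ iff $\min f' \geq 0$; equivalently, $\tup{0} \in P_M(f)$ iff $\tup{0} \in P_M(f')$. Because $\max(|f'|) = O(n^2 \max(|f|))$, solving the problem for $f'$ in pseudo-polynomial time also solves it for $f$ in pseudo-polynomial time. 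So I may assume $f$ itself is strictly submodular.

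For the strong optimization problem on $P_M(f)$, given any $\tup{c} \in \mathbb{Q}^{[n] \times A}$, I would first compute an initial vertex $\tup{x_0}$ of $P_M(f)$ by the greedy algorithm of Lemma~\ref{diam:lem:greedy-Bf}, whose set of tight tuples is read off from the greedy construction and is in fact a chain by Lemma~\ref{diam:lem:strict-chain} (exploiting strict submodularity). I would then iteratively invoke Lemma~\ref{diam:lem:find-better} on $\tup{c}$, the current vertex $\tup{x_i}$, and its chain of tight tuples, thereby either producing a strictly better adjacent vertex $\tup{x_{i+1}}$ together with its chain of tight tuples, or certifying optimality of $\tup{x_i}$. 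Each augmentation runs in time polynomial in $n$, $\log \max(|f|)$, and the encoding length of $\tup{c}$.

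The main obstacle is bounding the number of augmentation steps. For this I would appeal to half-integrality (Theorem~\ref{diam:th:half}): every vertex of $P_M(f)$ has coordinates in $\tfrac12 \mathbb{Z}$, so successive objective values $\langle \tup{c}, \tup{x_i}\rangle$ differ by at least $1/(2D)$ where $D$ is the common denominator of the entries of $\tup{c}$, while the total range of $\langle \tup{c}, \cdot\rangle$ over vertices is controlled polynomially by $\max(|f|)$ and $\|\tup{c}\|_\infty$. This yields a polynomial bound on the number of iterations in $n$, $\max(|f|)$, and the encoding length of $\tup{c}$. Feeding the resulting pseudo-polynomial strong optimization algorithm into the strong separation/optimization equivalence of Section~\ref{diam:sec:ellips} produces a pseudo-polynomial strong separation algorithm for $P_M(f)$; evaluating it on the vector $\tup{0}$ gives the desired decision.
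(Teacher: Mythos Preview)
Your proposal is correct and follows essentially the same route as the paper: strictify via $f'(\tup{t}) = (n^2+1) f(\tup{t}) + \rho(\tup{t})(2n-\rho(\tup{t}))$, start from the greedy vertex of Lemma~\ref{diam:lem:greedy-Bf}, repeatedly augment using Lemma~\ref{diam:lem:find-better}, bound the number of steps through half-integrality (Theorem~\ref{diam:th:half}), and close the loop with the separation/optimisation equivalence of Section~\ref{diam:sec:ellips}. The only cosmetic difference is that you reduce the membership question $\tup{0}\in P_M(f)$ to $\tup{0}\in P_M(f')$ up front, whereas the paper strictifies inside the optimisation subroutine; both are fine.
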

\begin{proof}
By the equivalence of separation and optimisation given by the
Ellipsoid algorithm there is an algorithm which decides if $\tup{0}$
is contained in $P_M(f)$ or not which makes use of an optimisation
oracle for $P_M(f)$. The number of calls to the optimisation oracle is
bounded by a polynomial in $n$ and $\log \max(|f|)$, furthermore the objective
function given to the optimisation oracle is given by a vector
$\tup{c} \in \mathbb{Q}^{[n] \times A}$ such that the encoding length
of $\tup{c}$ is bounded by a polynomial in $n$ and $\log \max(|f|)$.

To prove the lemma it is therefore sufficient to construct an
algorithm such that given $\tup{c} \in \mathbb{Z}^{[n] \times A}$
(there is no loss of generality in assuming that $\tup{c}$ is
integral, a simple scaling of $\tup{c}$ achieves this) it solves $\max
\langle \tup{y}, \tup{c} \rangle, \tup{y} \in P_M(f)$ in time
polynomial in $n$, $\max(|f|)$ and the size of the encoding of
$\tup{c}$. Let
$f'(\tup{t}) = (n^2+1) \cdot f(\tup{t}) + \rho(\tup{t})(2n - \rho(\tup{t}))$.
By Lemma~\ref{diam:lem:schrijver-ineq} $f'$ is strictly
submodular. Furthermore, it is easy to see that any minimiser of $f'$
is also a minimiser of $f$. By Lemma~\ref{diam:lem:strict-chain} each
vertex $\tup{x}$ of $P_M(f')$ is ``characterised'' of a chain of
$\tup{x}$-tight tuples.

The algorithm consists of a number of iterations. In iteration $j$ a
current vertex $\tup{x_j}$ of $P_M(f')$ is computed together with its
associated chain $\tup{C_j}$ of $\tup{x_j}$-tight tuples. The initial
vertex $\tup{x_0}$ and initial chain $\tup{C_0}$ is computed by the
greedy algorithm from Lemma~\ref{diam:lem:greedy-Bf}.

In iteration $j$, either $\tup{x_j}$ is the optimum or there is some
other vertex $\tup{x_{j+1}}$ such that $\langle \tup{x_{j+1}}, \tup{c}
\rangle > \langle \tup{x}, \tup{c} \rangle$. To find such an
$\tup{x_{j+1}}$ or conclude that no such vertex exists we use the
algorithm from Lemma~\ref{diam:lem:find-better}. In the case when
$\tup{x_{j+1}}$ exists we also get the chain $\tup{C_{j+1}}$ of
$\tup{x_{j+1}}$-tight from the algorithm in
Lemma~\ref{diam:lem:find-better}.

By Theorem~\ref{diam:th:half} the vertices of $P_M(f)$ are
half-integral. This implies that $\langle \tup{x_{j+1}}, \tup{c}
\rangle \geq \langle \tup{x_j}, \tup{c} \rangle + 1/2$. So the
algorithm is polynomial if we can prove that the optimum value is not
too far from the starting point $\tup{x}_0$. That is, the difference
between $\langle \tup{c}, \tup{x}_0 \rangle$ and $\max \langle
\tup{c}, \tup{y} \rangle, \tup{y} \in P_M(f)$ should be bounded by a
polynomial in $n$, $\max(|f|)$ and the encoding length of
$\tup{c}$. Note that as the size of the encoding of $\tup{c}$ is
bounded by a polynomial in $n$ and $\log \max(|f|)$ it follows that
$\max_{i \in [n], a \in A} |\tup{c}(i, a)|$ is bounded by a polynomial
in $n$ and $\max(|f|)$. Furthermore, as $\tup{x}_0$ is obtained by the
greedy algorithm it follows that for any $i \in [n], a \in A$ we have
$-2 \max(|f|) \leq \tup{x}_0(i, a)$. We now obtain the inequality
\begin{align}
&-2 \max(|f|) \cdot n|A| \left( \max_{i \in [n], a \in A} |\tup{c}(i, a)| \right) \leq
\langle \tup{c}, \tup{x}_0 \rangle \leq
\langle \tup{c}, \tup{y} \rangle \leq \notag \\
&\max(|f|) \cdot n|A| \left( \max_{i \in [n], a \in A} |\tup{c}(i, a)| \right) . \notag
\end{align}
From this inequality and the fact that $\max_{i \in [n], a \in A}
|\tup{c}(i, a)|$ is bounded by a polynomial in $n$ and $\max(|f|)$ it
follows that the difference between $\langle \tup{c}, \tup{x}_0
\rangle$ and $\langle \tup{c}, \tup{y} \rangle$ is bounded by a
polynomial in $n$ and $\max(|f|)$. As the objective function increases
by at least $1/2$ in each iteration this implies that the number of
iterations is bounded by a polynomial in $n$ and $\max(|f|)$. \qed
\end{proof}

From Theorem~\ref{diam:th:sep-0} we now get our desired result, a
pseudo-polynomial time algorithm for minimising submodular functions
over diamonds. The proof of this final step was given in
Section~\ref{diam:sec:find-min}.

\ignore{
\subsection{Partitioned Matrices}
In~\cite{minimax-dm-decomp} the lattices involved in the submodular
function can be described as follows: There are positive integers $n$
and $k$ such that the lattice is isomorphic to
\[
\alg{X}_1 \times \alg{X}_2 \times \cdots \times \alg{X}_n
\]
where for each $i \in [n]$ the lattice $\alg{X}_i$ is either the
two element chain $\alg{C}_2$ or $\alg{M}_k$.
}

\section{Conclusions and Open Problems} \label{diam:sec:concl}

The most obvious open problem is to find a polynomial time algorithm,
as opposed to a pseudo-polynomial time algorithm established in this
paper, for minimising submodular functions over diamonds. One possible
approach may be to use some kind of scaling technique see,
e.g.,~\cite{cap-scale-submodflow,submod-min-P-iwata}.
The pseudo-polynomial algorithm as it is presented here is very
inefficient: it consists of a nested application of the Ellipsoid
algorithm. Usually, one layer of the Ellipsoid algorithm is considered
to be too inefficient to be used in practise. It would clearly be
desirable to have a simpler and more efficient minimisation algorithm.

\bibliography{bibtexdb}
\end{document}